\documentclass[12pt, draftclsnofoot, onecolumn]{IEEEtran}
\ifCLASSINFOpdf
\else
\fi
\hyphenation{op-tical net-works semi-conduc-tor}

\usepackage{lipsum}
\usepackage{amsmath,amssymb,amsthm,mathrsfs,amsfonts,dsfont}
\usepackage{subfigure}
\usepackage{graphicx,dblfloatfix}
\usepackage{algorithm,algorithmic}
\usepackage{tabularx}
\usepackage{adjustbox}
\usepackage{color}
\usepackage{colortbl}
\usepackage[noadjust]{cite}

\newcommand\norm[1]{\left\lVert#1\right\rVert}
\usepackage{xcolor}
\newtheorem{theorem}{Theorem}
\newtheorem{proposition}{Proposition}
\newtheorem{corollary}{Corollary}
\newtheorem{remark}{Remark}
\newtheorem{fact}{Fact}

\begin{document}
\title{Optimal SIC Ordering and Power Allocation in Downlink Multi-Cell NOMA Systems}
\author{Sepehr Rezvani, \IEEEmembership{Student~Member,~IEEE}, Eduard A. Jorswieck, \IEEEmembership{Fellow,~IEEE}, Nader Mokari, \IEEEmembership{Senior Member,~IEEE}, and Mohammad R. Javan, \IEEEmembership{Senior Member,~IEEE} 
	\thanks{S. Rezvani and E. A. Jorswieck are with the Department of Information Theory and Communication Systems, Technische Universität Braunschweig, Braunschweig, Germany (e-mails: \{rezvani, jorswieck\}@ifn.ing.tu-bs.de).}
	\thanks{N. Mokari is with the Department of Electrical and
		Computer Engineering, Tarbiat Modares University, Tehran, Iran (e-mail: nader.mokari@modares.ac.ir).}
	\thanks{M. R. Javan is with the Department of Electrical and Robotics Engineering, Shahrood University of Technology, Shahrood, Iran (e-mail: javan@shahroodut.ac.ir).}
}
	
	
\maketitle

\begin{abstract}
	In this work, we propose a globally optimal joint successive interference cancellation (SIC) ordering and power allocation (JSPA) algorithm for the sum-rate maximization problem in downlink multi-cell non-orthogonal multiple access (NOMA) systems. The proposed algorithm is based on the exploration of base stations (BSs) power consumption, and closed-form of optimal powers obtained for each cell. Although the optimal JSPA algorithm scales well with larger number of users, it is still exponential in the number of cells. For any suboptimal decoding order, we propose a low-complexity near-optimal joint rate and power allocation (JRPA) strategy in which the complete rate region of users is exploited. Furthermore, we design a near-optimal semi-centralized JSPA framework for a two-tier heterogeneous network such that it scales well with larger number of small-BSs and users. Numerical results show that JRPA highly outperforms the case that the users are enforced to achieve their channel capacity by imposing the well-known SIC necessary condition on power allocation. Moreover, the proposed semi-centralized JSPA framework significantly outperforms the fully distributed framework, where all the BSs operate in their maximum power budget. Therefore, the centralized JRPA and semi-centralized JSPA algorithms with near-optimal performances are good choices for larger number of cells and users.
\end{abstract}
\begin{IEEEkeywords}
	Multi-cell, NOMA, successive interference cancellation, optimal SIC ordering, power allocation.
\end{IEEEkeywords}

\IEEEpeerreviewmaketitle

\section{Introduction}
\allowdisplaybreaks
\subsection{Concept of NOMA}
\IEEEPARstart{I}{t} is shown that the channel capacity of degraded broadcast channels (BCs) can be achieved by performing linear superposition coding (SC) in power domain at the transmitter side combined with coherent multiuser detection algorithms, such as successive interference cancellation (SIC), at the receivers side \cite{10.5555/1146355,NIFbook,1683918}. The SC-SIC technique, also called power-domain non-orthogonal multiple access (NOMA)\footnote{In this work, the term 'NOMA' is referred to power-domain NOMA.}, is considered as a candidate radio access technique for the fifth generation (5G) wireless networks and beyond \cite{6692652,7973146,7676258}. In information theory, the main purpose of SC-SIC is reducing the prohibitive complexity of dirty paper coding (DPC) to attain the capacity region of degraded BCs.

\subsection{Single-Cell NOMA}
It is well-known that the downlink single-input single-output (SISO) Gaussian BCs are degraded \cite{10.5555/1146355,NIFbook,1683918}. Hence, NOMA with channel-to-noise ratio (CNR)-based decoding order is capacity-achieving in SISO Gaussian BCs meaning that any rate region is a subset of the rate region of NOMA with CNR-based decoding order \cite{10.5555/1146355,NIFbook,1683918}. The superiority of single-cell NOMA over single-cell OMA is also well-known in information theory \cite{8010756,7676258,7272042}.

In single-cell NOMA, it is verified that the power allocation optimization is necessary to achieve the maximum users sum-rate \cite{7676258,7973146,7272042,9154358,8823873,8010756}. From the optimization perspective, the optimal (CNR-based) decoding order is independent from the power allocation, so is robust and straightforward. Moreover, it is shown that the Hessian of sum-rate function under the CNR-based decoding order is negative definite, so the sum-rate function is strictly concave in powers \cite{7557079,8352643}. In this way, the sum-rate maximization problem in downlink single-cell NOMA is convex\footnote{The feasible region of the general power allocation problem in single-cell NOMA under the minimum rate constraints is affine, so is convex \cite{8352643,7557079}.}. There are some research studies on finding the closed-form of optimal powers for sum-rate maximization problem of $M$-user single-cell NOMA \cite{7557079,8352643}, or its special $2$-user case \cite{7982784}. However, the analysis in \cite{7557079,8352643,7982784} is based on some additional constraints on power allocation among multiplexed users to guarantee successful SIC. In information theory, it is proved that in SISO Gaussian BCs which are degraded, each user can achieve its channel capacity after SIC independent from power allocation among multiplexed users \cite{NIFbook}. Hence, the SIC at users in SISO Gaussian BCs can be successfully proceed independent from power allocation among multiplexed users \cite{8823873}. To this end, the Karush-Kuhn-Tucker (KKT) optimality conditions analysis as well as finding the closed-form of optimal powers for the general $M$-user single-cell NOMA are still open problems.

\subsection{Multi-Cell NOMA with Single-Cell Processing}
Unfortunately, the capacity-achieving schemes are still unknown in downlink multi-cell networks, since the capacity region of the two-user downlink interference channel is still unknown in general \cite{10.5555/1146355,NIFbook,8010756}. For the case that the information of each user is available at only one transmitter, i.e., multi-cell networks with single-cell processing, the signals from neighboring cells are fully treated as additive white Gaussian noise (AWGN) at each user, also called inter-cell interference (ICI). Inspired by the degradation of SISO Gaussian BCs, SC-SIC achieves the channel capacity of each cell in multi-cell systems with single-cell processing \cite{10.5555/1146355,NIFbook}. In this system, 'ICI+AWGN' can be viewed as equivalent noise power at the users. Therefore, NOMA with channel-to-interference-plus-noise ratio (CINR)-based decoding order is capacity-achieving in each cell \cite{10.5555/1146355,NIFbook}. In contrast to single-cell NOMA, finding optimal decoding order in multi-cell NOMA is challenging, because of the impact of ICI on the CINR of multiplexed users. The ICI at users in each cell is affected by the total power consumption of each neighboring (interfering) base station (BS). Therefore, the optimal SIC decoding order in each cell depends on the optimal power control among interfering cells. In this way, the optimal joint SIC ordering and power allocation (JSPA) problem in downlink multi-cell NOMA consists of two components as 1) Power control among interfering cells; 2) optimal joint SIC ordering and power allocation among multiplexed users within each cell. Note that the optimal SIC ordering in case 2 follows the CINR of users known from information theory \cite{10.5555/1146355,NIFbook}.

It is verified that under the CINR-based decoding order, the ICI in the centralized total power minimization problem verifies the basic properties of the standard interference function \cite{7964738,8848606,8114362,8478339}. Hence, the optimal JSPA can be obtained by using the well-known Yates power control framework \cite{414651}. In other words, the globally optimal JSPA for the total power minimization problem in multi-cell NOMA can be found in an iterative distributed manner with a fast convergence speed. However, the ICI in the centralized sum-rate maximization problem does not verify the basic properties of the standard interference function. As a result, Yates power control framework does not guarantee any global optimality for the sum-rate maximization problem \cite{8114362}. 
It is shown that the sum-rate function in multi-cell NOMA is nonconcave in powers, due to existing ICI, which makes the centralized sum-rate maximization problem nonconvex and strongly NP-hard \cite{7812683,7954630,8114362}. The best existing solution for solving the power allocation problem is monotonic optimization which is still approximately exponential in the number of users \cite{7812683,7862919}. The JSPA needs to examine the monotonic-based power allocation ${(M!)}^B$ times, where ${(M!)}^B$ is the total number of possible decoding orders in $B$ cells each having $M$ users. Therefore, the joint optimization via the monotonic-based power allocation \cite{7862919} is basically impractical even at lower number of BSs and users. 
Generally speaking, the JSPA sum-rate maximization problem in multi-cell NOMA falls into the following two schemes:
\begin{enumerate}
	\item \textbf{Optimal SIC ordering (JSPA scheme):} In this scheme, we dynamically update the decoding order of users during power control among the interfering cells. This scheme achieves the globally optimal JSPA, however, it is not yet addressed in the literature.
	\item \textbf{Fixed SIC ordering:} In this scheme, we consider a fixed decoding order (for instance, CNR-based decoding order) during power control among the cells, and find optimal power allocation. This scheme is suboptimal if and only if the fixed decoding order is not guaranteed to be optimal at any feasible region of power allocation. For any fixed, and possibly suboptimal, decoding order, there are two schemes in which we can guarantee successful SIC at users:
	\\
	1) \textit{\textbf{Fixed-rate-region power allocation (FRPA)}}: In this scheme, we limit the power consumption of interfering cells such that the CINR of user with higher decoding order remains larger than that of user with lower decoding order. In other words, we impose an additional constraint on the power consumption of interfering cells, known as SIC necessary condition \cite{7812683,7954630}, such that the fixed decoding order remains optimal. In this scheme, each user achieves its channel capacity for decoding its own signal after SIC. The monotonic optimization is used in \cite{7812683} to find the optimal power allocation in the FRPA scheme. Due to the exponential complexity of monotonic optimization in the number of BSs and users, a sequential programming is proposed in \cite{7812683,7954630} to achieve a near-optimal solution with polynomial time complexity.
	\\
	2) \textit{\textbf{Joint rate and power allocation (JRPA)}}: The so-called SIC necessary condition \cite{7812683,7954630} is indeed not a necessary condition for successful SIC when the decoding order is not guaranteed to be optimal.
	The SIC necessary condition in FRPA imposes additional limitations on the power consumption of neighboring cells to only guarantee that each user achieves its channel capacity for decoding its desired signal. Therefore, FRPA may degrade the rate region of users depending on how much the so-called SIC necessary condition limits the feasible region of BSs power consumption. In JRPA, the SIC necessary condition is removed, while users are allowed to operate in lower rates than their channel capacity to ensure successful SIC. Therefore, we consider the complete rate region of users, known from information theory \cite{10.5555/1146355,NIFbook}. It is obvious that the rate region of FRPA is a subset of the rate region of JRPA.
	The power allocation problem in the JRPA scheme is not yet addressed in the literature. Therefore, finding an efficient algorithm for the JRPA scheme is still an open problem in multi-cell NOMA.
\end{enumerate}
The work in \cite{8114362} considers the FRPA scheme with CNR-based decoding order, while SIC necessary condition is not applied, which may lead to an infeasible solution. Moreover, the ICI management with efficient power control among the cells is not addressed in \cite{8809359}, since it is assumed that all the BSs operate in their maximum power budget. Subsequently, the challenges of finding optimal decoding order in multi-cell NOMA are not addressed in \cite{8809359}. We show that the power allocation problem in \cite{8809359} falls into our distributed resource allocation framework, where all the BSs operate in their maximum power budgets. In addition to the above mentioned open problems in multi-cell NOMA, it is important to know the performance gap between optimal and CNR-based decoding orders, since adopting the CINR-based decoding order is more challenging. In other words, the performance gap between JSPA and FRPA/JRPA is still unknown. The latter performance gap depends on the suboptimality of CNR-based decoding order. Therefore, another interesting topic is \textit{how to ensure that the CNR-based decoding order is optimal independent from ICI?} Moreover, for the CNR-based decoding order, it is still unknown \textit{how much is the performance of considering the complete rate region instead of imposing SIC necessary condition on power allocation?} Actually, \textit{how much is the performance gap between FRPA and JRPA?}

\subsection{Our Contributions}
In this work, we address the above mentioned open problems for the sum-rate maximization problem in the downlink of general single-carrier multi-cell NOMA system with single-cell processing. Our main contributions are presented as follows:
\begin{itemize}
	\item We prove that at the optimal point, only the NOMA cluster-head user, which has the highest CINR (thus highest decoding order), deserves additional power, while the users with lower decoding order get power to only maintain their individual minimal rate demands. Subsequently, we obtain the closed-form expression of optimal powers among multiplexed users within each cell for any given feasible power consumption of BSs. The optimal value is also formulated in closed form. The analysis shows that the optimal power coefficients among multiplexed users are highly insensitive to the users channel gain, specifically for the high signal-to-interference-plus-noise ratio (SINR) regions.
	\item We propose a globally optimal JSPA algorithm for maximizing users sum-rate under the individual minimum rate demand of users. This algorithm utilizes both the exploration of BSs power consumption and closed-form of optimal powers for single-cell NOMA, so is a good benchmark for larger number of users while small number of BSs.
	\item The optimal JSPA algorithm is modified to find the globally optimal solution of the FRPA problem, resulting in dramatically reducing the complexity of monotonic optimization which is still exponential in the number of users. 
	\item We propose a suboptimal sequential programming-based JRPA algorithm which considers the complete achievable rate region of users when the (potentially suboptimal) decoding order is prefixed. The convergence speed and performance of this algorithm for different initialization methods are investigated.
	\item We propose a sufficient condition which determines the optimality of CNR-based decoding order for a user pair independent from ICI. The applications of this theorem are listed in the paper.
	\item We propose a semi-centralized framework for a two-tier heterogeneous network (HetNet) consisting of multiple femto BSs (FBSs) underlying a single macro BS (MBS), where the ICI from MBS to the FBS users is managed efficiently.
\end{itemize}
The complete source code of the simulations including a user guide is available in \cite{sourcecode}.

\subsection{Paper Organization}
The rest of this paper is presented as follows. Section \ref{sec system model} describes the general multi-cell NOMA system, and formulates the main optimization problem for maximizing users sum-rate. The solution algorithms are presented in Section \ref{sec solution}. Numerical results are provided in Section \ref{sec simulation}. Our conclusions and future research directions are presented in Section \ref{sec conclusion}.

\section{General Downlink Multi-Cell NOMA System}\label{sec system model}
Consider the downlink transmission of a multiuser single-carrier multi-cell NOMA system. The set of single-antenna BSs, and users served by BS $b$ are indicated by $\mathcal{B}$ and $\mathcal{U}_b$, respectively. According to the NOMA protocol, the users associated to the same transmitter form a NOMA cluster. The signal of users associated to other transmitters (known as ICI) is fully treated as AWGN at users within the NOMA cluster. Hence, we consider a single NOMA cluster at each cell $b$ \cite{7964738} including $|\mathcal{U}_b|$ users, where $|.|$ is the cardinality of a finite set. The term $k \to i$ indicates that user $k$ has a higher decoding order than user $i$ such that user $k$ is scheduled (and enforced) to decode and cancel the whole signal of user $i$, while the whole signal of user $k$ is treated as noise at user $i$. For instance, assume that each cell $b$ serves $M$ users, i.e., $|\mathcal{U}_b|=M$. Generally, there are $M!$ possible decoding orders for users within the $M$-order NOMA cluster. Without loss of generality, let $k \to i$ if $k > i$, i.e., the SIC of NOMA in each cell $b$ follows $M \to M-1 \to \dots \to 1$. As shown in Fig. 1 in \cite{7964738}, in this SIC decoding order, the signal of each user $i$ will be decoded prior to user $k>i$. In general, each user $i$ first decodes and cancels the signal of users $1, \dots, i-1$. Then, it decodes its desired signal such that the signal of users $i+1, \dots, M$ is treated as noise at user $i$ \cite{7676258}. In this regard, in each cell, the NOMA cluster-head user $M$ does not experience any intra-NOMA interference (INI).

Let $s_{b,i} \sim \mathcal{CN}(0,1)$ be the desired signal of user $i \in \mathcal{U}_b$.
Denoted by $\lambda_{b,i,k} \in \{0,1\}$, the binary decoding decision indicator, where $\lambda_{b,i,k}=1$ if user $k \in \mathcal{U}_b$ is scheduled to decode (and cancel when $k\neq i$) $s_{b,i}$, and otherwise, $\lambda_{b,i,k}=0$.
Since for each user pair within a NOMA cluster, only one user can decode and cancel the signal of other user, we have $\lambda_{b,i,k}+\lambda_{b,k,i}=1,~\forall b \in \mathcal{B},~ i,k \in \mathcal{U}_b,~k \neq i$. Moreover, the signal of each user should be decoded at that user, meaning that $\lambda_{b,i,i}=1,~\forall b \in \mathcal{B},~ i \in \mathcal{U}_b$. Due to the transitive nature of SIC ordering, if $\lambda_{b,i,k}=1$ and $\lambda_{b,k,h}=1$, then we should have $\lambda_{b,i,h}=1$. In other words, we have $\lambda_{b,i,k} \lambda_{b,k,h} \leq \lambda_{b,i,h},~\forall b \in \mathcal{B}, i,k,h \in \mathcal{U}_b$.
According to the SIC protocol, $s_{b,i}$ should be decoded at user $i \in \mathcal{U}_b$ as well as all the users in\footnote{The term $\Phi_{b,i}$ is the set of users in cell $b$ with higher decoding orders than user $i \in \mathcal{U}_b$.} $\Phi_{b,i}=\{k \in \mathcal{U}_b \setminus \{i\} \mid \lambda_{b,i,k}=1\}$. Therefore, in the SIC of NOMA, each user $i$ first decodes and subtracts each signal $s_{b,j},~\forall j \in \mathcal{U}_b \setminus \{ \{i\} \cup \Phi_{b,i}\}$, then it decodes its desired signal $s_{b,i}$ such that the signal of users in $\Phi_{b,i}$ is treated as noise (called INI). According to the SIC protocol, the signal of user $i$ will be decoded prior to user $k$ if $|\Phi_{b,i}| > |\Phi_{b,k}|$. According to the above, the SIC decoding order among users can be determined by finding $\lambda_{b,i,k}$. Actually, $\lambda_{b,i,k}=1,~\forall b \in \mathcal{B},~ i,k \in \mathcal{U}_b,~k \neq i$ is equivalent to $k \to i$ in cell $b$.
Similar to the related works, we assume that the perfect channel state information (CSI) of all the users is available at the scheduler. The channel gain from BS $j \in \mathcal{B}$ to user $i \in \mathcal{U}_b$ is denoted by $g_{j,b,i}$. The allocated power from BS $b$ to user $i \in \mathcal{U}_b$ is denoted by $p_{b,i}$.
After performing (perfect) SIC at each user $l \in \mathcal{U}_b \setminus \Phi_{b,i}$, the received signal of user $i \in \mathcal{U}_b$ at user $k \in \{i\} \cup \Phi_{b,i}$ is given by
\begin{equation}\label{rec signal}
	y_{b,i,k} = \underbrace{\sqrt{p_{b,i}} g_{b,b,k} s_{b,i}}_\text{intended signal} + \underbrace{\sum\limits_{j \in \Phi_{b,i}} \sqrt{p_{b,j}} g_{b,b,k} s_{b,j}}_\text{INI} + \underbrace{\sum\limits_{\hfill j \in \mathcal{B} \hfill\atop  j \neq b} \sum\limits_{l \in \mathcal{U}_j} \sqrt{p_{j,l}} g_{j,b,k} s_{j,l}}_\text{ICI} + N_{b,k},
\end{equation}
where the first and second terms are the received desired signal and INI of user $i \in \mathcal{U}_b$ at user $k \in \mathcal{U}_b$, respectively. The third term represents the ICI at user $k \in \mathcal{U}_b$. Moreover, $N_{b,k}$ is the AWGN at user $k \in \mathcal{U}_b$ with zero mean and variance $\sigma^2_{b,k}$. Without loss of generality, assume that $|s_{b,i}|=1,~\forall b \in \mathcal{B},~i \in \mathcal{U}_b$, and $h_{j,b,k}=|g_{j,b,k}|^2$ \cite{7964738,8861078,8114362}. According to \eqref{rec signal}, the SINR of user $k \in \Phi_{b,i}$ for decoding and canceling the signal of user $i \in \mathcal{U}_b$ is $\gamma_{b,i,k} = \frac{p_{b,i} h_{b,b,k}}{\sum\limits_{j \in \Phi_{b,i}} p_{b,j} h_{b,b,k} + (I_{b,k} + \sigma^2_{b,k})}$, where $\sum\limits_{j \in \Phi_{b,i}} p_{b,j} h_{b,b,k}$ is the INI power of user $i \in \mathcal{U}_b$ received at user $k \in \mathcal{U}_b$, and $I_{b,k}=\sum\limits_{\hfill j \in \mathcal{B} \hfill\atop  j \neq b} \sum\limits_{l \in \mathcal{U}_j} p_{j,l} h_{j,b,k}$ is the received ICI power at user $k \in \mathcal{U}_b$. For the case that $k=i$, $\gamma_{b,i,i}$ denotes the SINR of user $i \in \mathcal{U}_b$ for decoding its desired signal $s_{b,i}$ after SIC. For convenience, let $h_{b,b,i} \equiv h_{b,i}$, and $\gamma_{b,i,i} \equiv \gamma_{b,i}$. Furthermore, let us denote the matrix of all the decoding indicators by $\boldsymbol{\lambda}=[\lambda_{b,i,k}],\forall b \in \mathcal{B}, i,k \in \mathcal{U}_b$, in which $\boldsymbol{\lambda}_{b}=[\lambda_{b,i,k}],\forall i,k \in \mathcal{U}_b$, represents the decoding indicator matrix of users in $\mathcal{U}_b$. Moreover, $\boldsymbol{p}=[p_{b,i}],\forall b \in \mathcal{B}, i \in \mathcal{U}_b$, is the power allocation matrix of all the users, in which $\boldsymbol{p}_{b}$ is the $b$-th row of this matrix indicating the power allocation vector of users in cell $b$.
According to the Shannon's capacity formula, the achievable spectral efficiency of user $i \in \mathcal{U}_b$ after successful SIC is obtained by \cite{10.5555/1146355,NIFbook}
\begin{equation}\label{useri Mcell}
	R_{b,i} (\boldsymbol{p},\boldsymbol{\lambda}_b) = \min\limits_{k \in \{i\} \cup \Phi_{b,i}}\left\{\log_2\left( 1+ \frac{p_{b,i} h_{b,k}}{\sum\limits_{j \in \Phi_{b,i}} p_{b,j} h_{b,k} + (I_{b,k} (\boldsymbol{p}_{-b}) + \sigma^2_{b,k})} \right)\right\}.
\end{equation}
Note that the set $\Phi_{b,i}$ depends on $\boldsymbol{\lambda}_b$, although it is not explicitly shown in \eqref{useri Mcell}.
The centralized total spectral efficiency maximization problem is formulated by
\begin{subequations}\label{centg problem}
	\begin{align}\label{obf centg problem}
		\text{JSPA}:~\max_{ \boldsymbol{p} \geq 0,~\boldsymbol{\lambda} \in \{0,1\} }\hspace{.0 cm}	
		~~ & \sum\limits_{b \in \mathcal{B}} \sum\limits_{i \in \mathcal{U}_b} R_{b,i}(\boldsymbol{p},\boldsymbol{\lambda}_b)
		\\
		\text{s.t.}~~~~~~\label{Constraint max power}
		& \sum\limits_{i \in \mathcal{U}_b} p_{b,i} \leq P^{\text{max}}_b,~\forall b \in \mathcal{B},
		\\
		\label{Constraint QoS}
		& R_{b,i} (\boldsymbol{p},\boldsymbol{\lambda}_b) \geq R^{\text{min}}_{b,i},~\forall b \in \mathcal{B},~i \in \mathcal{U}_b,
		\\
		\label{Constraint uniqe order}
		& \lambda_{b,i,k} + \lambda_{b,k,i}=1,~\forall b \in \mathcal{B},~i,k \in \mathcal{U}_b,~k \neq i,
		\\
		\label{Constraint transitive order}
		& \lambda_{b,i,k} \lambda_{b,k,h} \leq \lambda_{b,i,h},~\forall b \in \mathcal{B}, i,k,h \in \mathcal{U}_b,
		\\
		\label{Constraint own decoding}
		& \lambda_{b,i,i} =1,~\forall b \in \mathcal{B}, i,\in \mathcal{U}_b,
	\end{align}
\end{subequations}
where \eqref{Constraint max power} and \eqref{Constraint QoS} are the per-BS maximum power and per-user minimum rate constraints, respectively.
$P^\text{max}_b$ denotes the maximum power of BS $b$, and $R^{\text{min}}_{b,i}$ is the minimum spectral efficiency demand of user $i \in \mathcal{U}_b$. The rest of the constraints are described above.

\section{Solution Algorithms for the Sum-Rate Maximization Problem}\label{sec solution}
In this section, we propose globally optimal and suboptimal solutions for the main problem \eqref{centg problem} under the centralized/decentralized resource management frameworks. Finally, we compare the computational complexity of the proposed resource allocation algorithms.

\subsection{Centralized Resource Management Framework}\label{subsection centralized}
In this subsection, we first propose a globally optimal JSPA algorithm for problem \eqref{centg problem}. Then, by considering a fixed SIC decoding order in \eqref{centg problem}, we propose two suboptimal rate adoption and power allocation algorithms.

\subsubsection{Globally Optimal JSPA Algorithm}\label{subsection optimal}
Problem \eqref{centg problem} can be classified as a mixed-integer non-linear
programming (MINLP) problem. 
\begin{corollary}\label{corol NPhard}
	Problem \eqref{centg problem} is strongly NP-hard for $|\mathcal{U}_b|=1,~\forall b \in \mathcal{B}$.
\end{corollary}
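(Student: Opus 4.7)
The plan is to observe that when $|\mathcal{U}_b|=1$ for every $b\in\mathcal{B}$, problem \eqref{centg problem} collapses to the classical sum-rate maximization problem over a single-input single-output Gaussian interference channel (GIC), whose strong NP-hardness is already established in the literature. The task therefore reduces to (i) checking that the SIC-ordering variables and their combinatorial constraints become trivial, and (ii) invoking (or lightly adapting) the existing GIC hardness result.

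First I would eliminate the SIC variables. With a single user per cell, say $\mathcal{U}_b=\{1\}$, the only decoding indicator is $\lambda_{b,1,1}$, which is forced to $1$ by \eqref{Constraint own decoding}. The uniqueness constraint \eqref{Constraint uniqe order} and the transitivity constraint \eqref{Constraint transitive order} are vacuous since there are no distinct user pairs within a cell, so $\boldsymbol{\lambda}$ drops out as a decision variable and $\Phi_{b,1}=\emptyset$. Writing $p_b \equiv p_{b,1}$ and using \eqref{useri Mcell}, the rate of the unique user in cell $b$ becomes
\begin{equation*}
R_b(\boldsymbol{p}) = \log_2\!\left(1 + \frac{p_b\, h_{b,b,1}}{\sigma^2_{b,1} + \sum_{j \neq b} p_j\, h_{j,b,1}}\right),
\end{equation*}
and \eqref{centg problem} reduces to maximizing $\sum_{b\in\mathcal{B}} R_b(\boldsymbol{p})$ subject only to the box constraints $0\leq p_b\leq P^{\max}_b$ and the per-link demands $R_b(\boldsymbol{p})\geq R^{\min}_{b,1}$.

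Next I would invoke the complexity of this reduced GIC problem. Choosing $R^{\min}_{b,1}=0$ removes \eqref{Constraint QoS} and leaves precisely the continuous sum-rate maximization problem over a GIC with individual transmit-power budgets, which Luo and Zhang (``Dynamic spectrum management: Complexity and duality'') showed to be strongly NP-hard through a polynomial-time reduction from MAX-CUT and related combinatorial problems. Since this is merely a restriction of \eqref{centg problem} to the single-user-per-cell regime, any polynomial-time algorithm for \eqref{centg problem} would solve the Luo--Zhang instances in polynomial time, and the corollary follows.

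The main, and essentially only, obstacle is bookkeeping: one must verify that the channel gains $h_{j,b,1}$, noise variances $\sigma^2_{b,1}$, and budgets $P^{\max}_b$ produced by the Luo--Zhang reduction have polynomially bounded bit-length, so that the resulting hardness is indeed \emph{strong} (robust to unary encoding of the numerical data) rather than merely weak. This verification is standard and is addressed explicitly in the cited reference, so no additional technical effort is required beyond citing it.
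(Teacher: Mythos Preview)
Your proposal is correct and follows essentially the same route as the paper: specialize to one user per cell so that the SIC-ordering variables vanish, recognize the resulting problem as the classical GIC sum-rate maximization with per-link power budgets, and invoke the Luo--Zhang strong NP-hardness result. The paper is slightly terser (it does not explicitly discuss eliminating $\boldsymbol{\lambda}$ or the bit-length bookkeeping), but the argument is the same.
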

\begin{proof}
	Consider a single user within each cell, i.e., $|\mathcal{U}_b|=1,~\forall b \in \mathcal{B}$. In this way, \eqref{useri Mcell} can be rewritten as
	$R_{b,1} (\boldsymbol{p}) = \log_2\left( 1+ \frac{p_{b,1} h_{b,1}}{I_{b,1} (\boldsymbol{p}_{-b}) + \sigma^2_{b,1}} \right),~\forall b \in \mathcal{B}$.
	Moreover, problem \eqref{centg problem} is rewritten as the following power allocation problem
	\begin{equation*}\label{centg NP1}
		\max_{\boldsymbol{p} \geq 0}~\sum\limits_{b \in \mathcal{B}} R_{b,1}(\boldsymbol{p})~~~~~~\text{s.t.}~\eqref{Constraint max power},~\eqref{Constraint QoS}.
	\end{equation*}
	The latter problem is equivalent to problem $(P_1)'$ in \cite{4453890}. In Theorem 1 in \cite{4453890}, it is proved that the sum-rate maximization problem $(P_1)'$ is strongly NP-hard, due to the nonconcavity of sum-rate function in $\boldsymbol{p}$. In other words, for $|\mathcal{U}_b|=1,~\forall b \in \mathcal{B}$, multi-cell NOMA with $B$ cells is identical to the point-to-point interference-limited network with $B$ transmitter-receiver pairs.
\end{proof}
Corollary \ref{corol NPhard} can be generalized to $|\mathcal{U}_b|>1$ for some $b \in \mathcal{B}$. However, the complete proof is more difficult, and we consider it as future work.
Let us define the power consumption coefficient of BS $b$ as $\alpha_b \in [0,1]$ such that $\sum\limits_{i \in \mathcal{U}_b} p_{b,i} = \alpha_b P^{\text{max}}_b$. 
The received ICI power at user $i \in \mathcal{U}_b$ can be reformulated by $I_{b,i} (\boldsymbol{\alpha}_{-b})=\sum\limits_{\hfill j \in \mathcal{B} \hfill\atop  j \neq b} \alpha_j P^{\text{max}}_j h_{j,b,i}$. Let $\boldsymbol{\alpha}=[\alpha_b]_{1 \times B}$ be the BSs power consumption coefficient vector. Although the joint optimization of $\boldsymbol{\alpha}$, $\boldsymbol{p}$, and $\boldsymbol{\lambda}$ is very challenging, for any given $\boldsymbol{\alpha}$, the optimal $(\boldsymbol{\lambda}^*_b,\boldsymbol{p}^*_b)$ can be obtained in closed form as follows:
\begin{corollary}\label{corollary optorder}
	In multi-cell NOMA, the optimal decoding order for the user pair $i,k \in \mathcal{U}_b$ is $k \to i$ if and only if
	$\tilde{h}_{b,i}(\boldsymbol{\alpha}_{-b})\leq \tilde{h}_{b,k}(\boldsymbol{\alpha}_{-b})$, where $\tilde{h}_{b,l}(\boldsymbol{\alpha}_{-b})=\frac{h_{b,l}}{I_{b,l}(\boldsymbol{\alpha}_{-b}) + \sigma^2_{b,l}},~l=i,k$ \cite{10.5555/1146355,NIFbook}. Therefore, $\lambda^*_{b,i,k}=1$ if and only if $\tilde{h}_{b,i}(\boldsymbol{\alpha}_{-b})\leq \tilde{h}_{b,k}(\boldsymbol{\alpha}_{-b})$.
\end{corollary}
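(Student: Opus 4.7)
The plan is to fix the cross-cell power consumptions $\boldsymbol{\alpha}_{-b}$, reduce cell $b$ to an equivalent single-cell SISO Gaussian broadcast channel, and then invoke the classical CNR-ordering rule for degraded Gaussian BCs (the references \cite{10.5555/1146355,NIFbook} already cited in the paper), with the effective CNR being $\tilde h_{b,l}(\boldsymbol{\alpha}_{-b})$.

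First, I would observe that $I_{b,l}(\boldsymbol{\alpha}_{-b})=\sum_{j\neq b}\alpha_j P^{\max}_j h_{j,b,l}$ depends only on the total power consumptions of the interfering cells and is unaffected by how BS $b$ distributes its own power across $\mathcal{U}_b$. Normalizing each SINR in \eqref{useri Mcell} by dividing numerator and denominator by $I_{b,k}+\sigma^2_{b,k}$ rewrites $\gamma_{b,i,k}$ as $p_{b,i}\tilde h_{b,k}/\bigl((\sum_{j\in\Phi_{b,i}}p_{b,j})\tilde h_{b,k}+1\bigr)$. This is precisely the SINR in a SISO Gaussian BC with per-user gain $\tilde h_{b,l}$ and unit noise, which is degraded; its capacity region is achieved by SC-SIC whose decoding order is non-decreasing in $\tilde h_{b,l}$, so that a pair $(i,k)$ is optimally ordered with $k\to i$ if and only if $\tilde h_{b,i}\le \tilde h_{b,k}$.

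To make the pairwise claim self-contained at the level of \eqref{centg problem}, I would run an adjacent-swap argument against any optimal $(\boldsymbol{p}^*,\boldsymbol{\lambda}^*_b)$. For a pair $(i,k)$ occupying adjacent slots of the SIC order, toggling $(\lambda_{b,i,k},\lambda_{b,k,i})$ preserves transitivity \eqref{Constraint transitive order} by adjacency and leaves every $\Phi_{b,l}$ for $l\notin\{i,k\}$ unchanged as a set, hence $R_{b,l}$ is unchanged for such $l$. For the two affected rates, the map $x\mapsto px/(qx+1)$ is strictly increasing on $x>0$, so the inner minimum in \eqref{useri Mcell} is attained at the user with the smallest $\tilde h$ in $\{l\}\cup\Phi_{b,l}$. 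Expanding $R_{b,i}+R_{b,k}$ under each of the two candidate orderings then shows, by a short telescoping of the logarithms, that the difference (CINR-consistent minus reversed) has the same sign as $p_{b,k}(\tilde h_{b,k}-\tilde h_{b,i})$ and is therefore strictly positive exactly when $\tilde h_{b,i}<\tilde h_{b,k}$. Iterating bubble-sort-style over inverted adjacent pairs drives the order to the CINR ranking and establishes both directions of the iff.

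The main obstacle I foresee is ensuring that the swap does not violate the per-user minimum-rate constraint \eqref{Constraint QoS}, because the individual rates $R_{b,i}$ and $R_{b,k}$ (as opposed to their sum) can shift in opposite directions across the two orderings. This is handled either by invoking the stronger capacity-region containment statement --- the rate-tuples achievable with any SIC order lie inside the region achievable with the CINR-consistent order, over all intra-cell allocations --- or, after the swap, by a small re-balancing within $\{p_{b,i},p_{b,k}\}$ that keeps their sum fixed so that \eqref{Constraint max power} is untouched while \eqref{Constraint QoS} is restored. Both routes rely only on the degraded-BC structure already invoked above, and once they are in place the corollary is a restatement of the classical ordering rule with $h$ replaced by the effective CINR $\tilde h$.
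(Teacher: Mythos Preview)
Your proposal is correct and mirrors the paper's own argument almost step for step: the paper also normalizes by $I_{b,k}+\sigma^2_{b,k}$ to reduce cell $b$ to a single-cell NOMA with equivalent gains $\tilde h_{b,l}$, invokes the degradedness of SISO Gaussian BCs, and then runs exactly the adjacent-swap computation you sketch, arriving at the numerator--denominator difference $p_{b,i+1}(\tilde h_{b,i+1}-\tilde h_{b,i})$ to conclude. Your extra care about the minimum-rate constraint \eqref{Constraint QoS} under the swap is a point the paper's appendix does not explicitly address (it compares sum-rates at the same $\boldsymbol{p}_b$ without discussing feasibility), and your proposed fix via capacity-region containment is the right way to close that gap.
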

The additional notes on the optimal decoding order can be found in Appendix \ref{appendix corol optorder}.
\begin{corollary}\label{corol optorder 3}
	For any given $\boldsymbol{\alpha}_{-b}$ and subsequently $\tilde{h}_{b,i}(\boldsymbol{\alpha}_{-b})\leq \tilde{h}_{b,k}(\boldsymbol{\alpha}_{-b})$, at the optimal $\boldsymbol{\lambda}^*_b$, we have
	\begin{equation}\label{SIC nec cond}
		\log_2\left( 1+ \frac{p_{b,i} h_{b,i}}{\sum\limits_{j \in \Phi^*_{b,i}} p_{b,j} h_{b,i} + (I_{b,i} + \sigma^2_{b,i})} \right) \leq \log_2\left( 1+ \frac{p_{b,i} h_{b,k}}{\sum\limits_{j \in \Phi^*_{b,i}} p_{b,j} h_{b,k} + (I_{b,k} + \sigma^2_{b,k})} \right).
	\end{equation}
	According to \eqref{useri Mcell} and \eqref{SIC nec cond}, at the optimal $\boldsymbol{\lambda}^*_b$, we have $R_{b,i}(\boldsymbol{p},\boldsymbol{\lambda}^*_b) = \log_2\left( 1+ \frac{p_{b,i} h_{b,i}}{\sum\limits_{j \in \Phi^*_{b,i}} p_{b,j} h_{b,i} + (I_{b,i} + \sigma^2_{b,i})} \right)$.
\end{corollary}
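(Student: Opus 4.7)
The plan is to prove the two statements of the corollary in sequence: first the pointwise inequality \eqref{SIC nec cond} between the decoding rate at user $i$ itself and the decoding rate at an arbitrary $k \in \Phi^*_{b,i}$, and then the consequence that the $\min$ in \eqref{useri Mcell} collapses to the $k=i$ term at $\boldsymbol{\lambda}^*_b$. The natural starting point is Corollary \ref{corollary optorder}: the hypothesis $\tilde h_{b,i}(\boldsymbol{\alpha}_{-b}) \le \tilde h_{b,k}(\boldsymbol{\alpha}_{-b})$ is precisely what certifies $\lambda^*_{b,i,k}=1$, i.e., $k \in \Phi^*_{b,i}$. So the set structure is aligned with the ordering on the effective CINRs, and the inequality to prove concerns only the terms indexed by $\{i\}\cup \Phi^*_{b,i}$.

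First I would rewrite the CINR hypothesis in its reciprocal form, namely $\frac{I_{b,i}+\sigma^2_{b,i}}{h_{b,i}} \ge \frac{I_{b,k}+\sigma^2_{b,k}}{h_{b,k}}$, and normalize each SINR in \eqref{SIC nec cond} by dividing numerator and denominator by $h_{b,i}$ and $h_{b,k}$ respectively. Both sides then take the common form
\begin{equation*}
\gamma_{b,i,l} \;=\; \frac{p_{b,i}}{\sum_{j \in \Phi^*_{b,i}} p_{b,j} \;+\; \frac{I_{b,l}+\sigma^2_{b,l}}{h_{b,l}}}, \qquad l \in \{i,k\},
\end{equation*}
in which the $\Phi^*_{b,i}$ summation is identical for both values of $l$ and the only difference is the additive effective noise term. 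The reciprocal CINR inequality immediately gives a larger denominator for $l=i$, hence $\gamma_{b,i,i}\le \gamma_{b,i,k}$; monotonicity of $\log_2(1+\cdot)$ yields \eqref{SIC nec cond}.

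For the second claim I would observe that the argument above did not use anything specific about $k$ beyond $k\in \Phi^*_{b,i}$ (equivalently, $\tilde h_{b,k} \ge \tilde h_{b,i}$ by Corollary \ref{corollary optorder}). Therefore the same chain shows
\begin{equation*}
\log_2(1+\gamma_{b,i,i}) \;\le\; \log_2(1+\gamma_{b,i,k}), \qquad \forall\, k \in \Phi^*_{b,i},
\end{equation*}
so the minimum in \eqref{useri Mcell} over $k\in\{i\}\cup\Phi^*_{b,i}$ is attained at $k=i$. Substituting this into \eqref{useri Mcell} yields the closed-form expression $R_{b,i}(\boldsymbol{p},\boldsymbol{\lambda}^*_b)=\log_2\!\bigl(1+\gamma_{b,i,i}\bigr)$ claimed at the end of Corollary \ref{corol optorder 3}.

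The proof is essentially an algebraic one-liner once the setup is right, so I do not expect a real obstacle. The only place one must be a little careful is ensuring the two sides of \eqref{SIC nec cond} share the same INI sum $\sum_{j\in\Phi^*_{b,i}}p_{b,j}$ after the normalization; this relies on the fact that $\Phi^*_{b,i}$ is determined by $\boldsymbol{\lambda}^*_b$ alone and is independent of the user $l$ at which $s_{b,i}$ is being decoded, which is baked into the SIC model of Section \ref{sec system model} and into the way Corollary \ref{corollary optorder} is invoked. Edge cases ($p_{b,i}=0$ or ties $\tilde h_{b,i}=\tilde h_{b,k}$) reduce to equality throughout and cause no difficulty.
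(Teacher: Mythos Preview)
Your proof is correct and follows the same route the paper takes implicitly: the paper states Corollary~\ref{corol optorder 3} without a separate proof, treating it as an immediate consequence of Corollary~\ref{corollary optorder} together with the monotonicity observation in Appendix~\ref{appendix corol optorder} that $\log_2\bigl(1+\frac{p_{b,i}\tilde h_{b,k}}{\sum_j p_{b,j}\tilde h_{b,k}+1}\bigr)$ is increasing in $\tilde h_{b,k}$. Your normalization by $h_{b,l}$ is exactly that reformulation, and your identification of the common INI sum $\sum_{j\in\Phi^*_{b,i}}p_{b,j}$ as the step requiring care is well placed.
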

According to Corollary \ref{corollary optorder}, the optimal SIC ordering ($\boldsymbol{\lambda}^*_b$) and power control of interfering cells ($\boldsymbol{\alpha}^*_{-b}$) cannot be decoupled in general. However, for any given $\boldsymbol{\alpha}$, and subsequently $\boldsymbol{\lambda}^*_b$, the optimal power $\boldsymbol{p}^*_b$ can be obtained in closed form as follows:
\begin{proposition}\label{Propos optpower}
	Assume that $\boldsymbol{\alpha}$ is fixed. For convenience, let $|\mathcal{U}_b|=M$, and $k>i$ if $\tilde{h}_{b,k}(\boldsymbol{\alpha}_{-b}) > \tilde{h}_{b,i}(\boldsymbol{\alpha}_{-b})$. According to Corollary \ref{corollary optorder}, the decoding order $M \to M-1 \to \dots \to 1$ is optimal. The optimal powers in $\boldsymbol{p}^*_b$ can be obtained in closed form as follows:
	\begin{equation}\label{optpow i}
		p^*_{b,i}=\left[\beta_{b,i} \left( \prod\limits_{j=1}^{i-1} \left(1-\beta_{b,j}\right) \alpha_b P^{\text{max}}_b
		+\frac{1}{\tilde{h}_{b,i}}-
		\sum\limits_{j=1}^{i-1} \frac{\prod\limits_{k=j+1}^{i-1} \left(1-\beta_{b,k}\right) \beta_{b,j}} {\tilde{h}_{b,j}} \right)\right]^+
		,\forall i=1,\dots,M-1,
	\end{equation}
	and
	\begin{equation}\label{optpow M}
		p^*_{b,M}=\left[\alpha_b P^{\text{max}}_b - \sum\limits_{i=1}^{M-1} \beta_{b,i} \left( \prod\limits_{j=1}^{i-1} \left(1-\beta_{b,j}\right) \alpha_b P^{\text{max}}_b +\frac{1}{\tilde{h}_{b,i}}-
		\sum\limits_{j=1}^{i-1} \frac{\prod\limits_{k=j+1}^{i-1} \left(1-\beta_{b,k}\right) \beta_{b,j}} {\tilde{h}_{b,j}} \right)\right]^+,
	\end{equation}
	where $\beta_{b,i} = \frac{2^{R^{\text{min}}_{b,i}}-1} {2^{R^{\text{min}}_{b,i}}}, ~\forall i=1,\dots,M-1$, and $\left[.\right]^+=\max \{.,0\}$.
\end{proposition}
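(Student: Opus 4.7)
By Corollary~\ref{corollary optorder} the assumed CINR ordering makes $M\to M-1\to\cdots\to 1$ optimal, and Corollary~\ref{corol optorder 3} collapses the $\min$ in \eqref{useri Mcell} onto the intended-receiver branch; the per-cell subproblem for fixed $\boldsymbol{\alpha}$ is therefore to maximize $\sum_{i=1}^M\log_2\bigl(1+\tfrac{p_{b,i}\tilde{h}_{b,i}}{(\sum_{j>i}p_{b,j})\tilde{h}_{b,i}+1}\bigr)$ subject to $\sum_i p_{b,i}=\alpha_b P^{\text{max}}_b$, $p_{b,i}\ge 0$, and $R_{b,i}\ge R^{\text{min}}_{b,i}$. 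My plan is to switch to the cumulative-tail variables $x_i:=\sum_{j=i}^{M}p_{b,j}$, so that $p_{b,i}=x_i-x_{i+1}$, $x_1=\alpha_b P^{\text{max}}_b$, $x_{M+1}=0$, the non-negativity constraint becomes $x_i\ge x_{i+1}$, and every per-user rate simplifies to the telescoping expression $R_{b,i}=\log_2\tfrac{1+x_i\tilde{h}_{b,i}}{1+x_{i+1}\tilde{h}_{b,i}}$ that couples only two adjacent variables.

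\textbf{Monotonicity and tight minimum-rate bounds.} The key step is to show that the sum-rate, seen as a function of $(x_2,\ldots,x_M)$, is coordinate-wise non-decreasing. Direct differentiation yields $\partial(\sum_i R_{b,i})/\partial x_k = \tfrac{1}{\ln 2}\bigl(\tfrac{\tilde{h}_{b,k}}{1+x_k\tilde{h}_{b,k}}-\tfrac{\tilde{h}_{b,k-1}}{1+x_k\tilde{h}_{b,k-1}}\bigr)$, and since the map $a\mapsto a/(1+x_k a)$ is strictly increasing on $(0,\infty)$ while $\tilde{h}_{b,k}\ge\tilde{h}_{b,k-1}$ by the proposition's ordering assumption, this partial derivative is non-negative. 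Intuitively, under CINR-based SIC, shifting power from user $k-1$ up to any higher-indexed user cannot hurt the sum-rate. A short algebraic rearrangement turns the minimum-rate constraint on user $j<M$ into the explicit upper bound $x_{j+1}\le(1-\beta_{b,j})x_j-\beta_{b,j}/\tilde{h}_{b,j}$, and a one-line check shows this bound is always (weakly) tighter than the non-negativity bound $x_{j+1}\le x_j$. Combined with the monotonicity just established, each such bound is active at the optimum, so $R_{b,j}^{*}=R^{\text{min}}_{b,j}$ for $j=1,\ldots,M-1$ and the cluster-head user~$M$ absorbs all slack in the power budget, which is the structural claim behind the proposition.

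\textbf{Closing the form, and the main obstacle.} The active constraints pin the optimal tails via the scalar linear recursion $x_{j+1}^{*}=(1-\beta_{b,j})x_j^{*}-\beta_{b,j}/\tilde{h}_{b,j}$ with $x_1^{*}=\alpha_b P^{\text{max}}_b$; unrolling this recursion produces exactly the product-minus-sum expression appearing in the brackets of \eqref{optpow i}, and then the identity $p_{b,i}^{*}=x_i^{*}-x_{i+1}^{*}=\beta_{b,i}(x_i^{*}+1/\tilde{h}_{b,i})$ recovers \eqref{optpow i}, while $p_{b,M}^{*}=x_M^{*}=\alpha_b P^{\text{max}}_b-\sum_{i=1}^{M-1}p_{b,i}^{*}$ recovers \eqref{optpow M}. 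The $[\cdot]^+$ projection in both formulas plays no role in the optimality argument itself; it is a safeguard for parameter regimes in which the per-cell problem is infeasible and the raw recursion would return a negative power. The only non-routine ingredient in this plan is the monotonicity inequality of the second paragraph --- that is where the CINR-ordering hypothesis really does its work, and it is what forces every non-head user to operate precisely at its minimum-rate target; everything else (the change of variables, rearranging the rate constraint into a linear upper bound on $x_{j+1}$, and unrolling a first-order scalar recursion) is routine bookkeeping.
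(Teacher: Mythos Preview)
Your proposal is correct and reaches the same structural conclusion as the paper --- every non-head user is pinned at its minimum rate and user~$M$ absorbs the remaining power --- but the route is genuinely different. The paper's proof in Appendix~\ref{appendix optpower} proceeds via the Lagrangian: it verifies strict concavity of the per-cell objective, invokes Slater's condition, writes out the full KKT system, and then shows that the stationarity equations force the strict ordering $\mu^*_M<\mu^*_{M-1}<\cdots<\mu^*_1$ on the dual multipliers of the minimum-rate constraints; combined with $\mu^*_M\ge 0$, complementary slackness then makes every minimum-rate constraint for $i<M$ active. The closed form is finally obtained by unrolling the same recursion you use.

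Your approach sidesteps all of the convex-analysis machinery. The change to tail variables $x_i=\sum_{j\ge i}p_{b,j}$ converts the problem into a chain with telescoping rates, and the monotonicity inequality $\partial(\sum_i R_{b,i})/\partial x_k\ge 0$ is exactly the same CINR-ordering fact the paper exploits inside its KKT argument (their positivity of $B_k(\boldsymbol{p}^*_b)$), only packaged more directly. What you gain is an elementary primal argument that needs neither Slater nor strong duality nor dual variables; what the paper's KKT route gains is a slightly more mechanical derivation that would generalize more readily if additional coupling constraints were added. One small point worth making explicit when you write this up fully: because the upper bound on $x_{k+1}$ is \emph{increasing} in $x_k$ (coefficient $1-\beta_{b,k}>0$), raising $x_k$ to its ceiling never tightens any downstream constraint, which is what justifies the greedy ``activate each bound in turn'' step.
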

\begin{proof}
	Please see Appendix \ref{appendix optpower}.
\end{proof}
\begin{remark}\label{remark approx power}
	For sufficiently large normalized channel gains $\tilde{h}_{b,i},\forall i \in \mathcal{U}_b\setminus\{M\}$, \eqref{optpow i} and \eqref{optpow M} can be approximated to
	\begin{equation}\label{optpow i approx}
		p^*_{b,i} \approx \left[\alpha_b P^{\text{max}}_b \beta_{b,i} \left( \prod\limits_{j=1}^{i-1} \left(1-\beta_{b,j}\right) \right)\right]^+
		,\forall i=1,\dots,M-1,
	\end{equation}
	and
	\begin{equation}\label{optpow M approx}
		p^*_{b,M} \approx \left[\alpha_b P^{\text{max}}_b \left(1 - \sum\limits_{i=1}^{M-1} \beta_{b,i} \left( \prod\limits_{j=1}^{i-1} \left(1-\beta_{b,j}\right) \right) \right)\right]^+,
	\end{equation}
	respectively.
\end{remark}
\begin{remark}\label{remark approx power sameminrate}
	For sufficiently large normalized channel gains $\tilde{h}_{b,i},\forall i \in \mathcal{U}_b\setminus\{M\}$, if users have the same minimum rate demands denoted by $R_\text{min}$ in cell $b$, the optimal power coefficient of each user $i \in \mathcal{U}_b$, denoted by $q^*_{b,i}=\frac{p^*_{b,i}}{\alpha_b P^{\text{max}}_b}$, based on \eqref{optpow i approx} and \eqref{optpow M approx} can be obtained by
	\begin{equation}\label{power sameminrate}
		q^*_{b,i} \approx \frac{2^{R_\text{min}}-1}{ \left(2^{R_\text{min}}\right)^i },~\forall i=1,\dots,M-1,~~~~~~~~~~q^*_{b,M} \approx \frac{1}{\left(2^{R_\text{min}}\right)^{M-1}}.
	\end{equation}
\end{remark}
According to the above,
\begin{itemize}
	\item Remark \ref{remark approx power} shows that in the high SINR regions, the optimal powers are approximately insensitive to the channel gains. Hence, \eqref{optpow i approx} and \eqref{optpow M approx} are valid for the fast fading and/or imperfect CSI scenarios, where the CSI variations are small such that the optimal decoding order remains constant.
	\item Remark \ref{remark approx power} shows that the weaker user may get less power than the stronger user, depending on its minimum rate demand. For $M=2$, Remark \ref{remark approx power sameminrate} shows that the weaker user gets less power when its minimum rate demand is less than $1$ bps/Hz.
\end{itemize}
The performance of these approximations is numerically evaluated in Subsection \ref{subsection approx opt power}.

According to Proposition \ref{Propos optpower}, for any given feasible $\boldsymbol{\alpha}$, the optimal powers can be obtained in closed form. Therefore, the globally optimal $\boldsymbol{\alpha}^*$ should satisfy the necessary and sufficient condition $\sum\limits_{b \in \mathcal{B}} R^\text{tot}_b(\boldsymbol{\alpha}^*) \geq \sum\limits_{b \in \mathcal{B}} R^\text{tot}_b(\boldsymbol{\alpha})$, for any feasible $\boldsymbol{\alpha}$. Based on Proposition \ref{Propos optpower}, for any given feasible $\boldsymbol{\alpha}$, the optimal value in cell $b$ can be calculated in closed form as $R^\text{tot}_b(\boldsymbol{\alpha})=\sum\limits_{i=1}^{M-1} R^{\text{min}}_{b,i} + \log_2\left( 1+p^*_{b,M}(\boldsymbol{\alpha}) \tilde{h}_{b,M} \right)$, where $p^*_{b,M}(\boldsymbol{\alpha})$ is obtained by \eqref{optpow M}. Hence, the globally optimal solution can be obtained by performing a greedy search on $\boldsymbol{\alpha}$, and comparing the optimal value $\sum\limits_{b \in \mathcal{B}} R^\text{tot}_b(\boldsymbol{\alpha})$ over all the possible values of $\boldsymbol{\alpha}$. Since the number of samples $S_\alpha$ for each $\alpha_b \in \boldsymbol{\alpha}_{1 \times B}$ is finite, the greedy search on $\boldsymbol{\alpha}$ falls into an $\epsilon$-suboptimal solution such that when $S_\alpha \to \infty$, then $\epsilon \to 0$. Our proposed globally optimal, or more precisely, $\epsilon$-suboptimal JSPA algorithm utilizes both the exploration of different values of $\alpha_b \in \boldsymbol{\alpha}_{1 \times B}$ and the distributed power allocation optimization (calculating sum-rate $\sum\limits_{b \in \mathcal{B}} R^\text{tot}_b(\boldsymbol{\alpha})$). The pseudo code of the proposed JSPA method is presented in Alg. \ref{Alg global}.
\begin{algorithm}[tp]
	\caption{Optimal JSPA for Sum-Rate Maximization Problem.} \label{Alg global}
	\begin{algorithmic}[1]
		\STATE Initialize the step size $\epsilon_\alpha \ll 1$, and $R_\text{tot}=0$.
		\FOR {each sample $\boldsymbol{\hat{\alpha}}$}
		\STATE Update $\tilde{h}_{b,i}=\frac{h_{b,i}}{\hat{I}_{b,i} + \sigma_{b,i}},~\forall b \in \mathcal{B},~i \in \mathcal{U}_b$, where 
		$\hat{I}_{b,i}=\sum\limits_{\hfill j \in \mathcal{B} \hfill\atop  j \neq b} \hat{\alpha}_j P^{\text{max}}_j h_{j,b,i}$.
		\STATE Update $\boldsymbol{\lambda}$ according to $\lambda_{b,i,k}=1$ if $\tilde{h}_{b,k} > \tilde{h}_{b,i}$, or equivalently update users index according to $k > i$ if $\tilde{h}_{b,k} > \tilde{h}_{b,i}$.
		\STATE Find $\boldsymbol{p}$ according to \eqref{optpow i} and \eqref{optpow M}.
		\\
		\STATE\textbf{if}~~$\left(\sum\limits_{b \in \mathcal{B}} \sum\limits_{i \in \mathcal{U}_b} R_{b,i}(\boldsymbol{p},\boldsymbol{\lambda}_b) > R_\text{tot}\right)$~\textbf{then}
		\\~~~~~Update $R^*_\text{tot}=\left(\sum\limits_{b \in \mathcal{B}} \sum\limits_{i \in \mathcal{U}_b} R_{b,i}(\boldsymbol{p},\boldsymbol{\lambda}_b)\right)$, $\boldsymbol{p}^*=\boldsymbol{p}$, and $\boldsymbol{\lambda}^*=\boldsymbol{\lambda}$.
		\\\textbf{end if}
		\ENDFOR
		\STATE The outputs $\boldsymbol{\lambda}^{*}$ and $\boldsymbol{p}^{*}$ are the optimal solutions.
	\end{algorithmic}
\end{algorithm}
This algorithm needs to explore all the possible values in $\boldsymbol{\alpha}_{1 \times B}$. For the total number of samples $S_\alpha$ for each $\alpha_b$, the complexity of Alg. \ref{Alg global} is $S^B_\alpha$. For the case that each cell has $M$ users, the complexity of exhaustive search is $S^{BM}_p \times (M!)^B$, where $S_p$ is the total number of samples for each $p_{b,i} \in \boldsymbol{p}$. Hence, Alg. \ref{Alg global} reduces the complexity of exhaustive search by a factor of $S^{M}_p \times (M!)^B$ when $S_\alpha=S_p$. In fact, Alg. \ref{Alg global} has two main advantages: 1) The complexity is independent from the number of users, since for any given $\boldsymbol{\alpha}$, the optimal value $\sum\limits_{b \in \mathcal{B}} R^\text{tot}_b(\boldsymbol{\alpha})$ is formulated in closed form; 2) The complexity of finding optimal SIC ordering is negligible since for a fixed $\boldsymbol{\alpha}$, the optimal decoding order is obtained based on Corollary \ref{corollary optorder}.

Since Alg. \ref{Alg global} is still exponential in the number of BSs, it is important to check the feasibility of problem \eqref{centg problem} with a low-complexity algorithm before performing Alg. \ref{Alg global}. The feasibility problem of \eqref{centg problem} can be formulated by
\begin{equation}\label{feasible problem}
\min_{ \boldsymbol{p} \geq 0,~\boldsymbol{\lambda} \in \{0,1\} }\hspace{.0 cm}	
f(\boldsymbol{p},\boldsymbol{\lambda})~~~~~~~\text{s.t.}~\eqref{Constraint max power}\text{-}\eqref{Constraint own decoding},
\end{equation}
where $f(\boldsymbol{p},\boldsymbol{\lambda})$ can be any objective function such that the intersection of the feasible domain of \eqref{Constraint max power}\text{-}\eqref{Constraint own decoding} is a subset of the feasible domain of $f(\boldsymbol{p},\boldsymbol{\lambda})$. Finding a feasible solution for \eqref{feasible problem} is challenging, due to the binary variables in $\boldsymbol{\lambda}$. 
For the case that $f(\boldsymbol{p})=\sum\limits_{b \in \mathcal{B}} \sum\limits_{i \in \mathcal{U}_b} p_{b,i}$, the well-known iterative distributed power minimization framework can globally solve \eqref{feasible problem} with a fast convergence speed \cite{7964738}.
Here, we briefly present the structure of the iterative distributed power minimization algorithm. 
Let $\boldsymbol{p}^{(t-1)}$ be the output of iteration $t-1$ which is the initial power matrix for iteration $t$ denoted by $\boldsymbol{p}^{(t)}$.
At iteration $t$, we first update the optimal decoding order in cell $1$ based on the updated ICI (according to Corollary \ref{corollary optorder}).
Then, we find $\boldsymbol{p^*}_{1}^{(t)}$ for the fixed $\boldsymbol{p}^{(t)}_{-1}$. After that, we substitute $\boldsymbol{p}^{(t)}_{1}$ with $\boldsymbol{p^*_{1}}^{(t)}$. The updated $\boldsymbol{p}^{(t)}$ is an initial power matrix for the next cell $2$. We repeat these steps at all the remaining cells. The updated power at cell $B$ is the output of iteration $t$. We continue the iterations until the convergence is achieved. For the fixed $I_{b,i}$ and subsequently given $\boldsymbol{\lambda}^*_{b}$ in cell $b$, the optimal powers can be obtained in closed form as
\begin{equation}\label{opt powmin}
p^*_{b,i}=\beta_{b,i} \left(\prod\limits_{j=i+1}^{M} \left(1+\beta_{b,j}\right) +\frac{1}{\tilde{h}_{b,i}}+
\sum\limits_{j=i+1}^{M} \frac{ \beta_{b,j} \prod\limits_{k=i+1}^{j-1} \left(1+\beta_{b,k}\right)}{\tilde{h}_{b,j}}\right),~\forall i=1,\dots,M,
\end{equation}
where $\beta_{b,i}=2^{R^{\text{min}}_{b,i}}-1, ~\forall i=1,\dots,M$, and $\tilde{h}_{b,l}=\frac{h_{b,l}}{I_{b,l} + \sigma_{b,l}}$. For convenience, we assumed that $|\mathcal{U}_b|=M$ and also updated the users index based on the ascending order of $\tilde{h}_{b,i}$ in \eqref{opt powmin}. For more details, please see Appendix \ref{appendix optpower powmin}.
The pseudo code of the proposed power minimization algorithm is presented in Alg. \ref{Alg opt Mcell powmin}.
\begin{algorithm}[tp]
	\caption{Optimal Joint SIC Ordering and Power Allocation for Total Power Minimization Problem.} \label{Alg opt Mcell powmin}
	\begin{algorithmic}[1]
		\STATE Initialize feasible $\boldsymbol{p}^{(0)}$, and tolerance $\epsilon_\text{tol}$ (sufficiently small).
		\WHILE {$\norm{\boldsymbol{p}^{(t-1)}}- \norm{\boldsymbol{p}^{(t)}}> \epsilon_\text{tol}$}
		\STATE Set $t=:t+1$, and then update $\boldsymbol{p}^{(t)}=:\boldsymbol{p}^{(t-1)}$.
		\FOR{b=1:B}
		\STATE Update the ICI term $I_{b,i}=\sum\limits_{\hfill j \in \mathcal{B} \hfill\atop  j \neq b} \left(\sum\limits_{l\in \mathcal{U}_j} p_{j,l}\right) h_{j,b,i}$ at cell $b$.
		\STATE Update $\boldsymbol{\lambda}_b$ in cell $b$ according to $k \to i$ if $\tilde{h}_{b,k} > \tilde{h}_{b,i}$, where $\tilde{h}_{b,l}=\frac{h_{b,l}}{I_{b,l} + \sigma_{b,l}},~\forall b \in \mathcal{B},~l \in \mathcal{U}_b$.
		\STATE Find $\boldsymbol{p^*}^{(t)}_{b}$ using \eqref{opt powmin}. Then, substitute $\boldsymbol{p}^{(t)}_{b}$ with $\boldsymbol{p^*}^{(t)}_{b}$.
		\ENDFOR
		\ENDWHILE
		\STATE The outputs $\boldsymbol{\lambda}^{*}$ and $\boldsymbol{p}^{*}$ are the optimal solutions.
	\end{algorithmic}
\end{algorithm}
Numerical assessments show that Alg. \ref{Alg opt Mcell powmin} has a very fast convergence speed. The only concern is finding a feasible initial point $\boldsymbol{p}^{(0)}$. In Subsection \ref{subsection conya num}, we discussed about the impact of feasible/infeasible initial points on the convergence of Alg. \ref{Alg opt Mcell powmin}.

It is proved that the optimal $\boldsymbol{\alpha}^*$ in the total power minimization problem is indeed a component-wise minimum \cite{7964738}. It means that for any feasible $\boldsymbol{\hat{\alpha}}=[\hat{\alpha}_b],\forall b \in \mathcal{B}$ in \eqref{feasible problem}, it can be guaranteed that $\alpha^*_b\leq \hat{\alpha}_b$. As a result, the exploration area of each $\alpha_b$ in Alg. \ref{Alg global} can be reduced from $[0,1]$ to $[\alpha^\text{min}_b,1]$, where $\alpha^\text{min}_b$ denotes the optimal power consumption coefficient of BS $b$ in the total power minimization problem. The lower-bounds $\alpha^\text{min}_b,\forall b \in \mathcal{B}$ can significantly reduce the complexity of Alg. \ref{Alg global} for the case that $\alpha^\text{min}_b$ grows, e.g., when the minimum rate demand of users increases.

\subsubsection{Suboptimal SIC Ordering: Joint Rate and Power Allocation}\label{subsection subopt order}
According to Corollary \ref{corollary optorder}, since $\boldsymbol{\lambda}^*$ in \eqref{centg problem} depends on $\boldsymbol{\alpha}^*_{-b}$, any fixed decoding order before power allocation is potentially suboptimal. For a fixed decoding order, Corollary \ref{corol optorder 3} may not hold in the complete feasible region of $\boldsymbol{\alpha}$. In other words, a user with higher decoding order may have lower CINR for some power consumption level of interfering cells.
The main problem \eqref{centg problem} for the given $\boldsymbol{\lambda}$, or equivalently fixed $\Phi_{b,i}$, can be rewritten as
\begin{equation}\label{centg fixed problem}
	\max_{ \boldsymbol{p} \geq 0 }\hspace{.0 cm}	
	~\sum\limits_{b \in \mathcal{B}} \sum\limits_{i \in \mathcal{U}_b} R_{b,i}(\boldsymbol{p})~~~~~~~~~~
	\text{s.t.}~~\eqref{Constraint max power},~\eqref{Constraint QoS}.
\end{equation}
Constraint \eqref{Constraint QoS} can be equivalently transformed to a linear form as follows:
\begin{multline}\label{linear form QoS const}
	2^{R^{\text{min}}_{b,i}} \left(\sum\limits_{j \in \Phi_{b,i}} p_{b,j} h_{b,k} + I_{b,k} + \sigma^2_{b,k}\right) \leq \left(p_{b,i} + \sum\limits_{j \in \Phi_{b,i}} p_{b,j}\right) h_{b,k} + I_{b,k} + \sigma^2_{b,k},~\forall b \in \mathcal{B},
	\\
	i,k \in \mathcal{U}_b,~k \in \{i\} \cup \Phi_{b,i}.
\end{multline}
Hence, the feasible region of \eqref{centg fixed problem} is affine, so is convex in $\boldsymbol{p}$. However, \eqref{centg fixed problem} is still strongly NP-hard, due to the nonconcavity of the sum-rate function with respect to $\boldsymbol{p}$ (see Corollary \ref{corol NPhard}). 
The optimal powers in \eqref{optpow i} and \eqref{optpow M} are derived based on the optimal decoding order. Therefore, Alg. \ref{Alg global} is not applicable for solving \eqref{centg fixed problem}. Besides, the complexity of exhaustive search for solving \eqref{centg fixed problem} is $S^{BM}_p$, which is exponential in the number of users. In the following, we propose a sequential programming method to find a suboptimal power allocation for \eqref{centg fixed problem} while considering the complete rate region of users. To tackle the non-differentiability of $R_{b,i}(\boldsymbol{p})$, we first apply the epigraph technique \cite{Boydconvex}, and define $r_{b,i}$ as the adopted spectral efficiency of user $i \in \mathcal{U}_b$ such that $ r_{b,i} \leq \log_2\left( 1+ \frac{p_{b,i} h_{b,k}}{\sum\limits_{j \in \Phi_{b,i}} p_{b,j} h_{b,k} + I_{b,k} + \sigma^2_{b,k}} \right),~\forall b \in \mathcal{B},~i,k \in \mathcal{U}_b,~k \in \{i\} \cup \Phi_{b,i}$. Hence, \eqref{centg fixed problem} can be equivalently transformed to the following problem as
\begin{subequations}\label{cent subopt problem}
	\begin{align}\label{obf cent subopt problem}
		\text{JRPA}:~&\max_{ \boldsymbol{p} \geq 0,~\boldsymbol{r} \geq 0 }\hspace{.0 cm}	
		~~  \sum\limits_{b \in \mathcal{B}} \sum\limits_{i \in \mathcal{U}_b} r_{b,i}
		\\
		&\text{s.t.}~\eqref{Constraint max power}, \nonumber
		\\
		\label{Constraint QoS RA}
		& r_{b,i} \geq R^{\text{min}}_{b,i},~\forall b \in \mathcal{B},~i \in \mathcal{U}_b,
		\\
		\label{Constraint decoding}
		& r_{b,i} \leq \log_2\left( 1+ \frac{p_{b,i} h_{b,k}}{\sum\limits_{j \in \Phi_{b,i}} p_{b,j} h_{b,k} + I_{b,k} + \sigma^2_{b,k}} \right),~\forall b \in \mathcal{B},~i,k \in \mathcal{U}_b,~k \in \{i\} \cup \Phi_{b,i},
	\end{align}
\end{subequations}
where $\boldsymbol{r}=[r_{b,i}],\forall b \in \mathcal{B}, i \in \mathcal{U}_b$.
Although the objective function \eqref{obf cent subopt problem} and constraints \eqref{Constraint max power} and \eqref{Constraint QoS RA} are affine, problem \eqref{cent subopt problem} is still nonconvex due to the nonconvexity of \eqref{Constraint decoding}. The derivations of the proposed sequential programming method for solving \eqref{cent subopt problem} is presented in Appendix \ref{appendix jointpowerrate}. The pseudo code of our proposed JRPA algorithm is shown in Alg. \ref{Alg JRPA sequential}.
\begin{algorithm}[tp]
	\caption{Suboptimal JRPA Algorithm Based on the Sequential Programming.} \label{Alg JRPA sequential}
	\begin{algorithmic}[1]
		\STATE Initialize $\boldsymbol{r}^{(0)}$, iteration index $t=0$, and tolerance $\epsilon_s \ll 1$.
		\WHILE {$\norm {\boldsymbol{r}^{(t)}-\boldsymbol{r}^{(t-1)}} > \epsilon_s$}
		\STATE Set $t=:t+1$.
		\STATE Update the approximation parameter $\hat{g}(r^{(t)}_{b,i})$ based on ${\boldsymbol{r^*}^{(t-1)}}$.
		\STATE Find $\left({\boldsymbol{r^*}^{(t)}}, {\boldsymbol{\tilde{p}^*}^{(t)}}\right)$ by solving the convex approximated problem \eqref{cent subopt problem 2}.
		\ENDWHILE
		\STATE Set $p^*_{b,i}=\text{e}^{\tilde{p}^*_{b,i}},~\forall b \in \mathcal{B},~i \in \mathcal{U}_b$. The output $\boldsymbol{p^*}^{(t)}$ is adopted for the network.
	\end{algorithmic}
\end{algorithm}

The decoding order for some user pairs is independent from the ICI, so it can be determined prior to power allocation optimization. For a specific channel gain condition, we prove that the CNR-based decoding order is optimal for a user pair independent from power allocation.
\begin{theorem}\label{Theorem SIC M-cell positiveterm}
	\textbf{(SIC sufficient condition)} For each user pair $i,k \in \mathcal{U}_b$ with $\frac{h_{b,k}}{\sigma_{b,k}} \geq \frac{h_{b,i}}{\sigma_{b,i}}$, if
	\begin{equation}\label{suff cond}
		\frac{h_{b,k}}{\sigma_{b,k}} - \frac{h_{b,i}}{\sigma_{b,i}} \geq \sum\limits_{j \in \mathcal{Q}_{b,i,k}} \frac{P^\text{max}_j} {\sigma_{b,i} \sigma_{b,k}}  \left( h_{j,b,k} h_{b,i} - h_{j,b,i} h_{b,k} \right),
	\end{equation}
	where $\mathcal{Q}_{b,i,k}=\left\{j \in \mathcal{B}\setminus\{b\} \middle\vert \frac{h_{b,k}}{h_{b,i}} < \frac{h_{j,b,k}}{h_{j,b,i}}\right\}$, the decoding order $k \to i$ is optimal.
\end{theorem}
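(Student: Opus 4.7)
The plan is to reduce the theorem to Corollary~\ref{corollary optorder} and then perform a worst-case analysis over the interfering powers. By Corollary~\ref{corollary optorder}, the order $k \to i$ is optimal at a given power profile $\boldsymbol{\alpha}$ precisely when $\tilde{h}_{b,i}(\boldsymbol{\alpha}_{-b}) \leq \tilde{h}_{b,k}(\boldsymbol{\alpha}_{-b})$, i.e.,
\begin{equation*}
h_{b,k}\bigl(I_{b,i}(\boldsymbol{\alpha}_{-b}) + \sigma_{b,i}\bigr) \;\geq\; h_{b,i}\bigl(I_{b,k}(\boldsymbol{\alpha}_{-b}) + \sigma_{b,k}\bigr).
\end{equation*}
Dividing by $\sigma_{b,i}\sigma_{b,k}>0$ and rearranging, this is equivalent to
\begin{equation*}
\frac{h_{b,k}}{\sigma_{b,k}}-\frac{h_{b,i}}{\sigma_{b,i}} \;\geq\; \frac{h_{b,i}\,I_{b,k}(\boldsymbol{\alpha}_{-b}) - h_{b,k}\,I_{b,i}(\boldsymbol{\alpha}_{-b})}{\sigma_{b,i}\sigma_{b,k}}.
\end{equation*}
Our task is to show that whenever \eqref{suff cond} holds, the above inequality is valid for \emph{every} feasible interference profile, and thus $k\to i$ is globally optimal independently of the power allocation at the other cells.

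Next I would substitute $I_{b,i}(\boldsymbol{\alpha}_{-b})=\sum_{j\neq b}\alpha_j P^{\text{max}}_j h_{j,b,i}$ and the analogous expression for $I_{b,k}$, which turns the right-hand side into
\begin{equation*}
\sum_{j\in\mathcal{B}\setminus\{b\}} \alpha_j\, \frac{P^{\text{max}}_j}{\sigma_{b,i}\sigma_{b,k}}\bigl(h_{j,b,k}\,h_{b,i} - h_{j,b,i}\,h_{b,k}\bigr).
\end{equation*}
Since we must guarantee the inequality for \emph{all} $\alpha_j\in[0,1]$, it suffices to upper-bound this sum by its maximum over that box. Each summand is linear in $\alpha_j$, so the maximum is obtained by choosing $\alpha_j=1$ exactly for those indices $j$ for which the coefficient $h_{j,b,k}h_{b,i}-h_{j,b,i}h_{b,k}$ is strictly positive, and $\alpha_j=0$ otherwise. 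The positivity of the coefficient is equivalent to $\tfrac{h_{j,b,k}}{h_{j,b,i}}>\tfrac{h_{b,k}}{h_{b,i}}$, which is exactly the defining condition of the set $\mathcal{Q}_{b,i,k}$.

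Consequently, the worst-case upper bound is precisely $\sum_{j\in\mathcal{Q}_{b,i,k}} \tfrac{P^{\text{max}}_j}{\sigma_{b,i}\sigma_{b,k}}(h_{j,b,k}h_{b,i}-h_{j,b,i}h_{b,k})$, which is the right-hand side of \eqref{suff cond}. Combining this with the hypothesis \eqref{suff cond} yields the pointwise inequality $\tilde{h}_{b,i}(\boldsymbol{\alpha}_{-b})\leq\tilde{h}_{b,k}(\boldsymbol{\alpha}_{-b})$ for every admissible $\boldsymbol{\alpha}_{-b}$, and invoking Corollary~\ref{corollary optorder} completes the proof. There is no real obstacle beyond book-keeping of signs; the only subtle point is recognizing that decomposing the worst-case into the index set $\mathcal{Q}_{b,i,k}$ of BSs whose interference \emph{disadvantages} user $k$ relative to user $i$ is what converts a pointwise requirement over the whole box $[0,1]^{|\mathcal{B}|-1}$ into the single inequality~\eqref{suff cond}.
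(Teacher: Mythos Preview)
Your proposal is correct and follows essentially the same approach as the paper's proof: both rewrite the CINR-ordering condition from Corollary~\ref{corollary optorder} as $\frac{h_{b,k}}{\sigma_{b,k}}-\frac{h_{b,i}}{\sigma_{b,i}}\geq\sum_{j\neq b}\frac{\alpha_j P^{\text{max}}_j}{\sigma_{b,i}\sigma_{b,k}}(h_{j,b,k}h_{b,i}-h_{j,b,i}h_{b,k})$ and then upper-bound the right-hand side by setting $\alpha_j=1$ exactly on the index set $\mathcal{Q}_{b,i,k}$ where the coefficient is positive and $\alpha_j=0$ elsewhere. The only cosmetic difference is that the paper writes the interfering power as $\sum_{l\in\mathcal{U}_j}p_{j,l}$ rather than $\alpha_j P^{\text{max}}_j$, which is equivalent.
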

\begin{proof}
	Please see Appendix \ref{appendix theorem positiveterm}.
\end{proof}
The applications of Theorem \ref{Theorem SIC M-cell positiveterm} are listed below:
\begin{enumerate}
	\item It reduces the exploration area of finding optimal decoding order based on greedy search algorithm since the optimal decoding order for some user pairs can be determined independent from power allocation. 
	\item It reduces the complexity of Alg. \ref{Alg JRPA sequential}: Assume that the suboptimal CNR-based decoding order is applied in \eqref{cent subopt problem}, i.e., $k \to i$ or $\lambda_{b,i,k}=1$ if $\frac{h_{b,k}}{\sigma_{b,k}} \geq \frac{h_{b,i}}{\sigma_{b,i}}$. Let us define the set of users with higher decoding orders than user $i \in \mathcal{U}_b$ that satisfy the SIC sufficient condition by $\Phi^\text{CNR}_{b,i}=\{k \in \mathcal{U}_b \setminus \{i\} \mid \lambda_{b,i,k}=1,~\frac{h_{b,k}}{\sigma_{b,k}} - \frac{h_{b,i}}{\sigma_{b,i}} \geq \sum\limits_{j \in \mathcal{Q}_{b,i,k}} \frac{P^\text{max}_j} {\sigma_{b,i} \sigma_{b,k}}  \left( h_{j,b,k} h_{b,i} - h_{j,b,i} h_{b,k} \right)\}$. Obviously, we have $\Phi^\text{CNR}_{b,i} \subseteq \Phi_{b,i}$. Based on Theorem \ref{Theorem SIC M-cell positiveterm}, it is required to check the feasibility of constraint \eqref{Constraint decoding} for only users in $\{i\} \cup \Phi_{b,i} \setminus \Phi^\text{CNR}_{b,i}$, which significantly reduces the complexity of our proposed JRPA algorithm. For the case that $\Phi^\text{CNR}_{b,i} = \Phi_{b,i}$, we can guarantee that Corollary \eqref{corol optorder 3} holds for user $i \in \mathcal{U}_b$, meaning that $r^{*}_{b,i} = \tilde{R}_{b,i} (\boldsymbol{p}^*,\boldsymbol{\lambda}^*_b)$. Finally, for the case that the SIC sufficient condition holds for every user pair within cell $b$, i.e., $\Phi^\text{CNR}_{b,i} = \Phi_{b,i},~\forall i \in \mathcal{U}_b$, the CNR-based decoding order is optimal in cell $b$, meaning that the number of constraints in \eqref{Constraint decoding} will be reduced to one for each user pair within cell $b$.
	\item Theorem \ref{Theorem SIC M-cell positiveterm} can be considered as a necessary condition in user grouping of the multicarrier NOMA systems. However, it is still unknown how much the system performance decreases if we limit user grouping based on Theorem \ref{Theorem SIC M-cell positiveterm} which can be considered as future work. 
\end{enumerate}
Similar to problem \eqref{feasible problem}, it can be shown that the feasible region of \eqref{centg fixed problem} is the same as the feasible region of the following total power minimization problem as
\begin{equation}\label{subopt powermin problem}
	\min_{ \boldsymbol{p} \geq 0,~\boldsymbol{r} \geq 0 }\hspace{.0 cm}	
	~\sum\limits_{b \in \mathcal{B}} \sum\limits_{i \in \mathcal{U}_b} p_{b,i}~~~~~~~~~~
	\text{s.t.}~~\eqref{Constraint max power},~\eqref{linear form QoS const}.
\end{equation}
Since the objective function and all the constraints are affine, problem \eqref{subopt powermin problem} is a linear program, so is convex. 
Hence, \eqref{subopt powermin problem} can be solved by using the Dantzig’s simplex method or interior-point methods (IPMs) \cite{Boydconvex}. The solution of \eqref{subopt powermin problem} can be utilized as initial feasible point for Alg. \ref{Alg JRPA sequential}.

\subsubsection{Suboptimal SIC Ordering: Power Allocation for a Fixed Rate Region}\label{subsection SRR}
In Subsection \ref{subsection subopt order}, we showed that for any fixed, thus potentially suboptimal, decoding order, Corollary \ref{corol optorder 3} may not hold at any feasible power allocation, so the joint power allocation and rate adoption is necessary to achieve the maximum sum-rate which is still lower than the users sum-capacity. 
There are a number of research studies in multi-cell NOMA with suboptimal CNR-based decoding order while enforcing users to achieve their channel capacity, i.e., 
$\tilde{R}_{b,i} = \log_2\left( 1+ \frac{p_{b,i} h_{b,i}}{\sum\limits_{j \in \Phi_{b,i}} p_{b,j} h_{b,i} + (I_{b,i} (\boldsymbol{p}_{-b}) + \sigma^2_{b,i})} \right),~\forall b \in \mathcal{B},~i \in \mathcal{U}_b$ by imposing the SIC necessary condition on power allocation. To guarantee that user $i \in \mathcal{U}_b$ achieves its Shannon's capacity for decoding its desired signal after SIC, the condition in \eqref{SIC nec cond} should be satisfied for each user $k \in \Phi_{b,i}$. For any fixed $\boldsymbol{\lambda}$, so fixed $\Phi_{b,i}$, \eqref{SIC nec cond} can be rewritten as
\begin{equation}\label{SIC nec ind pb}
	\frac{h_{b,i}}{ I_{b,i}(\boldsymbol{\alpha}_{-b}) + \sigma^2_{b,i}} \leq \frac{h_{b,k}}{I_{b,k}(\boldsymbol{\alpha}_{-b}) + \sigma^2_{b,k}},~\forall b \in \mathcal{B},~i,k \in \mathcal{U}_b,~k \in \Phi_{b,i}.
\end{equation}
Actually, constraint \eqref{SIC nec ind pb}, known as SIC necessary condition \cite{7812683,8114362,8353846}, states that in the FRPA scheme, the feasible region of 
$\boldsymbol{\alpha}_{-b}$ is limited such that $\tilde{h}_{b,i}(\boldsymbol{\alpha}_{-b})\leq \tilde{h}_{b,k}(\boldsymbol{\alpha}_{-b})$ always holds, in which $\tilde{h}_{b,l}(\boldsymbol{\alpha}_{-b})=\frac{h_{b,l}}{I_{b,l}(\boldsymbol{\alpha}_{-b}) + \sigma^2_{b,l}},~l=i,k$. According to Corollary \ref{corollary optorder}, the feasible region of $\boldsymbol{\alpha}_{-b}$ is limited such that the decoding order $\boldsymbol{\lambda}$ remains optimal. The FRPA problem is formulated by
\begin{subequations}\label{SRR problem}
	\begin{align}\label{obf SRR problem}
		\text{FRPA}:~\max_{ \boldsymbol{p} \geq 0 }\hspace{.0 cm}	
		~~ & \sum\limits_{b \in \mathcal{B}} \sum\limits_{i \in \mathcal{U}_b} \tilde{R}_{b,i}(\boldsymbol{p})
		\\
		\text{s.t.}~~&\eqref{Constraint max power},~\eqref{SIC nec ind pb},
		\\
		\label{Constraint QoS Rtilde}
		& \tilde{R}_{b,i} (\boldsymbol{p}) \geq R^{\text{min}}_{b,i},~\forall b \in \mathcal{B},~i \in \mathcal{U}_b.
	\end{align}
\end{subequations}
\begin{remark}\label{remark equivalent problem}
	For any fixed $\boldsymbol{\lambda}$, problem \eqref{SRR problem} is identical to the main problem \eqref{centg problem} with SIC necessary condition \eqref{SIC nec ind pb}.
\end{remark}
\begin{proof}
	As is mentioned above, \eqref{SIC nec ind pb} is to guarantee that the given decoding order $\boldsymbol{\lambda}$ remains optimal by limiting the feasible region of $\boldsymbol{\alpha}_{-b}$. Therefore, when \eqref{SIC nec ind pb} is imposed to the main problem \eqref{centg problem} with the given $\boldsymbol{\lambda}$, Corollary $2$ holds. It means that, $R_{b,i} (\boldsymbol{p}) = \tilde{R}_{b,i} (\boldsymbol{p}),~\forall b \in \mathcal{B},~i \in \mathcal{U}_b$. Subsequently, the objective function \eqref{obf centg problem} can be transformed to $\sum\limits_{b \in \mathcal{B}} \sum\limits_{i \in \mathcal{U}_b} \tilde{R}_{b,i}(\boldsymbol{p})$ formulated in \eqref{obf SRR problem}. For any fixed $\boldsymbol{\lambda}$, the constraints \eqref{Constraint uniqe order}-\eqref{Constraint own decoding} will be removed from \eqref{centg problem}. Therefore, the resulting problem of \eqref{centg problem} has the same objective function and also the same constraints as in \eqref{SRR problem}, so these problems are identical.
\end{proof}
\begin{remark}\label{remark FRPA optimal}
	Assume that $\boldsymbol{\lambda}$ is fixed. For any given feasible $\boldsymbol{\alpha}$ satisfying \eqref{SIC nec ind pb}, the optimal powers of \eqref{SRR problem} can be obtained by using Proposition \ref{Propos optpower}.
\end{remark}
\begin{proof}
	According to Remark \ref{remark equivalent problem}, we proved that \eqref{SRR problem} is identical to \eqref{centg problem} with SIC necessary condition \eqref{SIC nec ind pb}. Moreover, it is stated that $\boldsymbol{\lambda}$ is optimal in the feasible region of $\boldsymbol{\alpha}$ satisfying \eqref{SIC nec ind pb}. It means that with \eqref{SIC nec ind pb}, we guarantee that $\tilde{h}_{b,i}(\boldsymbol{\alpha}_{-b})\leq \tilde{h}_{b,k}(\boldsymbol{\alpha}_{-b})$ independent from $\boldsymbol{p}_b$. According to Appendix \ref{appendix optpower}, for any given $\boldsymbol{\alpha}$ satisfying \eqref{SIC nec ind pb}, problem \eqref{SRR problem} can be equivalently divided into $B$ single-cell NOMA sub-problems. In each cell $b$, we find $\boldsymbol{p}^*_b$ by solving \eqref{cell problem 2}, and the proof is completed.
\end{proof}
The problem \eqref{SRR problem} is strongly NP-hard, since for $|\mathcal{U}_b|=1,~\forall b \in \mathcal{B}$, the multi-cell NOMA network is identical to the point-to-point interference-limited network with $B$ transmitter-receiver pairs (see Corollary \ref{corol NPhard}). Besides, the optimal $\boldsymbol{\alpha}^*$ in \eqref{SRR problem} should satisfy the following necessary and sufficient conditions: 1) It results maximum sum-rate: $\sum\limits_{b \in \mathcal{B}} R^\text{tot}_b(\boldsymbol{\alpha}^*) \geq \sum\limits_{b \in \mathcal{B}} R^\text{tot}_b(\boldsymbol{\alpha})$, for any feasible $\boldsymbol{\alpha}$, where $R^\text{tot}_b(\boldsymbol{\alpha})=\sum\limits_{i=1}^{M-1} R^{\text{min}}_{b,i} + \log_2\left( 1+p^*_{b,M}(\boldsymbol{\alpha}) \tilde{h}_{b,M} \right)$; 2) SIC necessary conditions: $\frac{h_{b,i}}{ I_{b,i}(\boldsymbol{\alpha}^*_{-b}) + \sigma^2_{b,i}} \leq \frac{h_{b,k}}{I_{b,k}(\boldsymbol{\alpha}^*_{-b}) + \sigma^2_{b,k}},~\forall b \in \mathcal{B},~i,k \in \mathcal{U}_b,~k \in \Phi_{b,i}$. Therefore, similar to Alg. \ref{Alg global}, we perform a greedy search on the possible values of $\boldsymbol{\alpha}$ to find optimal, or more precisely, $\epsilon$-suboptimal $\boldsymbol{\alpha}^*$ such that the latter two conditions are satisfied. The pseudo code of the proposed algorithm for solving the FRPA problem is presented in Alg. \ref{Alg power FRPA}.
\begin{algorithm}[tp]
	\caption{Optimal Power Allocation for Fixed SIC Ordering and Rate Region.} \label{Alg power FRPA}
	\begin{algorithmic}[1]
		\STATE Initialize the step size $\epsilon_\alpha=\frac{1}{S_\alpha} \ll 1$, where $S_\alpha$ is the number of samples for each $\alpha_b$, and $R_\text{tot}=0$.
		\FOR {each sample $\boldsymbol{\hat{\alpha}}$}
		\STATE Update $\tilde{h}_{b,i}=\frac{h_{b,i}}{\hat{I}_{b,i} + \sigma_{b,i}},~\forall b \in \mathcal{B},~i \in \mathcal{U}_b$, where 
		$\hat{I}_{b,i}=\sum\limits_{\hfill j \in \mathcal{B} \hfill\atop  j \neq b} \hat{\alpha}_j P^{\text{max}}_j h_{j,b,i}$.
		\\
		\STATE\textbf{if}~~$\left(\frac{h_{b,i}}{ I_{b,i} + \sigma^2_{b,i}} \leq \frac{h_{b,k}}{I_{b,k} + \sigma^2_{b,k}},~\forall b \in \mathcal{B},~i,k \in \mathcal{U}_b,~k \in \Phi_{b,i}\right)$~\textbf{then}
		\\
		\STATE~~~~~Find $\boldsymbol{p}$ according to \eqref{optpow i} and \eqref{optpow M}.
		\\
		\STATE~~~~~\textbf{if}~~$\left(\sum\limits_{b \in \mathcal{B}} \sum\limits_{i \in \mathcal{U}_b} R_{b,i}(\boldsymbol{p}) > R_\text{tot}\right)$~\textbf{then}
		\\~~~~~~~~~~Update $R^*_\text{tot}=\left(\sum\limits_{b \in \mathcal{B}} \sum\limits_{i \in \mathcal{U}_b} R_{b,i}(\boldsymbol{p})\right)$, and $\boldsymbol{p}^*=\boldsymbol{p}$.
		\\~~~~~\textbf{end if}
		\\
		\STATE\textbf{end if}
		\ENDFOR
		\STATE The output $\boldsymbol{p}^{*}$ is the optimal solution of \eqref{SRR problem}.
	\end{algorithmic}
\end{algorithm}
Although our proposed FRPA algorithm scales well with any number of multiplexed users, it is still exponential in the number of BSs (see Subsection \ref{subsection optimal}). One solution is to apply the well-known suboptimal power allocation based on the sequential programming method proposed in \cite{7862919,7812683,8114362}.
\begin{fact}\label{Propos negative SIC}
	In multi-cell NOMA, the SIC necessary condition \eqref{SIC nec ind pb} for each user pair $i,k \in \mathcal{U}_b,~k \in \Phi_{b,i}$ implies additional maximum power constraints on the BSs in $\mathcal{B}\setminus \{b\}$.
\end{fact}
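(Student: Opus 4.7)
The plan is to start from the SIC necessary condition \eqref{SIC nec ind pb} for a fixed user pair $i,k \in \mathcal{U}_b$ with $k \in \Phi_{b,i}$, clear denominators (both sides are strictly positive since $h_{b,l}>0$ and $I_{b,l}+\sigma^2_{b,l}>0$), and then substitute the explicit linear expression $I_{b,l}(\boldsymbol{\alpha}_{-b})=\sum_{j\in\mathcal{B}\setminus\{b\}} \alpha_j P^{\text{max}}_j h_{j,b,l}$ for $l\in\{i,k\}$. After rearranging, the condition reduces to a linear inequality in the interfering power coefficients $\{\alpha_j\}_{j\neq b}$ of the form
\begin{equation}\label{eq:plan_linear}
\sum_{j\in\mathcal{B}\setminus\{b\}} \alpha_j P^{\text{max}}_j \bigl(h_{b,i}\,h_{j,b,k}-h_{b,k}\,h_{j,b,i}\bigr) \;\leq\; h_{b,k}\,\sigma^2_{b,i}-h_{b,i}\,\sigma^2_{b,k}.
\end{equation}
This is exactly the form of a joint linear upper bound on the transmit powers of the neighboring BSs.

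Next I would interpret \eqref{eq:plan_linear} to show that it genuinely \emph{adds} feasibility restrictions beyond \eqref{Constraint max power}. Since $\sum_{l\in\mathcal{U}_j}p_{j,l}=\alpha_j P^{\text{max}}_j$ by definition of $\alpha_j$, the inequality is a weighted sum constraint on the total transmit powers $\{\sum_{l\in\mathcal{U}_j}p_{j,l}\}_{j\neq b}$. Whenever the coefficient $h_{b,i}h_{j,b,k}-h_{b,k}h_{j,b,i}$ is strictly positive for at least one $j\in\mathcal{B}\setminus\{b\}$ (equivalently, $j\in\mathcal{Q}_{b,i,k}$ as defined in Theorem \ref{Theorem SIC M-cell positiveterm}), the left-hand side is strictly increasing in $\alpha_j$, so \eqref{eq:plan_linear} carves out a proper subset of the box $\{0\leq \alpha_j\leq 1\}_{j\neq b}$. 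In particular, the maximum budget vector $\alpha_j=1$ for all $j\neq b$ can violate \eqref{eq:plan_linear}, which is precisely the claim that the SIC necessary condition imposes an extra cap on the interfering BSs' power consumption.

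Finally, I would aggregate the inequalities \eqref{eq:plan_linear} over every pair $i,k\in\mathcal{U}_b$ with $k\in\Phi_{b,i}$ and every cell $b\in\mathcal{B}$ to obtain the full system of additional affine constraints on $\boldsymbol{\alpha}_{-b}$ (equivalently, on the per-BS sum powers), which completes the argument. The main obstacle, which is really only a matter of care, is handling the sign of the coefficient on the left-hand side of \eqref{eq:plan_linear}: when the coefficient is nonpositive for all $j\neq b$ and the right-hand side is nonnegative, \eqref{eq:plan_linear} is automatically satisfied and does not restrict $\boldsymbol{\alpha}_{-b}$ further; hence one must explicitly point out (using the set $\mathcal{Q}_{b,i,k}$ from Theorem \ref{Theorem SIC M-cell positiveterm}) that the ``additional constraint'' is active precisely when at least one such coefficient is positive, which is the generic case in a dense deployment.
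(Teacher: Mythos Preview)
Your proposal is correct and follows essentially the same route as the paper: both clear denominators in \eqref{SIC nec ind pb}, substitute the ICI expression, and arrive at the same linear inequality in the interfering BSs' total powers (your \eqref{eq:plan_linear} is the paper's inequality up to the noise normalization). The only presentational difference is that the paper then solves the joint inequality for each individual BS $j\in\mathcal{B}\setminus\{b\}$ to obtain an explicit per-BS cap $\Psi_{j,b,i,k}(\boldsymbol{p}_{-(j,b)})$ and combines it with \eqref{Constraint max power} as $\sum_{l\in\mathcal{U}_b}p_{b,l}\leq\min\{P^{\text{max}}_b,\min_{j,i,k}\Psi_{b,j,i,k}\}$, whereas you keep the constraint in its joint linear form and instead discuss when it is active via the sign set $\mathcal{Q}_{b,i,k}$; your sign analysis is in fact more careful than the paper's, which divides by $H_{j,b,i,k}$ without commenting on its sign.
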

\begin{proof}
	Please see Appendix \ref{appendix negative SIC}.
\end{proof}
\begin{corollary}\label{corol negative SIC 1}
	Restricting the rate region of users under the suboptimal decoding order results in additional limitations on the BSs power consumption. For the user pairs satisfying the SIC sufficient condition, the decoding order is optimal, and subsequently the SIC constraint \eqref{SIC nec ind pb} will be completely independent from the power allocation (see Corollary \ref{corol optorder 3}). As a result, the negative side impact of the SIC necessary condition on power allocation will be eliminated.
\end{corollary}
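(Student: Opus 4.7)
The plan is to prove the three-part claim sequentially, showing that (i) the FRPA restriction shrinks the feasible BS power region, (ii) Theorem \ref{Theorem SIC M-cell positiveterm} removes this shrinkage on a per-pair basis, and (iii) the two observations combine to eliminate the overhead entirely.

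First I would note that the opening sentence is essentially a restatement of Fact \ref{Propos negative SIC}. Specifically, the SIC necessary condition \eqref{SIC nec ind pb} takes the ratio form $\tilde{h}_{b,i}(\boldsymbol{\alpha}_{-b}) \leq \tilde{h}_{b,k}(\boldsymbol{\alpha}_{-b})$, which, after clearing denominators and substituting $I_{b,l}(\boldsymbol{\alpha}_{-b}) = \sum_{j \neq b} \alpha_j P^{\text{max}}_j h_{j,b,l}$, reduces to a linear inequality in the interfering-cell coefficients $\{\alpha_j\}_{j \neq b}$. By Fact \ref{Propos negative SIC}, this yields extra upper-bound constraints on the $\alpha_j$'s, so the FRPA feasible set is strictly smaller than that of JSPA/JRPA whenever the inequality is not already trivially satisfied. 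This establishes the first sentence of the statement.

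The core of the argument is the second claim. I would invoke Theorem \ref{Theorem SIC M-cell positiveterm}: if a pair $(i,k)$ with $h_{b,k}/\sigma_{b,k} \geq h_{b,i}/\sigma_{b,i}$ satisfies the SIC sufficient condition \eqref{suff cond}, then the decoding order $k \to i$ is optimal independent of $\boldsymbol{\alpha}_{-b}$. I would translate ``optimal'' through Corollary \ref{corollary optorder} into $\tilde{h}_{b,i}(\boldsymbol{\alpha}_{-b}) \leq \tilde{h}_{b,k}(\boldsymbol{\alpha}_{-b})$ holding for every feasible $\boldsymbol{\alpha}_{-b}$. Hence, for such a pair, the SIC necessary condition \eqref{SIC nec ind pb} is guaranteed by the channel data alone and involves no $\alpha_j$ as a binding restriction---it becomes vacuous as an explicit constraint, exactly matching the wording ``completely independent from the power allocation.''

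Finally I would combine the two threads: the ``additional maximum power constraints'' produced in Fact \ref{Propos negative SIC} arise precisely from those user pairs $(i,k)$ for which \eqref{SIC nec ind pb} is not automatically satisfied. For every pair in $\Phi^{\text{CNR}}_{b,i}$, the inequality reduces to a tautology and contributes no restriction to the BS power feasible region, so the corresponding term in the derivation of Fact \ref{Propos negative SIC} vanishes; when this holds for every pair, the negative side impact disappears entirely. The main obstacle I anticipate is purely expository rather than mathematical---namely, pasting together the feasibility-reduction side (Fact \ref{Propos negative SIC}) with the tautology side (Theorem \ref{Theorem SIC M-cell positiveterm}) into a single clean equivalence, since both facts are already available. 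The corollary therefore follows without any new calculation.
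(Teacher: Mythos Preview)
Your proposal is correct and mirrors the paper's own reasoning: the paper states this corollary without an explicit proof, treating it as a direct consequence of Fact~\ref{Propos negative SIC} (the SIC necessary condition imposes extra power constraints) together with Theorem~\ref{Theorem SIC M-cell positiveterm} (the sufficient condition renders \eqref{SIC nec ind pb} vacuous for that pair). Your three-part decomposition---restating Fact~\ref{Propos negative SIC}, invoking Theorem~\ref{Theorem SIC M-cell positiveterm} via Corollary~\ref{corollary optorder} to show \eqref{SIC nec ind pb} becomes a tautology, and concluding that the induced power restriction disappears---is exactly the intended chain, and no additional calculation is required.
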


The feasible solution for \eqref{SRR problem} can be obtained by solving the following total power minimization problem as
\begin{equation}\label{SRR powermin problem}
	\min_{ \boldsymbol{p} \geq 0}\hspace{.0 cm}	
	~\sum\limits_{b \in \mathcal{B}} \sum\limits_{i \in \mathcal{U}_b} p_{b,i}~~~~~~~~
	\text{s.t.}~\eqref{Constraint max power},~\eqref{SIC nec ind pb},~\eqref{Constraint QoS Rtilde}.
\end{equation}
Problem \eqref{SRR powermin problem} is a linear program which can be solved by using the Dantzig’s simplex method.

\subsection{Decentralized Resource Management Frameworks}\label{subsection Decent}
\subsubsection{Fully Distributed JSPA Framework}\label{subsubsection full dist}
Although the globally optimal solution in Subsection \ref{subsection optimal} achieves the channel capacity of users, the complexity of Alg. \ref{Alg global} is still exponential in the number of BSs. Moreover, the centralized framework would cause a large signaling overhead. Here, we propose a fully distributed resource allocation framework in which each BS independently allocates power to its associated users. Actually, we divide the main problem \eqref{centg problem} into $B$ single-cell NOMA problems. 
\begin{fact}\label{Propos singlecell maxpow}
	At the optimal point of sum-rate maximization problem of single-cell NOMA, the BS operates in its maximum available power. It means that the power constraint \eqref{Constraint max power} is active for each BS $b$, i.e., $\sum\limits_{i \in \mathcal{U}_b} p^*_{b,i} = P^{\text{max}}_b,~\forall b \in \mathcal{B}$ in the fully distributed framework.
\end{fact}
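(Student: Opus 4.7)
The plan is to argue by contradiction using a one-coordinate monotonicity trick. In the fully distributed framework the JSPA problem decouples across cells, so I would fix an arbitrary cell $b$ and work entirely inside cell $b$'s sub-problem. The key structural observation, specific to the single-cell setting, is that $I_{b,k}=0$ and the noise variance $\sigma^2_{b,k}$ is constant, so every SINR in cell $b$ depends only on the powers inside cell $b$.

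I would index the users of cell $b$ according to the CNR-based decoding order ($k\to i$ iff $k>i$), which by Corollary \ref{corollary optorder} is optimal in the single-cell case, so that user $1$ has the lowest decoding order. This gives $\Phi_{b,1}=\{2,\dots,M\}$ and, crucially, $1 \notin \Phi_{b,j}$ for every $j \geq 2$. Now suppose for contradiction that $\sum_{i \in \mathcal{U}_b} p_{b,i}^* < P^{\text{max}}_b$ at the optimum, and consider the perturbation $p_{b,1}^* \mapsto p_{b,1}^* + \delta$ with $\delta>0$ small enough to preserve feasibility of \eqref{Constraint max power}. From the SINR formula $\gamma_{b,i,k}$ in Section \ref{sec system model}, $p_{b,1}$ enters only the numerator of $\gamma_{b,1,k}$ for $k\in\{1\}\cup\Phi_{b,1}$ (its denominator collects INI from $\Phi_{b,1}$, which excludes user $1$); whereas for every $j\ge 2$ and every $k\in\{j\}\cup\Phi_{b,j}$, $p_{b,1}$ does not appear in $\gamma_{b,j,k}$ at all because $1\notin \Phi_{b,j}$. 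Consequently, $R_{b,1}$ strictly increases while every $R_{b,j}$ for $j\ge 2$ stays exactly the same, so the perturbed allocation is feasible (the minimum-rate constraint for user $1$ becomes slacker and those for the other users are untouched) and yields a strictly larger sum-rate, contradicting optimality. Hence \eqref{Constraint max power} must hold with equality in every cell.

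There is no serious obstacle here; the only subtlety is choosing the right coordinate to perturb. Perturbing a generic power---say the cluster-head power $p_{b,M}$---would inflate the INI of every weaker user, so the sign of the net change in the sum-rate would require a non-trivial calculation. The lowest-decoding-order user is special precisely because its power never appears in any INI term of the cell, which makes the monotonicity of each $R_{b,1,k}$ in $p_{b,1}$ immediate and sidesteps any global concavity/convexity analysis of the sum-rate.
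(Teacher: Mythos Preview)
Your proposal is correct and follows essentially the same approach as the paper's own proof: both single out the user with the lowest decoding order (user~$1$), observe that its power $p_{b,1}$ appears in no INI term of any other user in the cell, and conclude that increasing $p_{b,1}$ raises the sum-rate while leaving all other constraints intact, forcing \eqref{Constraint max power} to be active. One minor remark: in the fully distributed framework the ICI $I_{b,k}$ is not literally zero but is a \emph{fixed constant} with respect to $\boldsymbol{p}_b$ (the other cells' powers are treated as given); this does not affect your argument, since all that matters is that $I_{b,k}$ does not depend on $p_{b,1}$.
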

\begin{proof}
	Please see Appendix \ref{appendix Propos singlecell maxpow}.
\end{proof}
According to Fact \ref{Propos singlecell maxpow}, we set $\alpha^*_b=1,~\forall b \in \mathcal{B}$. Based on the given $\boldsymbol{\alpha}^*$, the optimal decoding order of users under the fully distributed framework can be easily obtained by Corollary \ref{corollary optorder}. According to Subsection \ref{subsection optimal}, the optimal power $\boldsymbol{p}^*$ under the fully distributed framework can be obtained by using Proposition \ref{Propos optpower}.

\subsubsection{Semi-Centralized JSPA Framework}\label{Subsubsection semicent sum-rate}
The fully distributed framework works well for the case that the ICI levels are significantly low, so $\alpha^*_b \to 1,~\forall b \in \mathcal{B}$ in problem \eqref{centg problem}. For the case that the ICI levels are high, e.g., at femto-cells underlying a single MBS \cite{8325426,8809359,7954630}, the fully distributed framework may seriously degrade the spectral efficiency of femto-cell users. In the following, we propose a semi-centralized JSPA framework in which we assume that the low-power FBSs operate in their maximum power, while the MBS power consumption is obtained by the joint power allocation and SIC ordering of all the users. Let $b=1$ be the MBS's index, and $b=2,\dots,B$ be the index of FBSs. In this framework, we assume that $\alpha_{b}=1,~\forall b=2,\dots,B$. Then, we utilize Alg. \ref{Alg global} to find the globally optimal $\boldsymbol{p}^*$ and $\boldsymbol{\lambda}^*$ for problem \eqref{centg problem}. 
This algorithm performs a greedy search on $0\leq\alpha_1\leq1$. Hence, the complexity of finding optimal JSPA is on the order of total number of samples for $\alpha_1$, i.e., $S_\alpha$. Actually, the computational complexity of this algorithm is independent from the number of FBSs and users which is a good solution for the large-scale systems. The performance of the semi-centralized framework depends on the FBSs optimal power consumption in \eqref{centg problem}. For the case that $\alpha^*_{b} \to 1,~\forall b=2,\dots,B$ in \eqref{centg problem}, the performance gap between the semi-centralized and centralized frameworks tends to zero.

\subsection{Computational Complexity Comparison Between Resource Allocation Algorithms}\label{Subsection complexityanalysis}
In this section, we compare the computational complexity of the proposed different resource allocation algorithms for solving the sum-rate maximization problem \eqref{centg problem}. To simplify the analysis, we assume that each cell has $M$ users. In this comparison, we apply the barrier method with inner Newton's method to achieve an $\epsilon$-suboptimal solution for a convex problem. The number of barrier iterations required to achieve $\frac{m}{t}=\epsilon$-suboptimal solution is exactly $\Upsilon=\lceil \frac{\log\left(m/\left(\epsilon t^{(0)}\right)\right)}{\log \mu} \rceil$, where $m$ is the total number of inequality constraints, $t^{(0)}$ is the initial accuracy parameter for approximating the functions in inequality constraints in standard form, and $\mu$ is the step size for updating the accuracy parameter $t$ \cite{Boydconvex}. The number of inner Newton's iterations at each barrier iteration $i$ is denoted by $N_i$. In general, $N_i$ depends on $\mu$ and how good is the initial points at the barrier iteration $i$. The computational complexity of solving a convex problem is thus on the order of total number of Newton's iterations obtained by $C_\text{cnvx}=\sum\limits_{i=1}^{\Upsilon} N_i$. For the case that the sequential programming converges in $Q$ iterations, the complexity of this method is on the order of $C_\text{SP}=\sum\limits_{q=1}^{Q} \sum\limits_{i=1}^{\Upsilon} N_{q,i}$, where $N_{q,i}$ denotes the number of inner Newton's iterations at the $i$-th barrier iteration of the $q$-th sequential iteration.
For convenience, assume that $N_{q,i}=N,~\forall q=1,\dots,Q,~i=1,\dots,\Upsilon$. Hence, we have $C_\text{cnvx}=N\Upsilon= N \lceil \frac{\log\left(m/\left(\epsilon t^{(0)}\right)\right)}{\log \mu} \rceil$, and $C_\text{SP}=QN\Upsilon=QN\lceil \frac{\log\left(m/\left(\epsilon t^{(0)}\right)\right)}{\log \mu} \rceil$. The complexity order of different solution algorithms for \eqref{centg problem} is presented in Table \ref{table complexity}.
\begin{table}[tp]
	\caption{Computational Complexity of Resource Allocation Algorithms for Solving the Sum-Rate Maximization Problem.}
	\begin{center} \label{table complexity}
		\begin{adjustbox}{width=\columnwidth,center}
			\scalebox{1}{\begin{tabular}{|c|c|c|c|c|c|c|c|}
					\hline \rowcolor[gray]{0.890}
					\textbf{Algorithm} & \textbf{Complexity} & \textbf{Framework} & \textbf{Optimal} & $\boldsymbol{\lambda}$ & $\boldsymbol{r}$ & $\boldsymbol{\alpha}$ & $\boldsymbol{p}$  \\
					\hline
					\rowcolor[gray]{0.940}
					JSPA (Alg. \ref{Alg global}) & $S^B$ & Centralized & \checkmark & \checkmark & \checkmark & \checkmark & \checkmark \\
					\hline
					\rowcolor[gray]{0.945}
					Exhaustive Search & $(M!)^B \times S^{BM}$ & Centralized & \checkmark & \checkmark & \checkmark & \checkmark & \checkmark \\
					\hline
					\rowcolor[gray]{0.950}
					JRPA (Alg. \ref{Alg JRPA sequential}) & $QN\left\lceil \frac{\log\left(\left(B+BM+B{M-1\choose 2}\right)/\left(\epsilon t^{(0)}\right)\right)}{\log \mu} \right\rceil$ & Centralized & $\times$  & $\times$ & \checkmark & \checkmark & \checkmark \\
					\hline
					\rowcolor[gray]{0.955}
					FRPA & $S^B$ & Centralized & \checkmark  & $\times$ & $\times$ & \checkmark & \checkmark \\
					\hline
					\rowcolor[gray]{0.960}
					Monotonic Optimization \cite{7862919,7812683} & $\approx S^{BM}$ & Centralized & \checkmark & $\times$ & $\times$ & \checkmark & \checkmark \\
					\hline
					\rowcolor[gray]{0.965}
					Sequential Program \cite{7862919,7812683} & $QN\left\lceil \frac{\log\left(\left(B+BM+B{M-1\choose 2}\right)/\left(\epsilon t^{(0)}\right)\right)}{\log \mu} \right\rceil$ & Centralized & $\times$  & $\times$ & $\times$ & \checkmark & \checkmark \\
					\hline
					\rowcolor[gray]{0.970}
					Subsection \ref{Subsubsection semicent sum-rate} & $S$ & Semi-Centralized & \checkmark  & \checkmark & \checkmark & $\alpha_1$ & \checkmark \\
					\hline
					\rowcolor[gray]{0.975}
					Subsection \ref{subsubsection full dist} & $1$ & Fully Distributed & \checkmark  & \checkmark & \checkmark & $\times$ & \checkmark \\
					\hline
			\end{tabular}}
		\end{adjustbox}
	\end{center}
\end{table}
In this table, $S$ denotes the number of samples for each $p_{b,i}$ or $\alpha_{b}$. Moreover, the optimality status is for only the simplified problem. For example, FRPA finds the globally optimal powers in $\boldsymbol{p}$, and subsequently $\boldsymbol{\alpha}$, when $\boldsymbol{\lambda}$ and $\boldsymbol{r}$ are fixed. It does not mean that the output is globally optimal solution for the main problem \eqref{centg problem}.
Actually, only the first and second rows in Table \ref{table complexity} can find the globally optimal solution of \eqref{centg problem}, and the rest of the algorithms are indeed suboptimal. It is noteworthy that in Table \ref{table complexity}, we considered the highest possible computational complexity (worst case) of each algorithm.

\section{Numerical Results}\label{sec simulation}
In this section, we evaluate the performance of our proposed resource allocation algorithms via MATLAB Monte Carlo simulations over $10000$ channel realizations. This comparison is divided into two subsections: 1) performance comparison among our proposed JSPA, JRPA, and FRPA algorithms to demonstrate the benefits of optimal SIC ordering, and rate adoption for any suboptimal decoding order (see Table \ref{table complexity}); 2) performance comparison among the centralized and decentralized resource allocation frameworks to demonstrate the effect of optimal $\boldsymbol{\alpha}^*$ (ICI management) in the main problem \eqref{centg problem}. In our simulations, we adopt the commonly-used (suboptimal) CNR-based decoding order \cite{8114362,7812683} for the JRPA and FRPA algorithms.
Finally, we evaluate the convergence of the iterative algorithms for solving \eqref{feasible problem} and \eqref{cent subopt problem}, and the performance of approximated optimal powers in Remark \ref{remark approx power}. The source code is available on GitLab \cite{sourcecode}.

\subsection{Simulation Settings}
Here, we consider a two-tier HetNet consisting of one FBS underlying a MBS\footnote{Although in practice there are larger number of FBSs, in this experiment, we aim to fundamentally investigate the impact of ICI from MBS to FBS and vice versa in optimal decoding order of users.}. Within each cell, there is one BS at the center of a circular area and $\mathcal{U}_b$ users inside it \cite{7964738}. The system parameters and their corresponding notations are shown in Table \ref{Table parameters}.
\begin{table}[tp]
	\centering
	\caption{SYSTEM PARAMETERS}
	\begin{adjustbox}{width=\columnwidth,center}
		\begin{tabular}{c c c || c c c}
			\hline \rowcolor[gray]{0.880}
			\textbf{Parameter} & \textbf{Notation} & \textbf{Value} & \textbf{Parameter} & \textbf{Notation} & \textbf{Value} 
			\\
			\hline \rowcolor[gray]{0.930}
			Coverage of MBS & $\times$ & 500 m  & Lognormal shadowing standard deviation & $\times$ & $8$ dB 
			\\ \rowcolor[gray]{0.935}
			Coverage of FBS & $\times$ & 40 m  & Small-scale fading & $\times$ & Rayleigh fading
			\\ \rowcolor[gray]{0.940}
			Distance between MBS and FBS & $\times$ & 200 m  & AWGN power density & $N_{b,i}$ & -174 dBm/Hz
			\\ \rowcolor[gray]{0.945}
			Number of macro-cell users & $|\mathcal{U}_m|$ & \{2;3;4\}  & MBS transmit power & $P^\text{max}_m$ & 46 dBm
			\\ \rowcolor[gray]{0.950}
			Number of femto-cell users & $|\mathcal{U}_f|$ & \{2;3;4;5;6\}  & FBS transmit power & $P^\text{max}_f$ & 30 dBm
			\\ \rowcolor[gray]{0.955}
			User distribution model & $\times$ & Uniform  & Minimum rate of macro-cell users  & $R^\text{min}_m$ & $\{0.5;1;2\}$ bps/Hz
			\\ \rowcolor[gray]{0.960}
			Minimum distance of users to MBS & $\times$ & 20 m  & Minimum rate of femto-cell users  & $R^\text{min}_f$ &  $\{0.25;0.50;0.75;1;2;3\}$ bps/Hz
			\\ \rowcolor[gray]{0.965}
			Minimum distance of users to FBS & $\times$ & 2 m  & Tolerance of Alg. \ref{Alg opt Mcell powmin} & $\epsilon_\text{tol}$ & $10^{-6}$
			\\ \rowcolor[gray]{0.970}
			Wireless bandwidth & $\times$ & $5$ MHz & Step size of each $\alpha_b \in [0,1]$  & $\epsilon_\alpha$ & $10^{-2}$
			\\ \rowcolor[gray]{0.975}
			MBS path loss & $\times$ & $128.1 + 37.6 \log_{10} (d/\text{Km})$ dB  & Tolerance of Alg. \ref{Alg JRPA sequential} & $\epsilon_s$ & $10^{-1}$
			\\ \rowcolor[gray]{0.975}
			FBS path loss & $\times$ & $140.7 + 36.7 \log_{10} (d/\text{Km})$ dB  & -  & - & -
			\\
			\hline
		\end{tabular}
		\label{Table parameters}
	\end{adjustbox}
\end{table}
The network topology and exemplary users placement are shown in Fig. \ref{FigTopology}.
\begin{figure}
	\centering
	\includegraphics[scale=0.6]{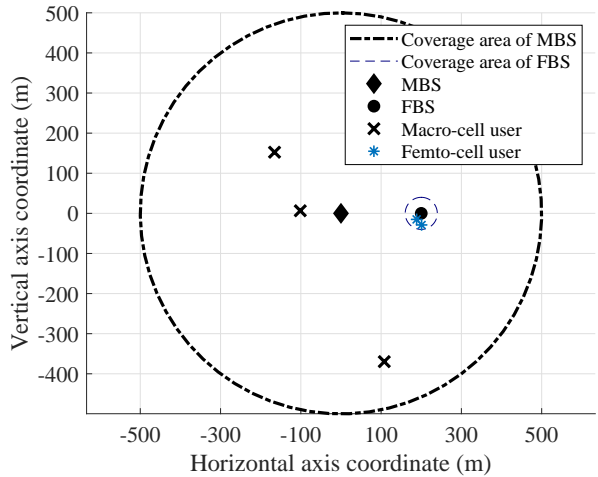}
	\caption{Network topology and exemplary user placement for $|\mathcal{U}_m|=3$, and $|\mathcal{U}_f|=2$.}
	\label{FigTopology}
\end{figure}

\subsection{Performance of Centralized Resource Allocation Algorithms}
In this subsection, we compare the performance in terms of outage probability, and users total spectral efficiency of our proposed JSPA, JRPA, and FRPA algorithms. Note that FRPA finds the globally optimal solution of the sum-rate maximization problem in \cite{8114362} and the single-carrier-based downlink power allocation problem in \cite{7812683} with significantly reduced computational complexity (see Table \ref{table complexity}).
\subsubsection{Outage Probability Performance}
The outage probability of the JSPA, JRPA, and FRPA schemes is obtained based on optimally solving their corresponding total power minimization problems. For each scheme, the outage probability is calculated by dividing the number of infeasible problem
instances by total number of samples \cite{7964738}. Fig. \ref{Fig_outage_cent} shows the impact of order of NOMA clusters and minimum rate demands on the outage probability of the JSPA, JRPA, and FRPA problems. 
\begin{figure}[tp]
	\centering
	\subfigure[Average number of user pairs which cannot satisfy the SIC sufficient condition vs. order of NOMA cluster for the CNR-based decoding order.]{
		\includegraphics[scale=0.52]{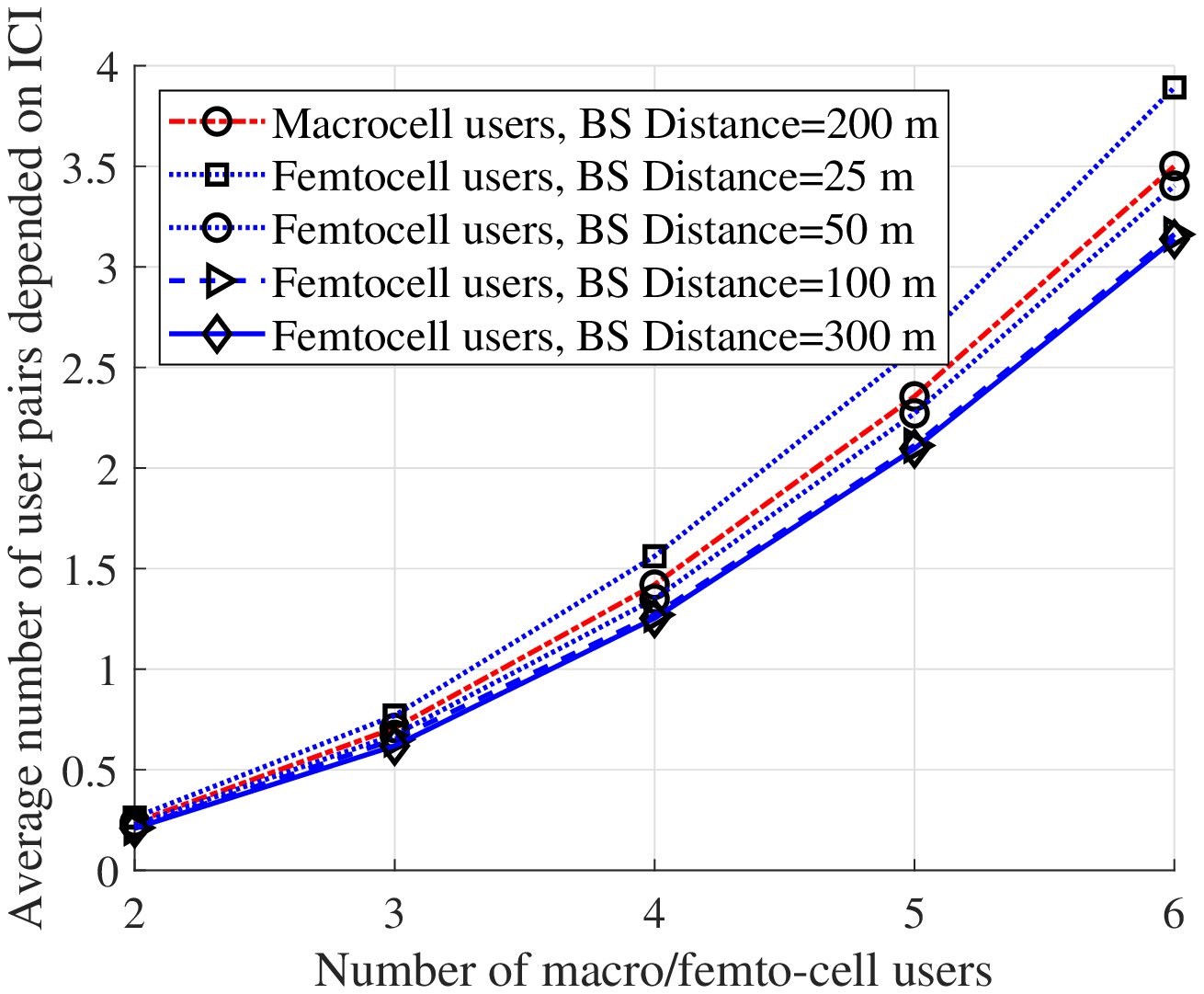}
		\label{Fig_suff_femtoBS}
	}
	\subfigure[Outage probability vs. order of NOMA cluster for $R^\text{min}_m=R^\text{min}_f=1$ bps/Hz.]{
		\includegraphics[scale=0.52]{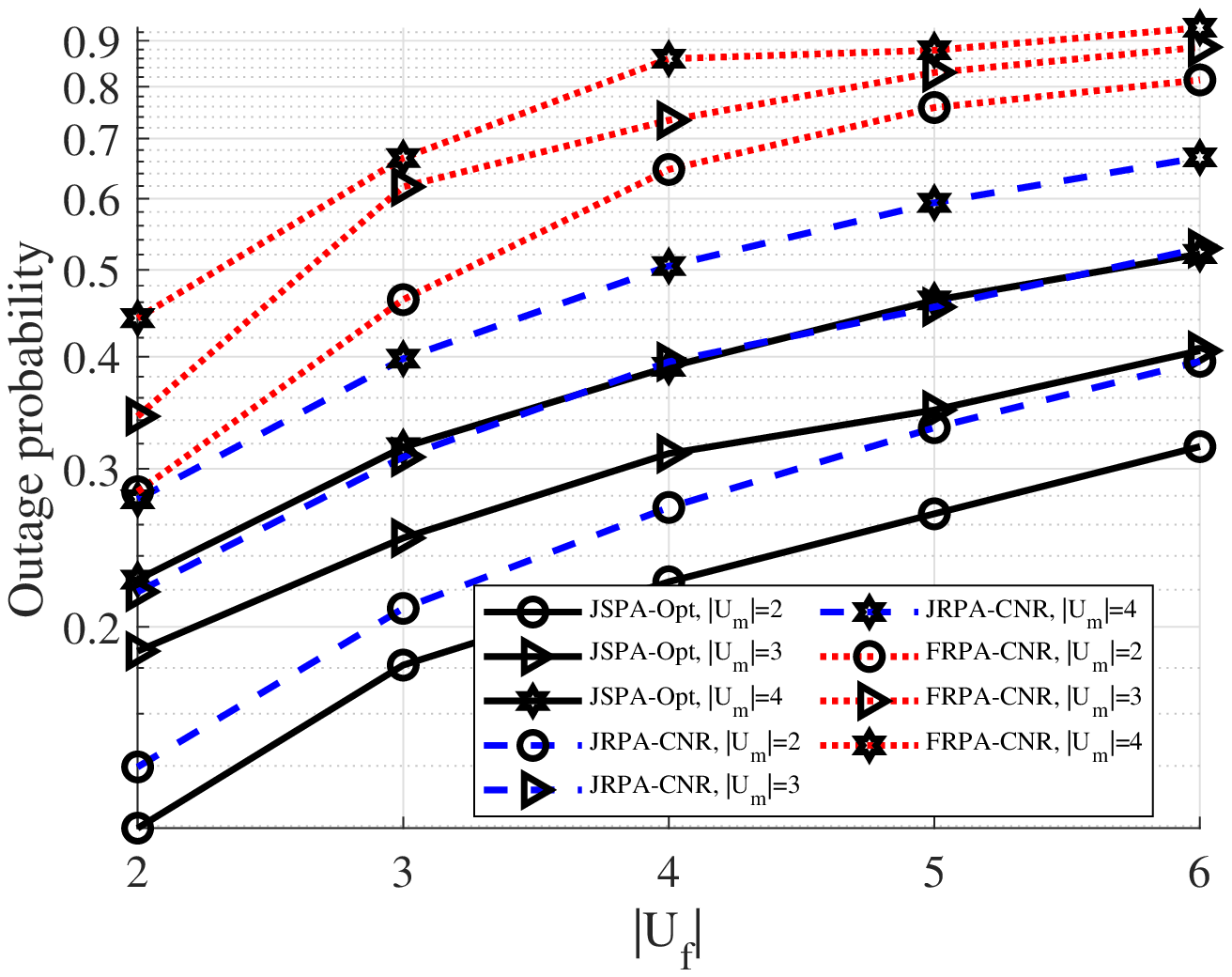}
		\label{Fig_outage_Algs_usernum}
	}
	\subfigure[Outage probability vs. users minimum rate demand for $2$-order NOMA clusters.]{
		\includegraphics[scale=0.52]{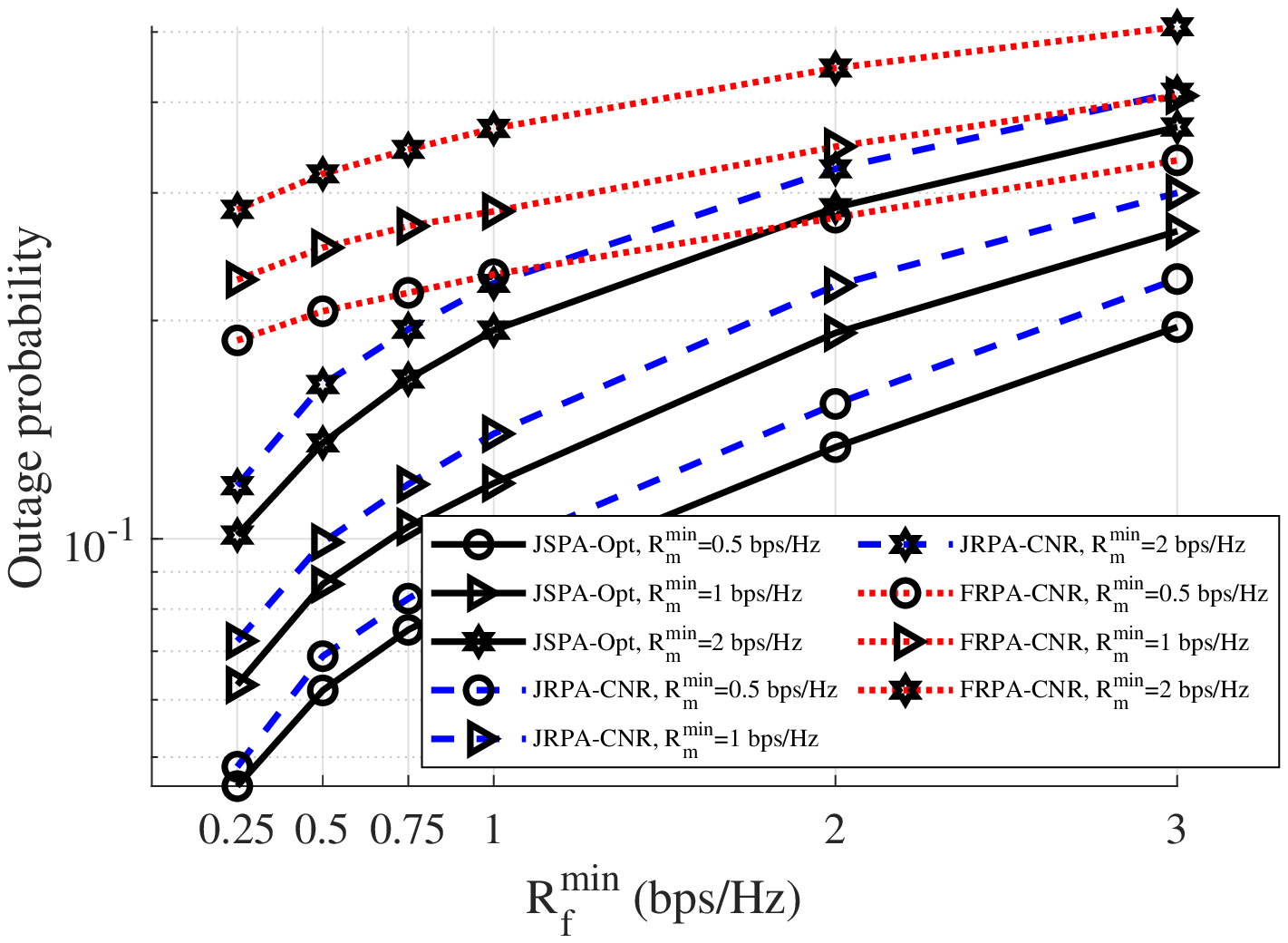}
		\label{Fig_outage_Algs_minrate_2user}
	}
	\subfigure[Outage probability vs. users minimum rate demand for $3$-order NOMA clusters.]{
		\includegraphics[scale=0.52]{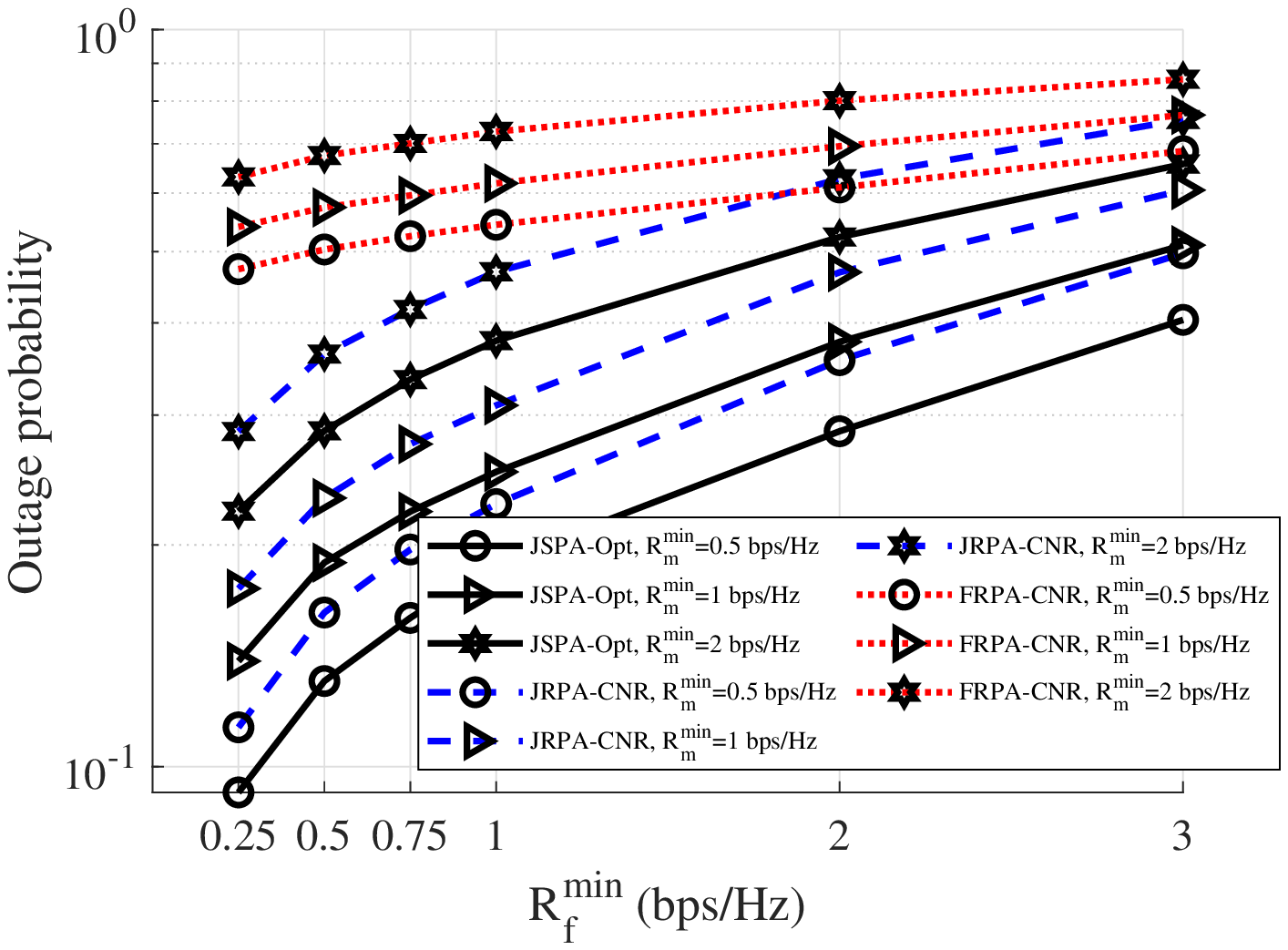}
		\label{Fig_outage_Algs_minrate_3user}
	}
	\caption
	{Outage probability of the centralized JSPA, JRPA, and FRPA algorithms for different number of users and minimum rate demands.}
	\label{Fig_outage_cent}
\end{figure}
According to Theorem \ref{Theorem SIC M-cell positiveterm}, the ICI may affect the optimal decoding order of user pairs which cannot satisfy the SIC sufficient condition. We call these user pairs as the pairs depending on ICI when the CNR-based decoding order is applied. In Fig. \ref{Fig_suff_femtoBS}, we calculate the average number of user pairs depending on ICI (denoted by $\Psi$) for different distances among BSs, and different order of NOMA clusters in the CNR-based decoding order. The parameter $\Psi$ is increased by 1) increasing $|\mathcal{U}_b|$, which inherently decreases the LHS of \eqref{suff cond}; 2) increasing the inter-cell channel gain $h_{j,b,i}$, due to increasing the RHS of \eqref{suff cond}. The second case for the femto-cell user pairs is inversely proportional to the BSs distance, due to the existing path loss. As shown, the impact of $|\mathcal{U}_b|$ is higher than the impact of BSs distance. The wide coverage of macro-cell results in large differences between the users channel gains, however the CNR of macro-cell users is typically low reducing the LHS of \eqref{suff cond}. As a result, $\Psi$ is also affected by $|\mathcal{U}_m|$. It is noteworthy that the position of FBS (BSs distance) in the coverage area of MBS does not have significant impact on $\Psi$ for the macro-cell users. 
For the case that at least one user pair within a cell does not satisfy the SIC sufficient condition, the CNR-based decoding order in that cell may not be optimal, resulting in reduced spectral efficiency and increased outage.

In Figs. \ref{Fig_outage_Algs_usernum}-\ref{Fig_outage_Algs_minrate_3user}, we observe that there exist significant performance gaps between the JSPA and JRPA schemes, which shows the superiority of finding the optimal decoding order in multi-cell NOMA. Besides, JRPA significantly reduces the outage probability of the FRPA scheme which shows the importance of rate adoption when a suboptimal decoding order is applied. The large performance gap between JSPA and FRPA shows that the SIC necessary condition in \eqref{SIC nec ind pb} seriously restricts the feasible region of the FRPA problem (see Fact \ref{Propos negative SIC}) resulting in high outage, specifically for the larger order of NOMA clusters (see Fig. \ref{Fig_outage_Algs_usernum}). Last but not least, in Fig. \ref{Fig_outage_Algs_usernum}, we observe that a larger order of NOMA clusters results in higher outage probability even for the optimal JSPA algorithm.

\subsubsection{Average Total Spectral Efficiency Performance}
Fig. \ref{Fig_sumrate} investigates the impact of order of NOMA clusters and minimum rate remands on the average total spectral efficiency of users. 
\begin{figure}[tp]
	\centering
	\subfigure[Average total spectral efficiency vs. order of NOMA cluster for $R^\text{min}_m=R^\text{min}_f=1$ bps/Hz.]{
		\includegraphics[scale=0.52]{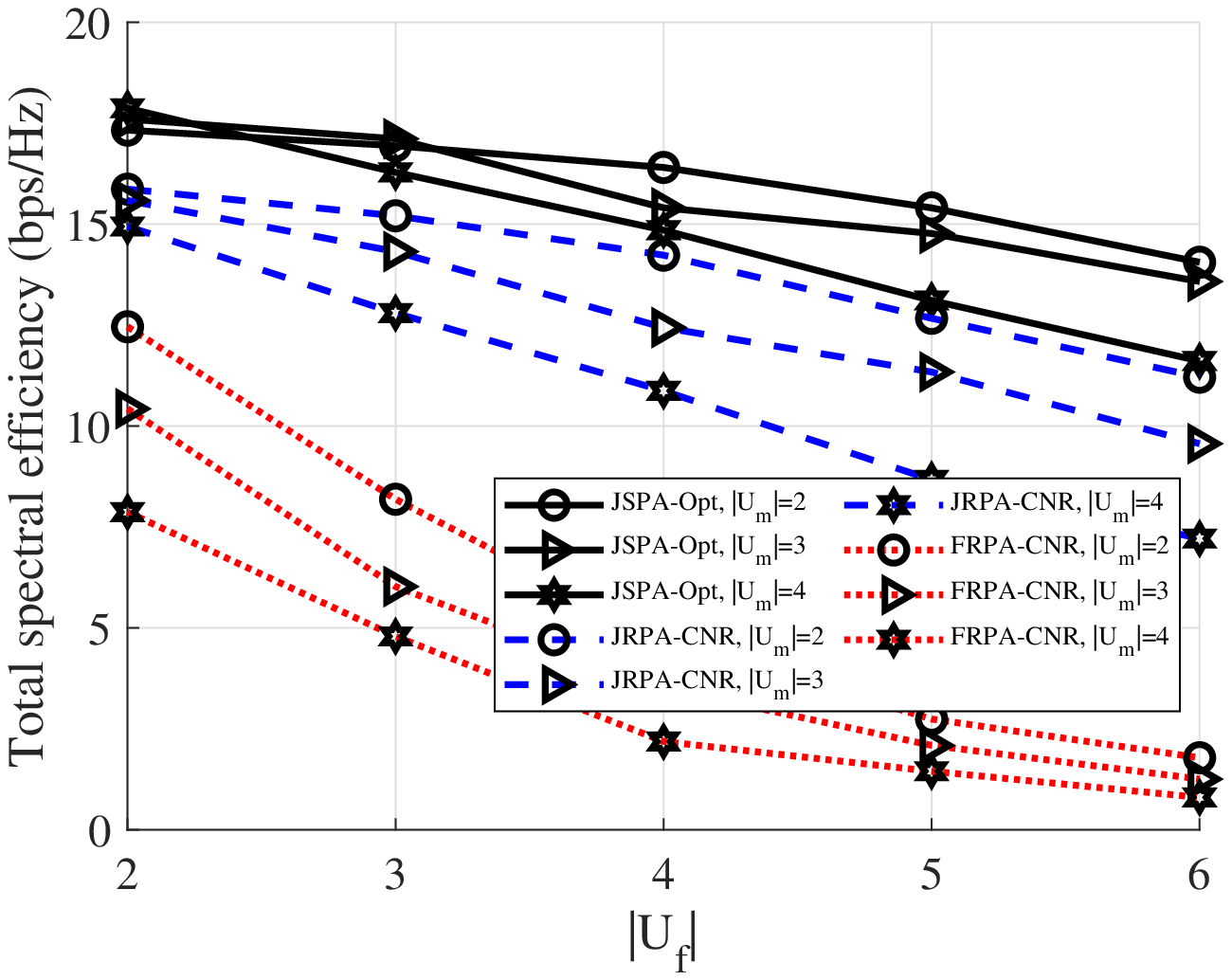}
		\label{Fig_Rtot_usernum}
	}
	\subfigure[Average total spectral efficiency vs. users minimum rate demand for $2$-order NOMA clusters.]{
		\includegraphics[scale=0.52]{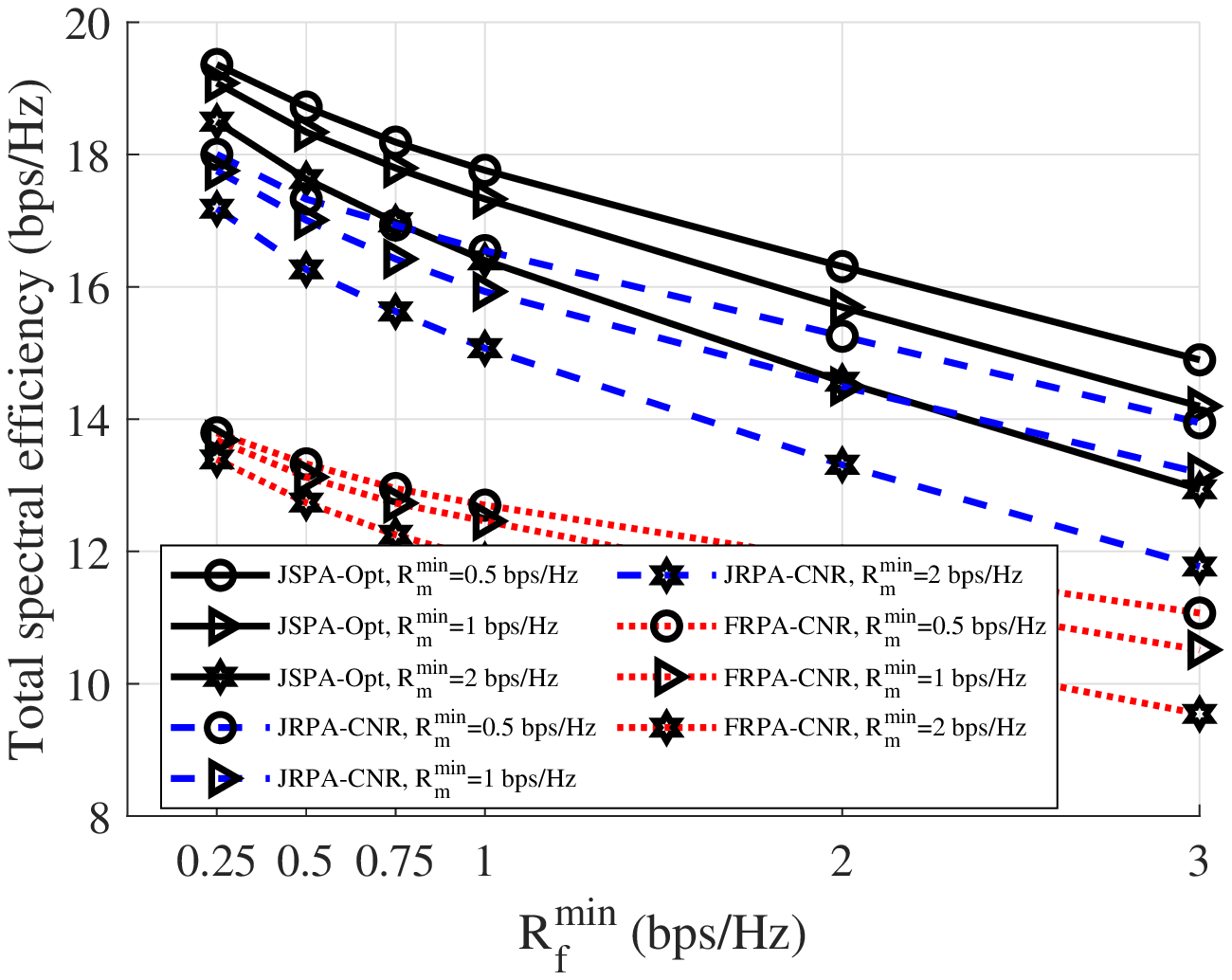}
		\label{Fig_Rtot_minrate_2user}
	}
	\subfigure[Average total spectral efficiency vs. users minimum rate demand for $3$-order NOMA clusters.]{
		\includegraphics[scale=0.52]{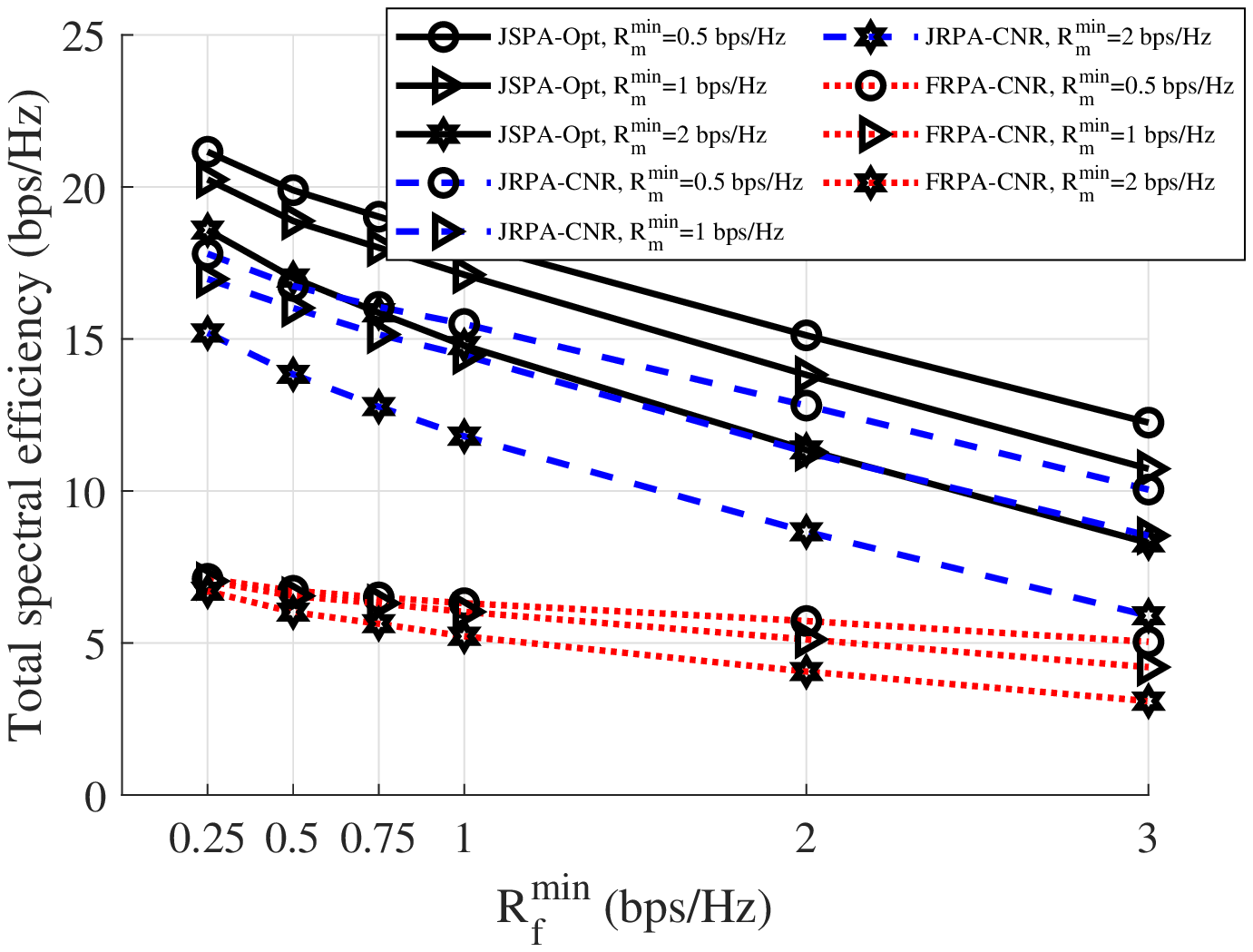}
		\label{Fig_Rtot_minrate_3user}
	}
	\caption
	{Average total spectral efficiency of the centralized JSPA, JRPA, and FRPA algorithms for different number of users and minimum rate demands.}
	\label{Fig_sumrate}
\end{figure}
Here, we set the sum-rate to zero when the problem is infeasible. As shown, JSPA always outperforms the JRPA and FRPA algorithms. The resulting performance gap between JSPA and JRPA is indeed an upper-bound of the exact performance gap between the optimal and CNR-based decoding orders, since Alg. \ref{Alg JRPA sequential} provides a lower-bound for the optimal value of \eqref{cent subopt problem} (see Table \ref{table complexity}). 
In Subsection \ref{subsection JRPA converg}, we show that JRPA is a near to optimal algorithm, so this lower-bound is significantly tighten. 
Subsequently, the performance gap between JRPA and FRPA is indeed the lower-bound of the exact performance gain of rate adoption for the CNR-based decoding order. As can be seen, FRPA has a lower performance compared to JRPA. In Fig. \ref{Fig_Rtot_usernum}, we observe that for $R^\text{min}_m=R^\text{min}_f=1$ bps/Hz, the negative side impact of increasing the order of NOMA cluster is higher than the multi-user diversity gain. As a result, increasing the order of NOMA clusters results in lower total spectral efficiency. Another reason is increasing the outage probability (shown in Fig. \ref{Fig_outage_Algs_usernum}) which can significantly affect the average total spectral efficiency. For $|\mathcal{U}_m|=|\mathcal{U}_f|=2$, we observed that the performance gap is low between JSPA and JRPA, due to decreasing $\Psi$ in Fig. \ref{Fig_suff_femtoBS}. This performance gap grows when $|\mathcal{U}_b|$ increases. As is expected, Figs. \ref{Fig_Rtot_minrate_2user} and \ref{Fig_Rtot_minrate_3user} show that the total spectral efficiency of users is decreasing in minimum rate demands. More importantly, there exist huge performance gaps between JSPA and FRPA for any number of users and minimum rate demands.

\subsection{Performance of Centralized and Decentralized Frameworks}
In this subsection, we compare the performance of the centralized, semi-centralized, and fully distributed JSPA frameworks (shown in Table \ref{table complexity}).
\subsubsection{Outage Probability Performance}
Fig. \ref{Fig_outage_FW} evaluates the outage probability of different resource allocation frameworks. 
\begin{figure}[tp]
	\centering
	\subfigure[Outage probability vs. order of NOMA cluster for $R^\text{min}_m=R^\text{min}_f=1$ bps/Hz.]{
		\includegraphics[scale=0.52]{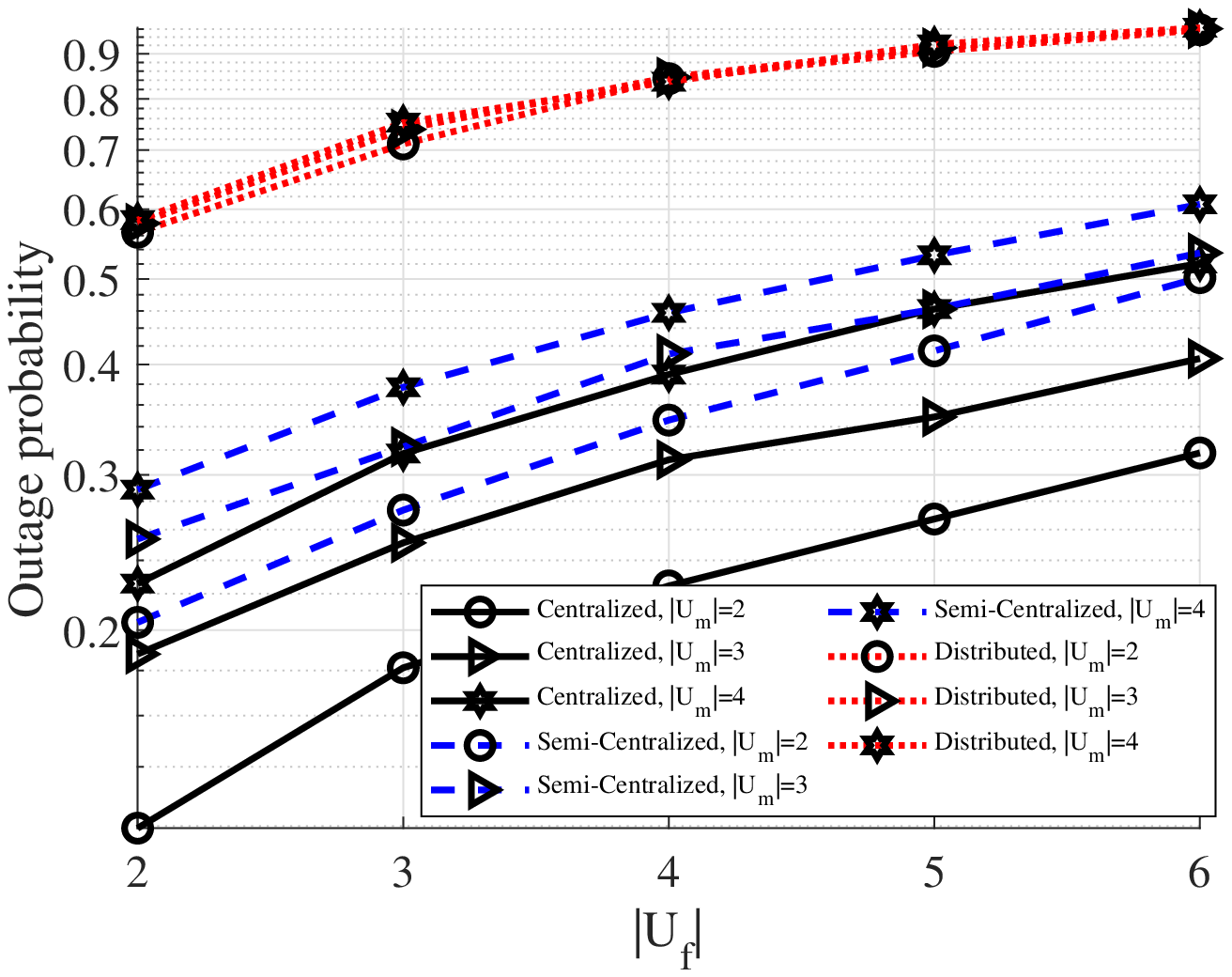}
		\label{Fig_outage_FW_usernum}
	}
	\subfigure[Outage probability vs. users minimum rate demand for $2$-order NOMA clusters.]{
		\includegraphics[scale=0.52]{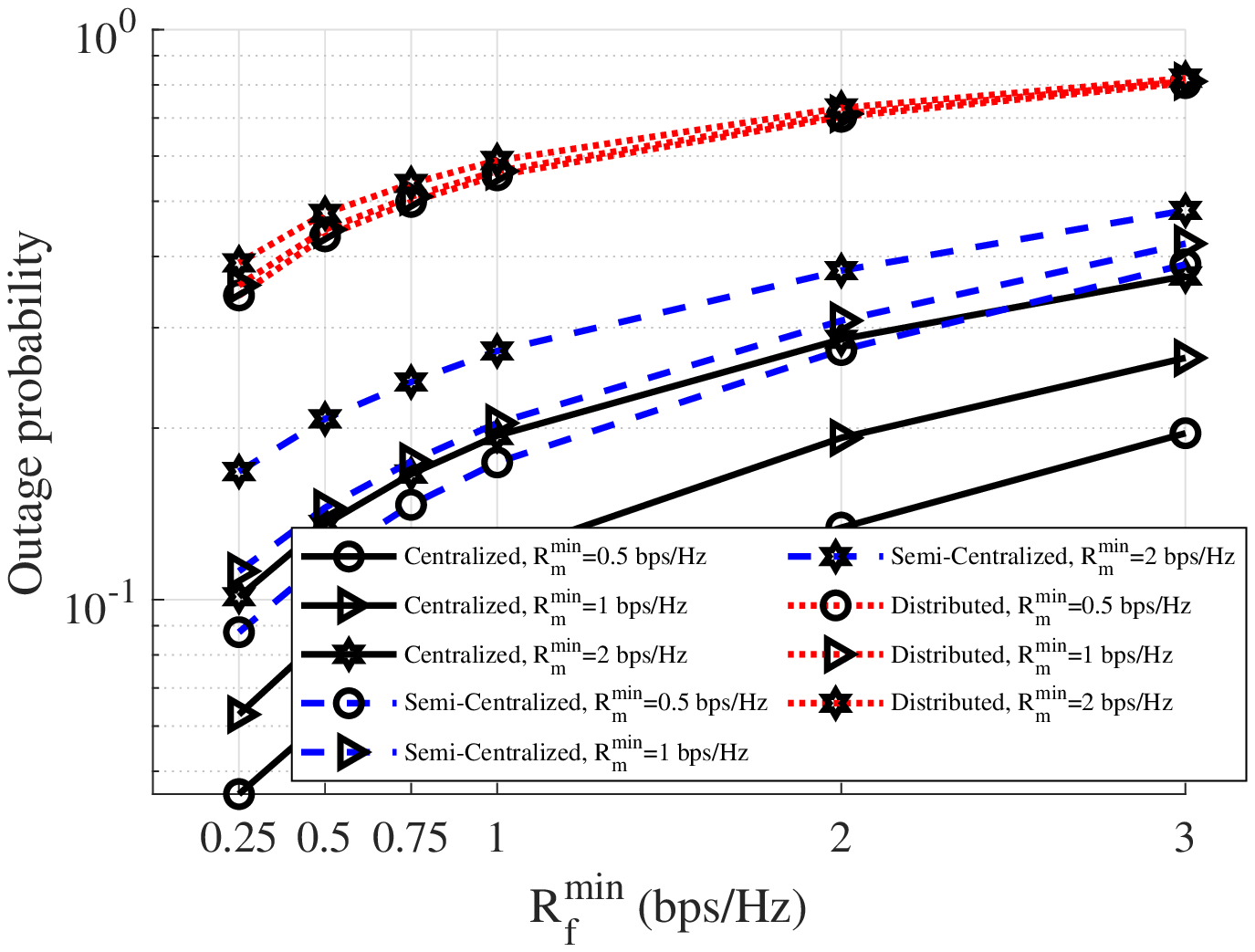}
		\label{Fig_outage_FW_minrate_2user}
	}
	\subfigure[Outage probability vs. users minimum rate demand for $3$-order NOMA clusters.]{
		\includegraphics[scale=0.52]{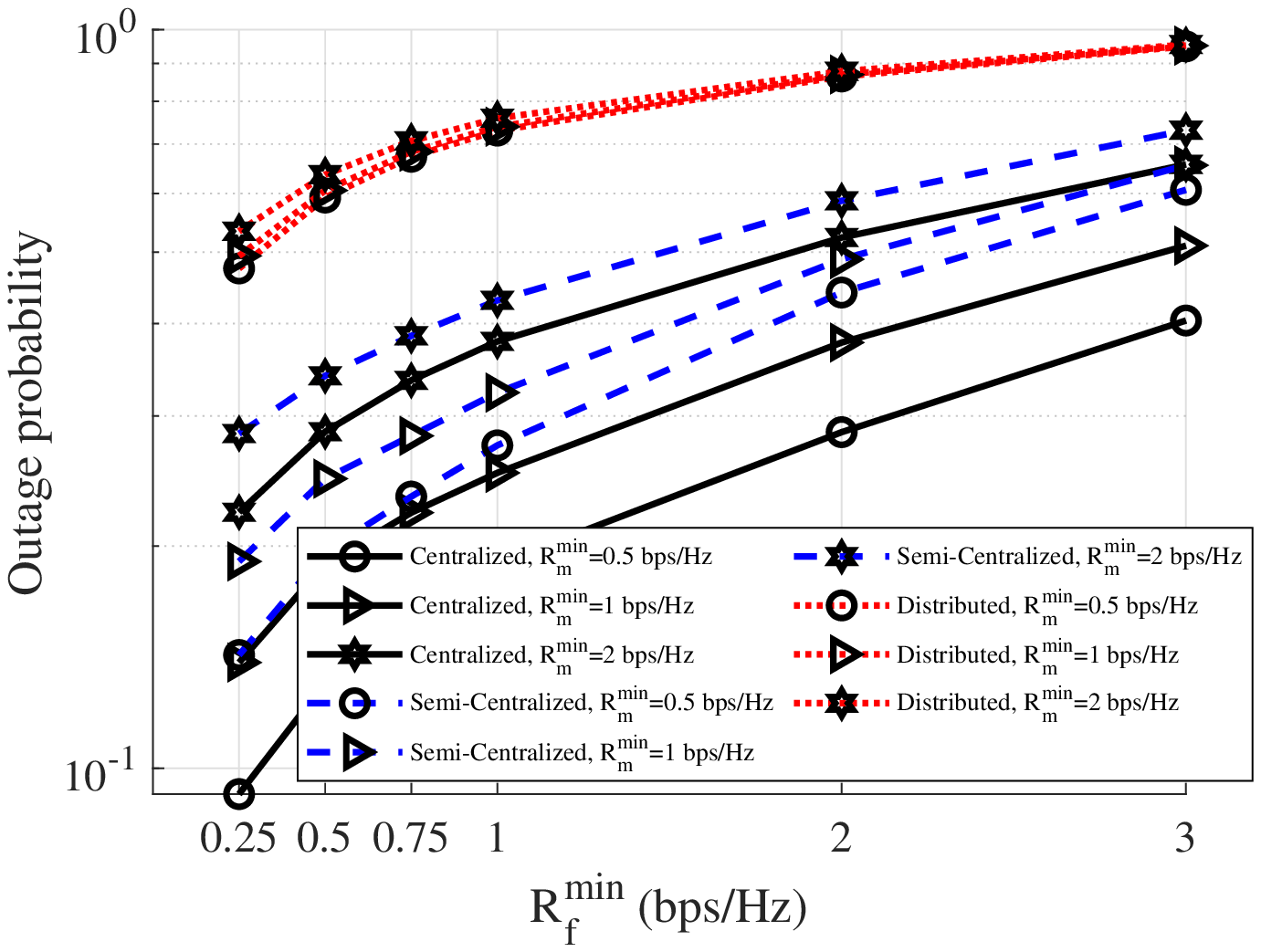}
		\label{Fig_outage_FW_minrate_3user}
	}
	\caption
	{Outage probability of the centralized and decentralized JSPA frameworks for different number of users and minimum rate demands.}
	\label{Fig_outage_FW}
\end{figure}
As can be seen, the fully distributed framework results in huge outage probability. However, the outage probability gap between the semi-centralized and centralized frameworks is decreasing with larger $|\mathcal{U}_m|$ (Fig. \ref{Fig_outage_FW_usernum}), and/or higher $R^\text{min}_m$ (Figs. \ref{Fig_outage_FW_minrate_2user} and \ref{Fig_outage_FW_minrate_3user}). The large performance gap between the semi-centralized and fully distributed frameworks shows the importance of ICI management from MBS to femto-cell users. It is noteworthy that the feasible point for the semi-centralized framework is obtained based on the greedy search on $\alpha_1$ with stepsize $\epsilon_\alpha=10^{-2}$.
Although it can be shown that further reducing $\epsilon_\alpha$ would not cause significant higher total spectral efficiency, $\epsilon_\alpha=10^{-2}$ is not good enough for calculating outage probability. $\epsilon_\alpha$ can be easily reduced to $10^{-4}$ in the semi-centralized framework, however we set $\epsilon_\alpha=10^{-2}$ for both the centralized and semi-centralized frameworks to have a fair comparison. For the same stepsize $\epsilon_\alpha$, the feasible region of the semi-centralized problem is a subset of the feasible region of its corresponding centralized problem.

\subsubsection{Average Total Spectral Efficiency Performance}
The performance of the decentralized frameworks depends on how good is the approximation of the power consumption of the BSs compared to the centralized framework. Figs. \ref{Fig_alphaFBS_FW_usernum}-\ref{Fig_alphaFBS_FW_minrate_3user} show the impact of order of NOMA clusters and minimum rate demands on the FBS power consumption coefficient $\alpha_f$ at the optimal point of the centralized framework. 
\begin{figure}[tp]
	\centering
	\subfigure[FBS power consumption in the centralized framework vs. order of NOMA cluster for $R^\text{min}_m=R^\text{min}_f=1$ bps/Hz.]{
		\includegraphics[scale=0.33]{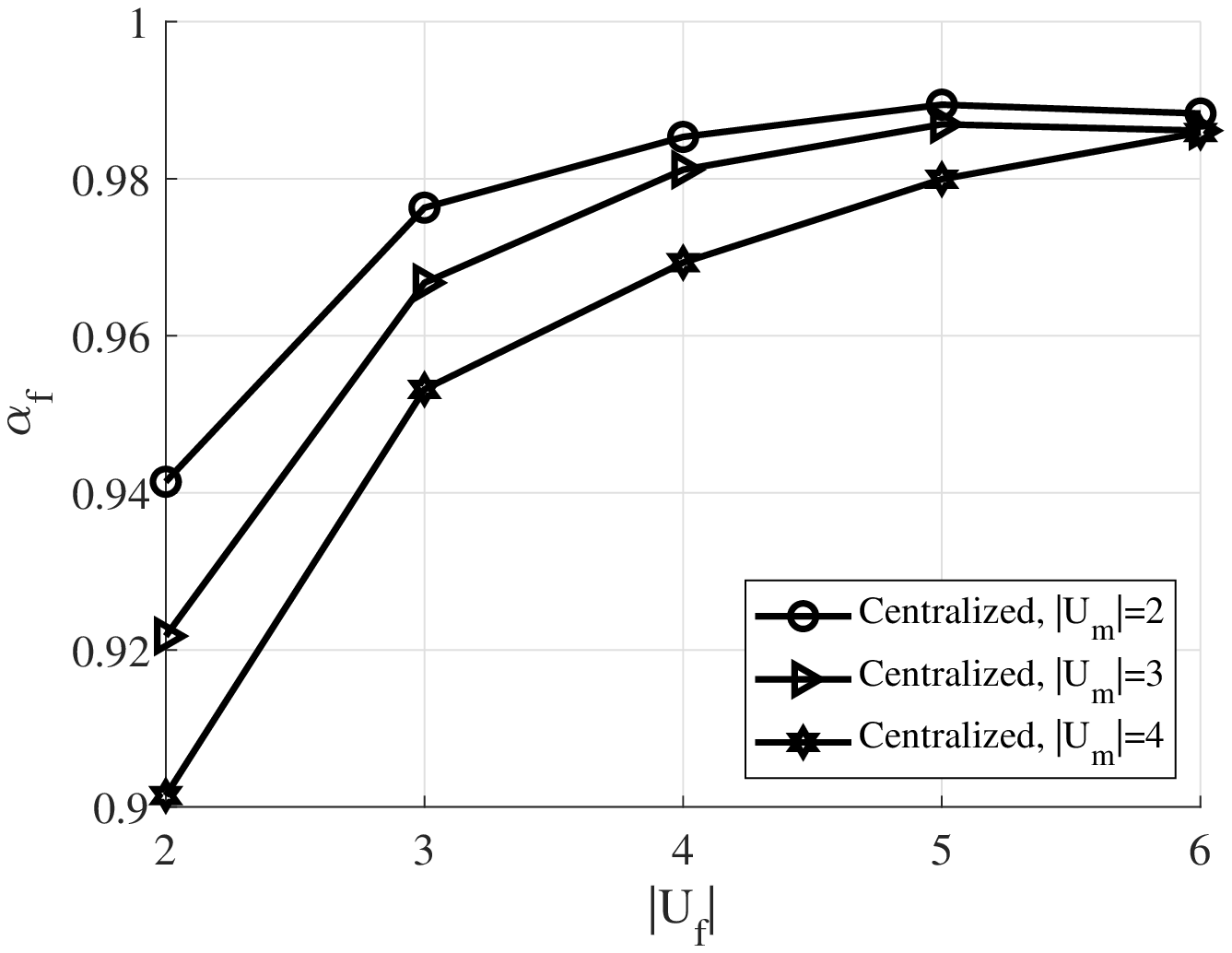}
		\label{Fig_alphaFBS_FW_usernum}
	}
	\subfigure[FBS power consumption in the centralized framework vs. minimum rate demand for $2$-order NOMA clusters.]{
		\includegraphics[scale=0.33]{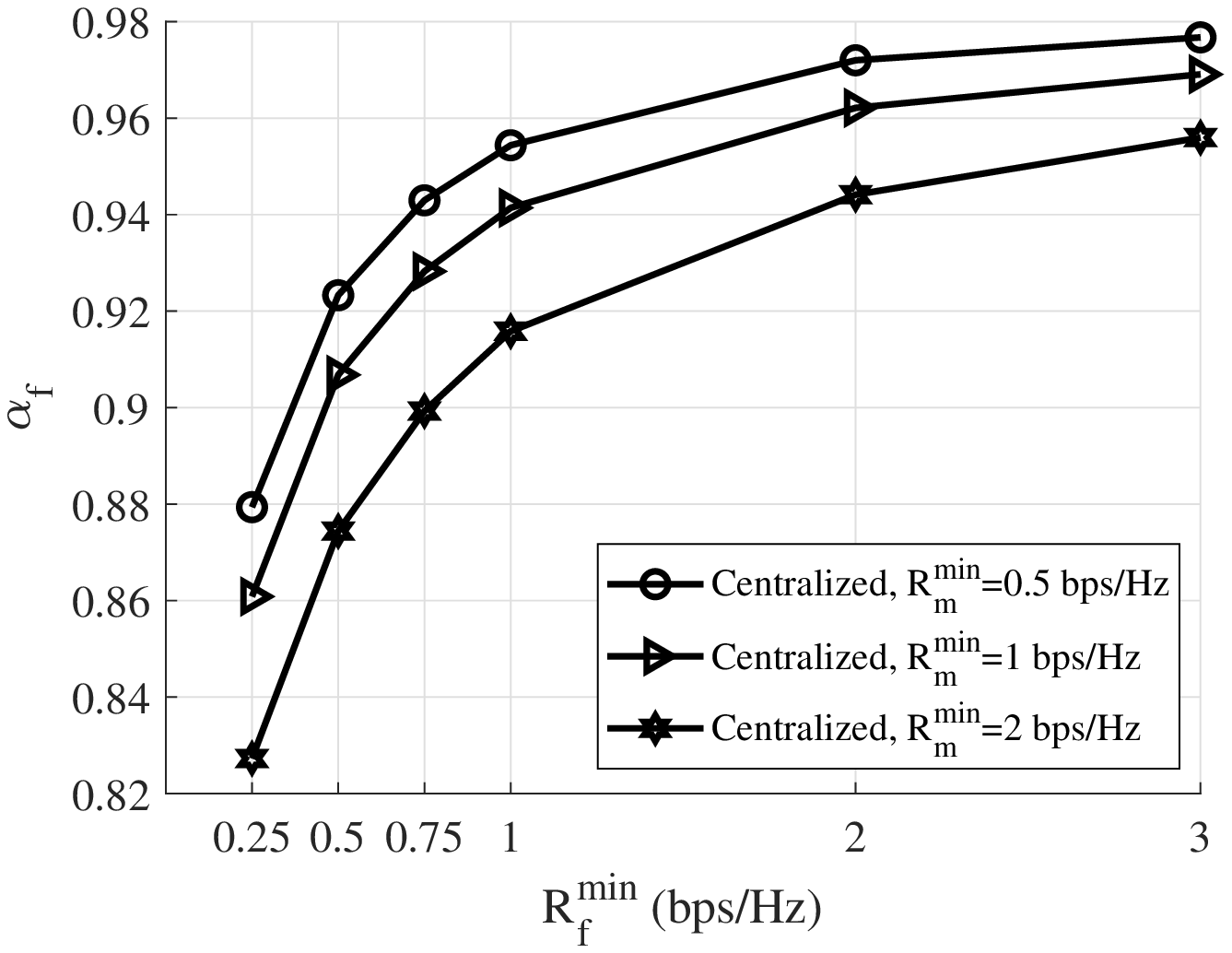}
		\label{Fig_alphaFBS_FW_minrate_2user}
	}
	\subfigure[FBS power consumption in the centralized framework vs. minimum rate demand for $3$-order NOMA clusters.]{
		\includegraphics[scale=0.33]{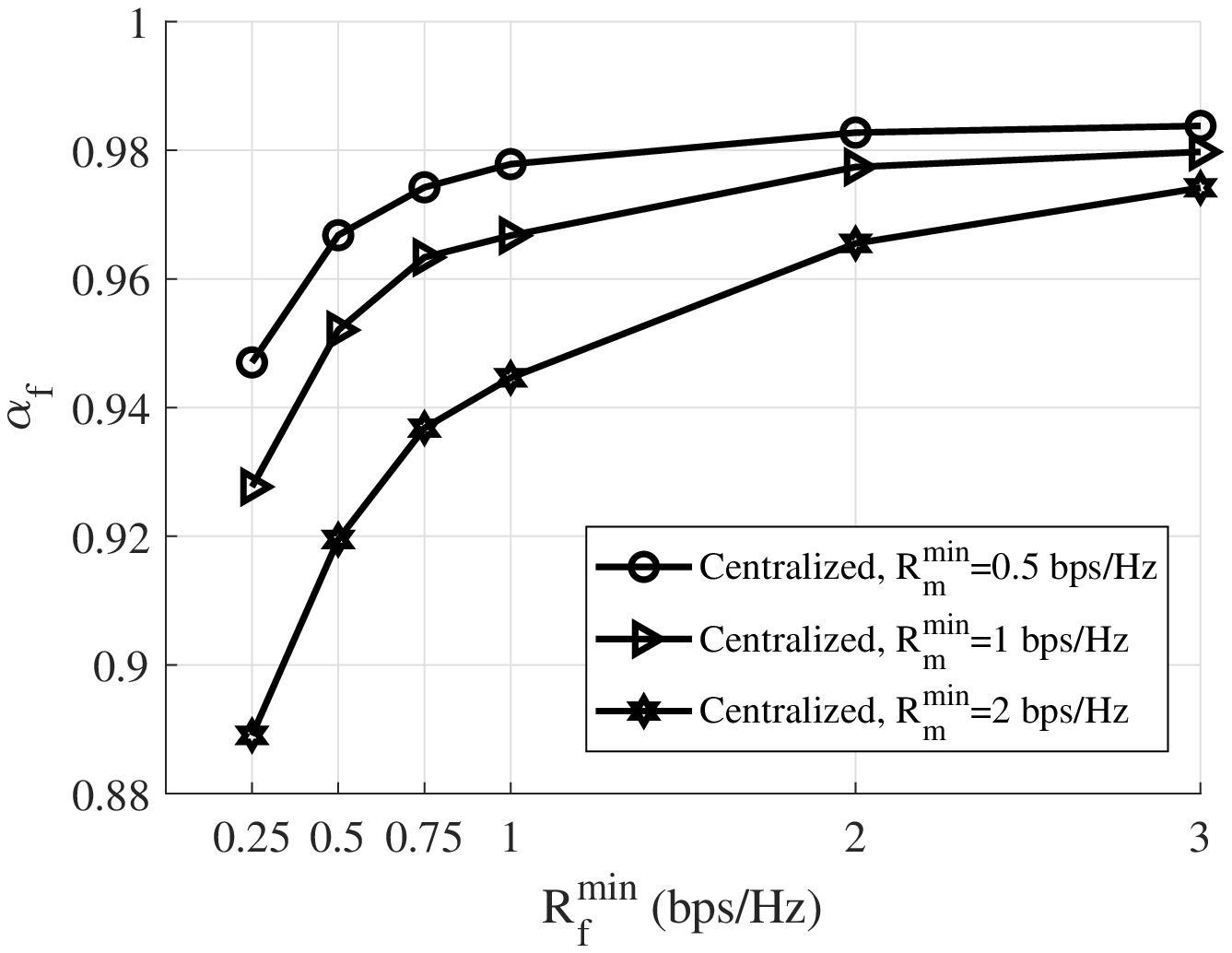}
		\label{Fig_alphaFBS_FW_minrate_3user}
	}
	\subfigure[MBS power consumption in the centralized/semi-centralized frameworks vs. order of NOMA cluster for $R^\text{min}_m=R^\text{min}_f=1$ bps/Hz.]{
		\includegraphics[scale=0.33]{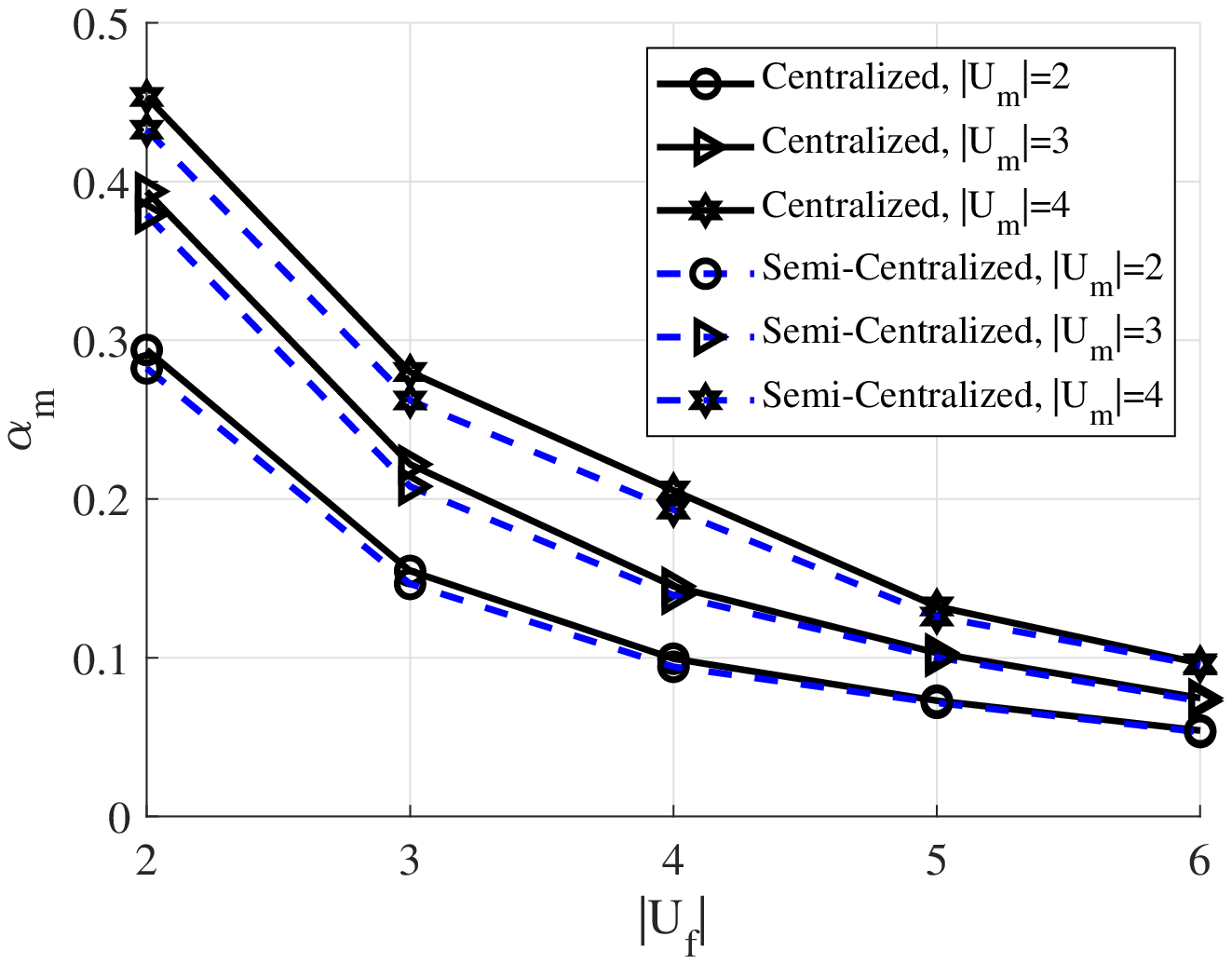}
		\label{Fig_alphaMBS_FW_usernum}
	}
	\subfigure[MBS power consumption in the centralized/semi-centralized frameworks vs. minimum rate demand for $2$-order NOMA clusters.]{
		\includegraphics[scale=0.33]{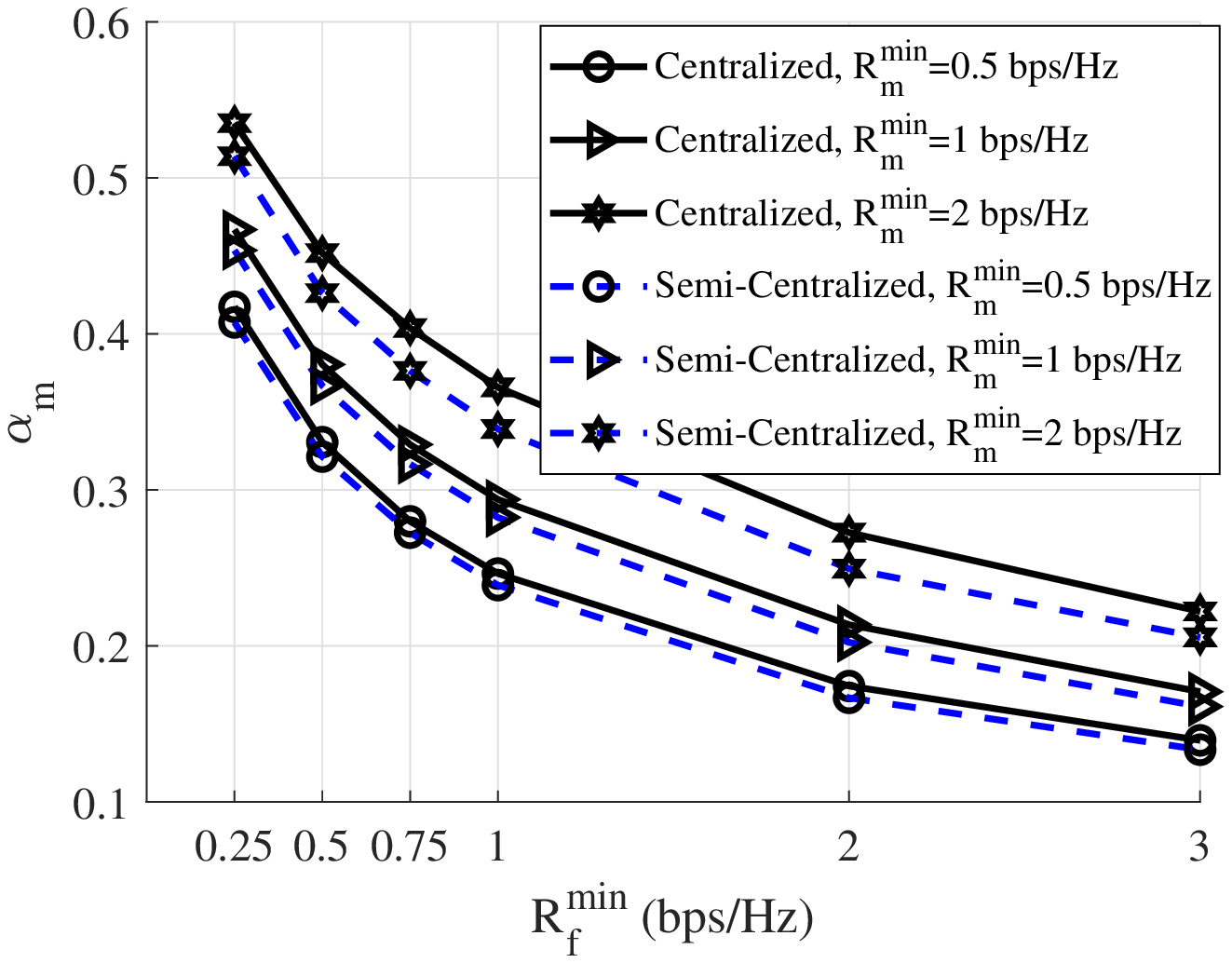}
		\label{Fig_alphaMBS_FW_minrate_2user}
	}
	\subfigure[MBS power consumption in the centralized/semi-centralized frameworks vs. minimum rate demand for $3$-order NOMA clusters.]{
		\includegraphics[scale=0.33]{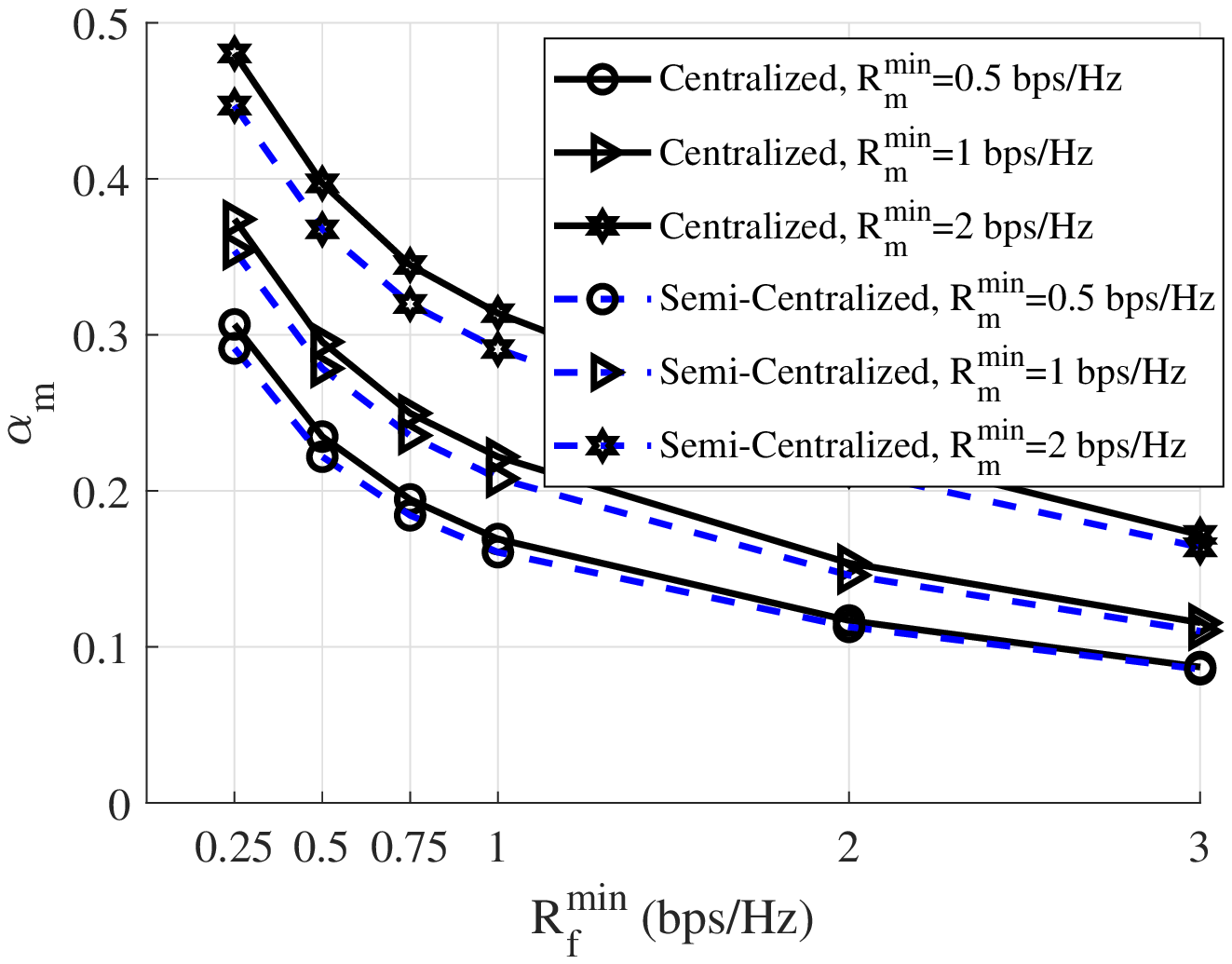}
		\label{Fig_alphaMBS_FW_minrate_3user}
	}
	\caption
	{Average power consumption coefficient of femto/macro BSs in the centralized/semi-centralized frameworks for different number of macro/femto-cell users and minimum rate demands.}
	\label{Fig_FBSpowercons_FW}
\end{figure}
As is expected, larger $|\mathcal{U}_f|$ and/or $R^\text{min}_f$ results in larger FBS power consumption. Moreover, we observe that increasing $|\mathcal{U}_m|$ and/or $R^\text{min}_m$ decreases $\alpha_f$. However, the impact of $|\mathcal{U}_m|$ and/or $R^\text{min}_m$ on $\alpha_f$ is quite low and negligible, due to the low ICI level from low-power FBS to macro-cell users in average. More importantly, we observe that in most of the cases, the FBS operates in up to $90\%$ of its available power. It is noteworthy that in both the decentralized frameworks, we assume that the FBS operates in $100\%$ of its available power.
Figs. \ref{Fig_alphaMBS_FW_usernum}-\ref{Fig_alphaMBS_FW_minrate_3user} evaluate the impact of order of NOMA clusters and minimum rate demands on the MBS power consumption coefficient $\alpha_m$ at the optimal point of the centralized and semi-centralized frameworks. As is expected, $\alpha_m$ is directly proportional to $|\mathcal{U}_m|$ and/or $R^\text{min}_m$, while is inversely proportional to $|\mathcal{U}_f|$ and/or $R^\text{min}_f$. More importantly, we observe that
\begin{enumerate}
	\item $\alpha_m$ in the semi-centralized framework is always upper-bounded by $\alpha_m$ in the centralized framework. This is due to larger $\alpha_f$ in the semi-centralized framework compared to the centralized framework. 
	\item The MBS typically operates in less than $60\%$ of its available power. Hence, the fully distributed framework with $\alpha_m=1$ results in significantly degraded spectral efficiency at femto-cell users, due to the high ICI from the MBS to femto-cell users.
\end{enumerate}
Last but not least, we observe that the MBS power consumption gap between the centralized and semi-centralized frameworks is quite low (see Figs. \ref{Fig_alphaMBS_FW_usernum}-\ref{Fig_alphaMBS_FW_minrate_3user}).

Fig. \ref{Fig_sumrate_FW} investigates the total spectral efficiency of users in the centralized and decentralized frameworks. 
\begin{figure}[tp]
	\centering
	\subfigure[Average total spectral efficiency vs. order of NOMA cluster for $R^\text{min}_m=R^\text{min}_f=1$ bps/Hz.]{
		\includegraphics[scale=0.52]{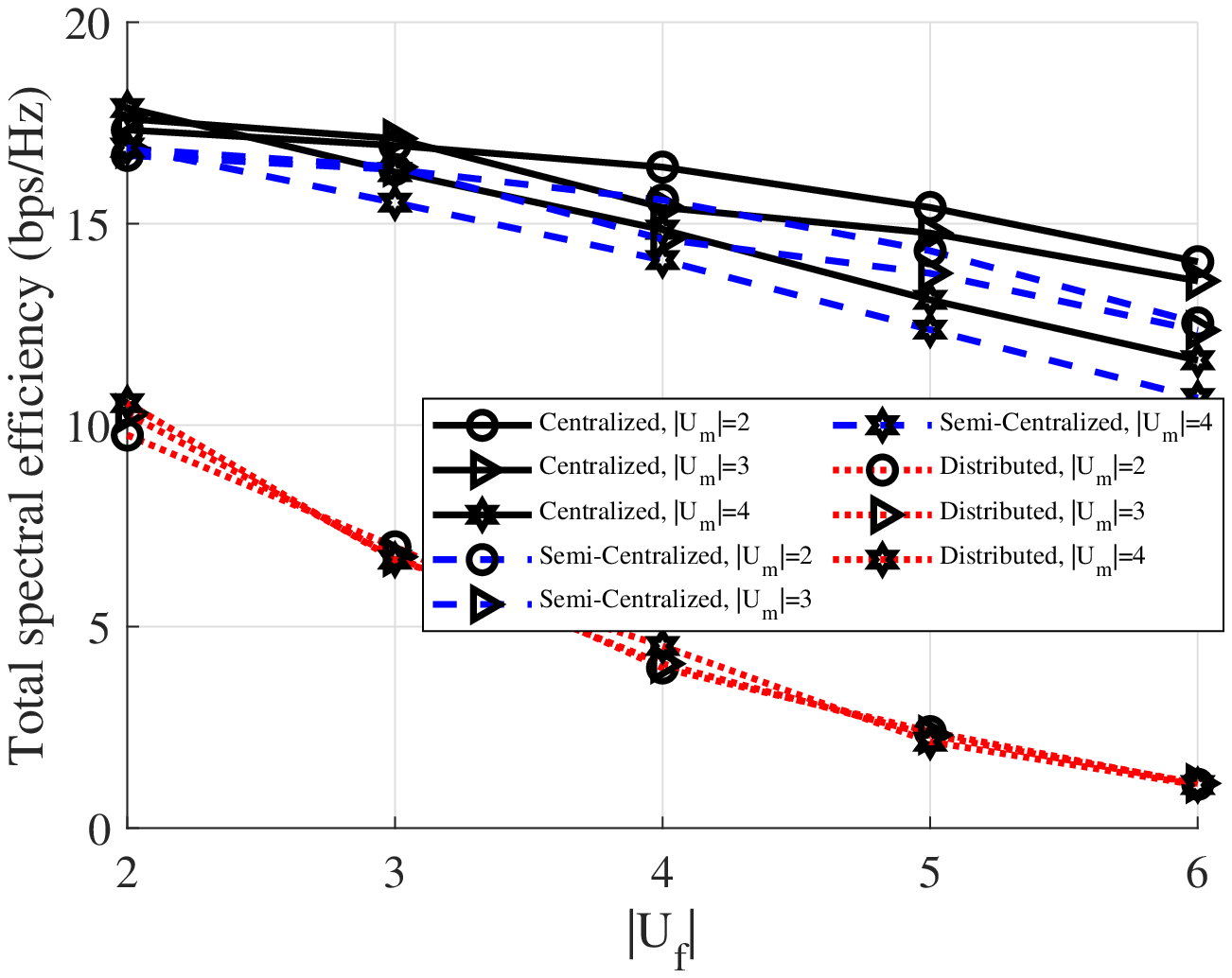}
		\label{Fig_Rtot_FW_usernum}
	}
	\subfigure[Average total spectral efficiency vs. users minimum rate demand for $2$-order NOMA clusters.]{
		\includegraphics[scale=0.52]{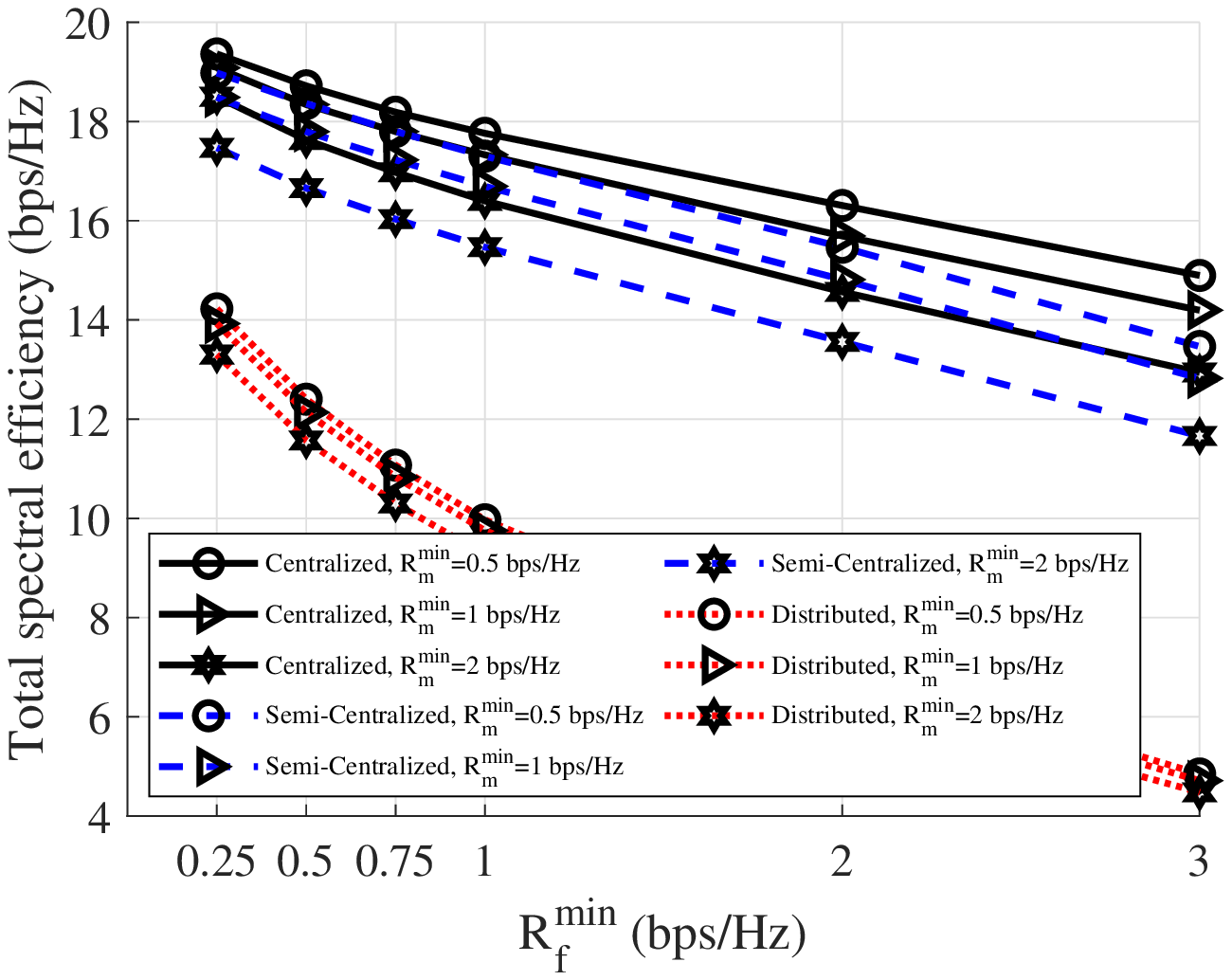}
		\label{Fig_Rtot_FW_minrate_2user}
	}
	\subfigure[Average total spectral efficiency vs. users minimum rate demand for $3$-order NOMA clusters.]{
		\includegraphics[scale=0.52]{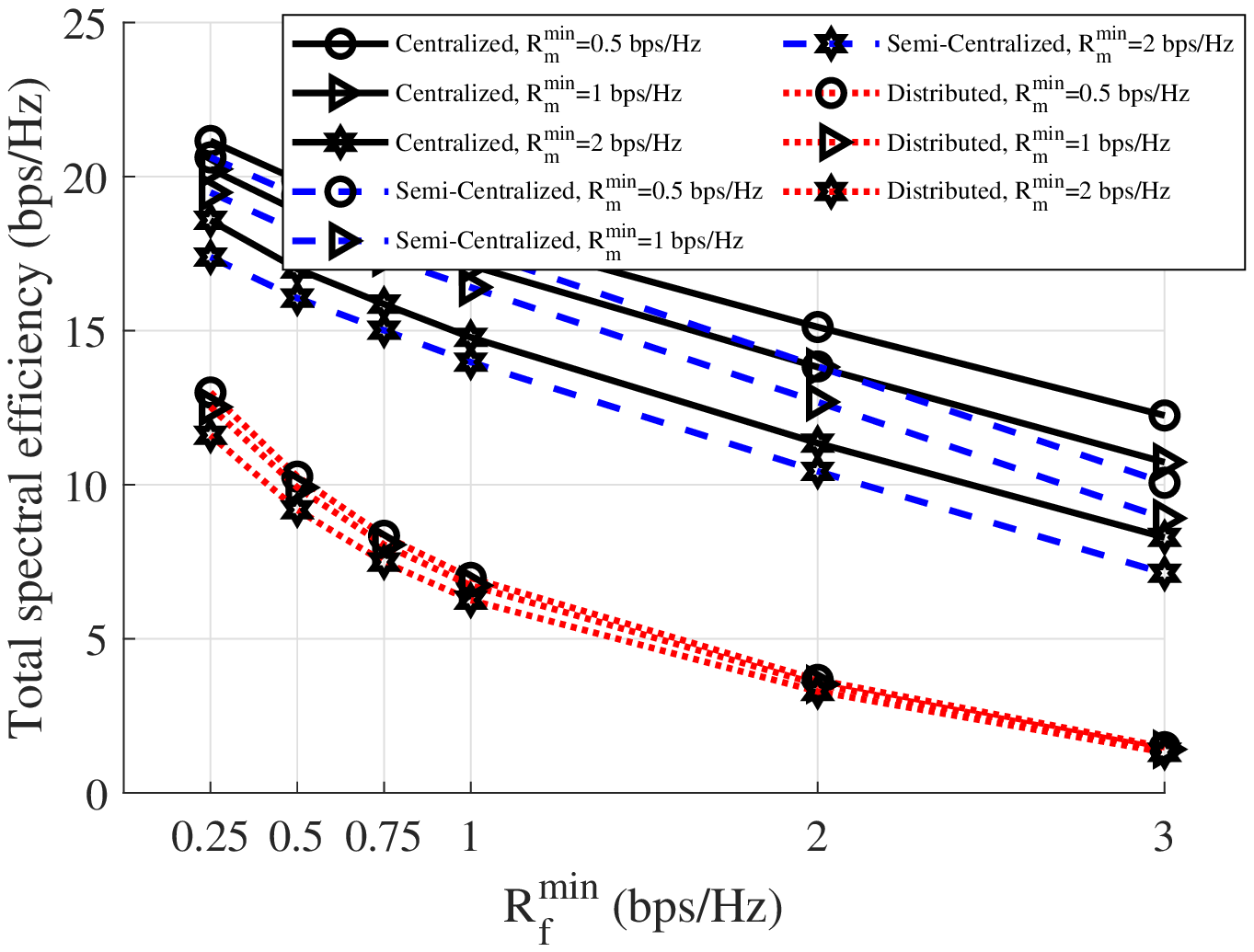}
		\label{Fig_Rtot_FW_minrate_3user}
	}
	\caption
	{Average total spectral efficiency of the centralized and decentralized frameworks for different number of macro/femto-cell users and minimum rate demands.}
	\label{Fig_sumrate_FW}
\end{figure}
According to the discussions for Fig. \ref{Fig_FBSpowercons_FW}, we observed that in most of the cases, the performance gap between the centralized (optimal) and semi-centralized frameworks is quite low, specifically for the lower order of the femto-cell NOMA cluster. Hence, the semi-centralized framework with its low computational complexity (see Table \ref{table complexity}) is a good candidate solution for the larger-scale systems. Besides, the fully distributed framework results in quite low performance, due to the discussions for Fig. \ref{Fig_FBSpowercons_FW}.

\subsection{Convergence of the Iterative Distributed Framework for Solving \eqref{feasible problem}} \label{subsection conya num}
The feasible domain of problem \eqref{feasible problem} is empty if
\begin{enumerate}
	\item Problem \eqref{feasible problem} is infeasible when the maximum power constraint \eqref{Constraint max power} is removed. This corresponds to the feasibility of \eqref{feasible problem} which can be determined by the Perron–Frobenius eigenvalues of the matrices arising from the power control subproblems (see Theorem 8 in \cite{7964738}). In this theorem, it is proved that regardless of the availability of powers, \eqref{feasible problem} can be infeasible, due to the existing ICI and minimum rate demands.
	\item  Problem \eqref{feasible problem} is infeasible while \eqref{feasible problem} without \eqref{Constraint max power} is feasible. As a result, \eqref{feasible problem} is infeasible only because of the lack of power resources to meet the QoS constraints in \eqref{Constraint QoS}.
\end{enumerate}
Since Alg. \ref{Alg opt Mcell powmin} is a component-wise minimum \cite{7964738}, for any feasible $\boldsymbol{p}^{(0)}$, it converges to a unique point which is the globally optimal solution. More importantly, the results show that Alg. \ref{Alg opt Mcell powmin} also converges to the globally optimal solution for any infeasible but finite $\boldsymbol{p}^{(0)}$ if the feasible domain of problem \eqref{feasible problem} is nonempty.
Fig. \ref{Fig_convergence_Yate} shows the convergence of Alg. \ref{Alg opt Mcell powmin} for different initial points.
\begin{figure}[tp]
	\centering
	\subfigure[The network topology and users placement for $|\mathcal{U}_m|=3$, and $|\mathcal{U}_f|=2$.]{
		\includegraphics[scale=0.52]{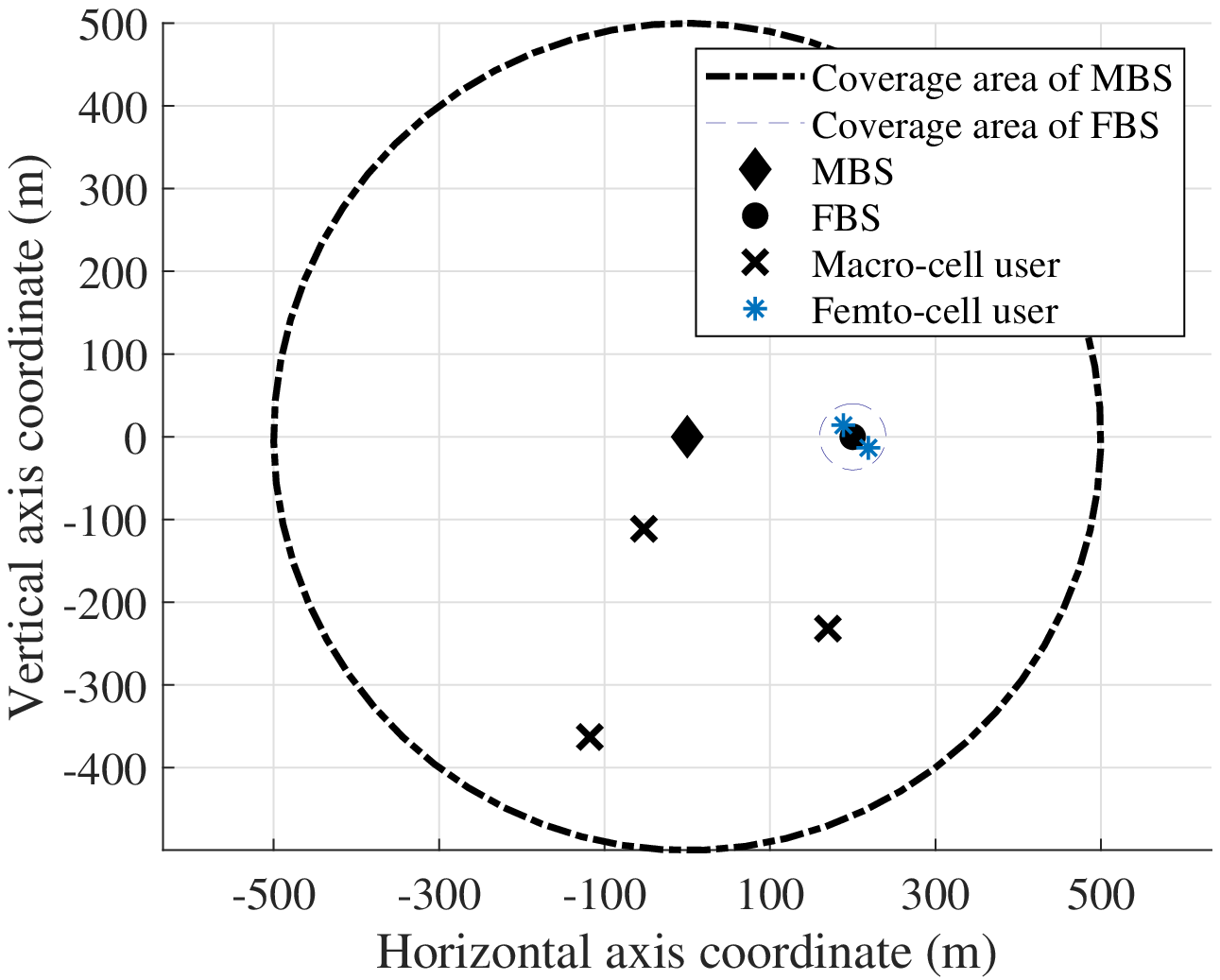}
		\label{Fig_topology_Yate}
	}
	\subfigure[Convergence of the iterative distributed framework for $R^\text{min}_m=1$ bps/Hz, and $R^\text{min}_f=1$.]{
		\includegraphics[scale=0.52]{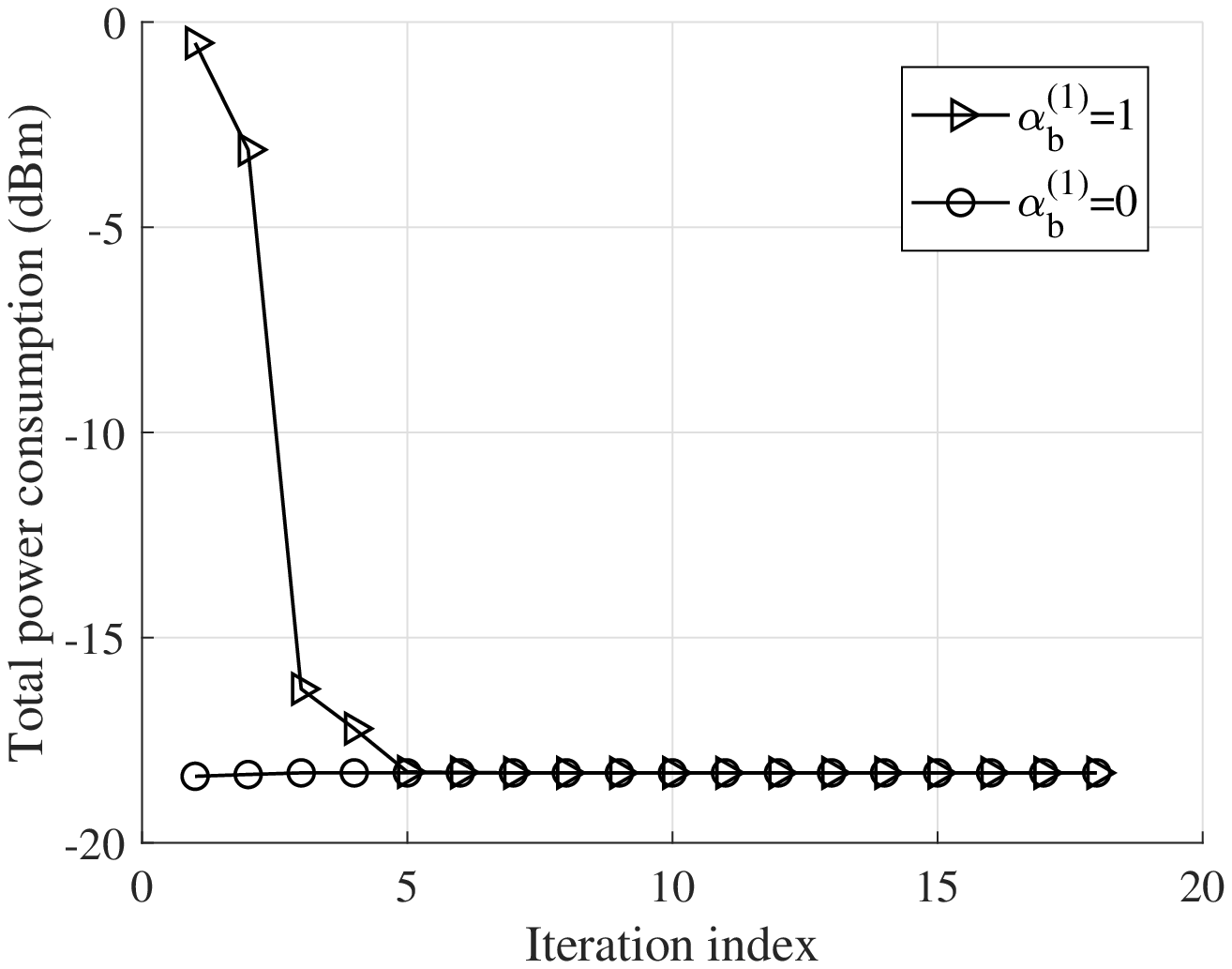}
		\label{Fig_convergence_Yate}
	}
	\subfigure[Divergence of the iterative distributed framework for $R^\text{min}_m=4$ bps/Hz, and $R^\text{min}_f=4$.]{
		\includegraphics[scale=0.52]{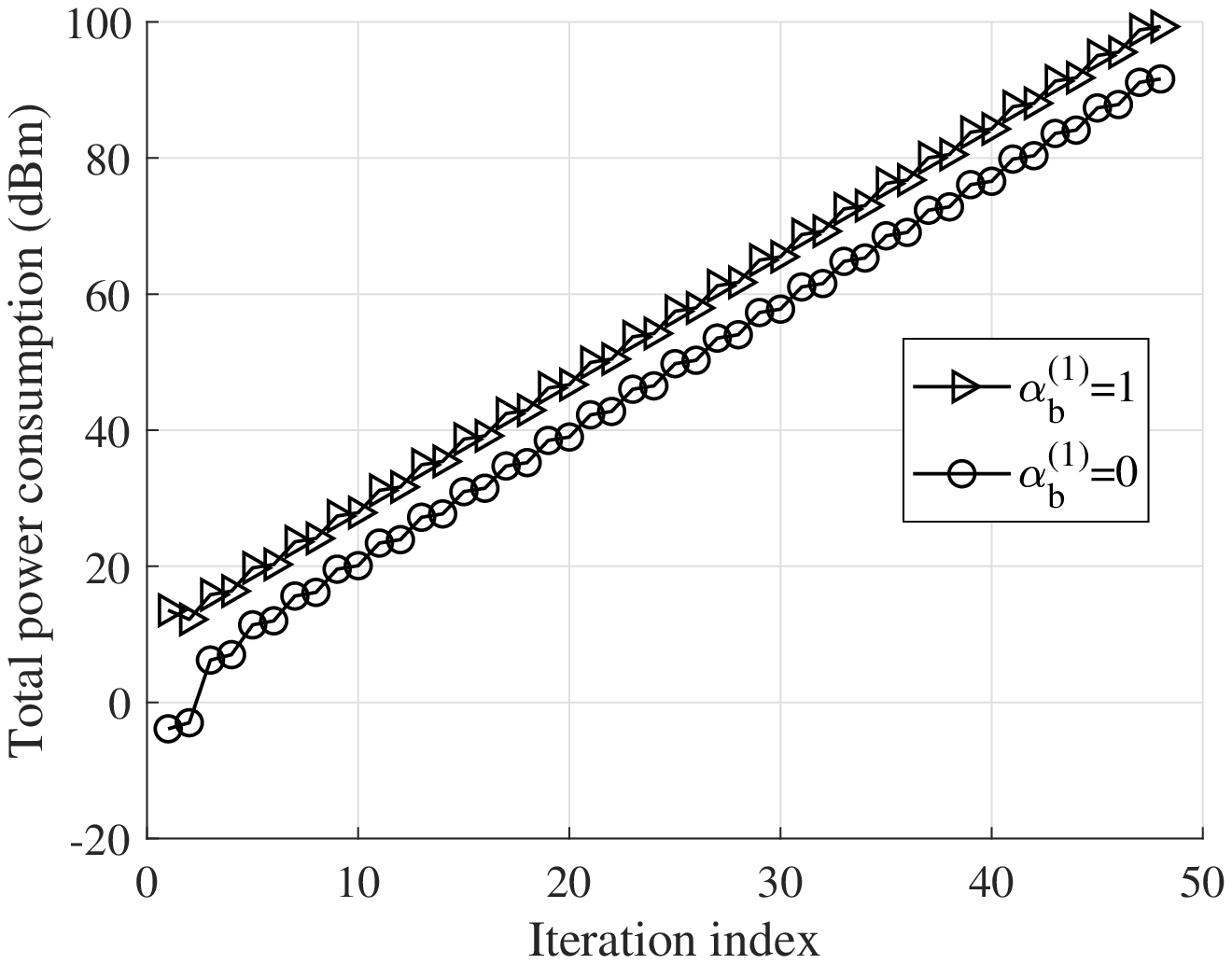}
		\label{Fig_Divergence_Yate}
	}
	\caption
	{Convergence/Divergence of Alg. \ref{Alg opt Mcell powmin} for different initial points, and a channel realization with $|\mathcal{U}_m|=3$, $|\mathcal{U}_f|=2$, and different minimum rate demands.}
	\label{Fig_powmin_yates}
\end{figure}
$\alpha^{(0)}_b=0$ denotes a zero power consumption for the BSs, i.e., $p^{(0)}_{b,i}=0,\forall b\in \mathcal{B},~i \in \mathcal{U}_b$. Besides, $\alpha^{(0)}_b=1$ denotes that the BSs operate in their maximum power at the initial point which may violate \eqref{Constraint QoS}. Fig. \ref{Fig_convergence_Yate} shows that Alg. \ref{Alg opt Mcell powmin} in both the initial points converges to the unique point. For $\alpha^{(0)}_b=0$, the convergence of Alg. \ref{Alg opt Mcell powmin} corresponds tightening the lower-bound of optimal value, since the ICI at each user is always upper-bounded by its ICI at the converged point. Besides, since the ICI at each user reaches to its maximum possible value at $\alpha^{(0)}_b=1$, the convergence of Alg. \ref{Alg opt Mcell powmin} for $\alpha^{(0)}_b=1$ corresponds to tightening the upper-bound of optimal value. Based on the KKT conditions analysis in Appendix \ref{appendix optpower powmin}, we observed that $\boldsymbol{p}^*$ is independent from the maximum power constraint \eqref{Constraint max power}. It can be shown that if problem \eqref{feasible problem} is infeasible while \eqref{feasible problem} without \eqref{Constraint max power} is feasible (Case 2 of the infeasibility reasons of problem \eqref{feasible problem}), Alg. \ref{Alg opt Mcell powmin} will converge to the optimal finite point violating \eqref{Constraint max power}. For larger minimum rate demands (Fig. \ref{Fig_Divergence_Yate}), we observe that Alg. \ref{Alg opt Mcell powmin} diverges and the optimal value tends to infinity, regardless of the maximum power constraint \eqref{Constraint max power}. This corresponds to the first case of the infeasibility reasons of \eqref{feasible problem}. Hence, it is important to check the Perron–Frobenius eigenvalues of the matrices arising from the power control subproblems (see Theorem 8 in \cite{7964738}) before finding a feasible point for \eqref{feasible problem}.
Last but not least, Fig. \ref{Fig_convergence_Yate} verifies a fast convergence speed of Alg. \ref{Alg opt Mcell powmin} for both the initialization methods, however $\alpha^{(0)}_b=0$ converges in less iterations compared to $\alpha^{(0)}_b=1$.

\subsection{Convergence and Performance of the JRPA Algorithm}\label{subsection JRPA converg}
In Fig. \ref{Fig_convergenceJRPA}, we investigate the convergence of our proposed JRPA algorithm which is based on sequential programming. 
\begin{figure}[tp]
	\centering
	\subfigure[Approximation of $y=\ln\left(2^{r}-1\right)$ with the linear function $m \times r$, where $m=y'(r=15)$.]{
		\includegraphics[scale=0.52]{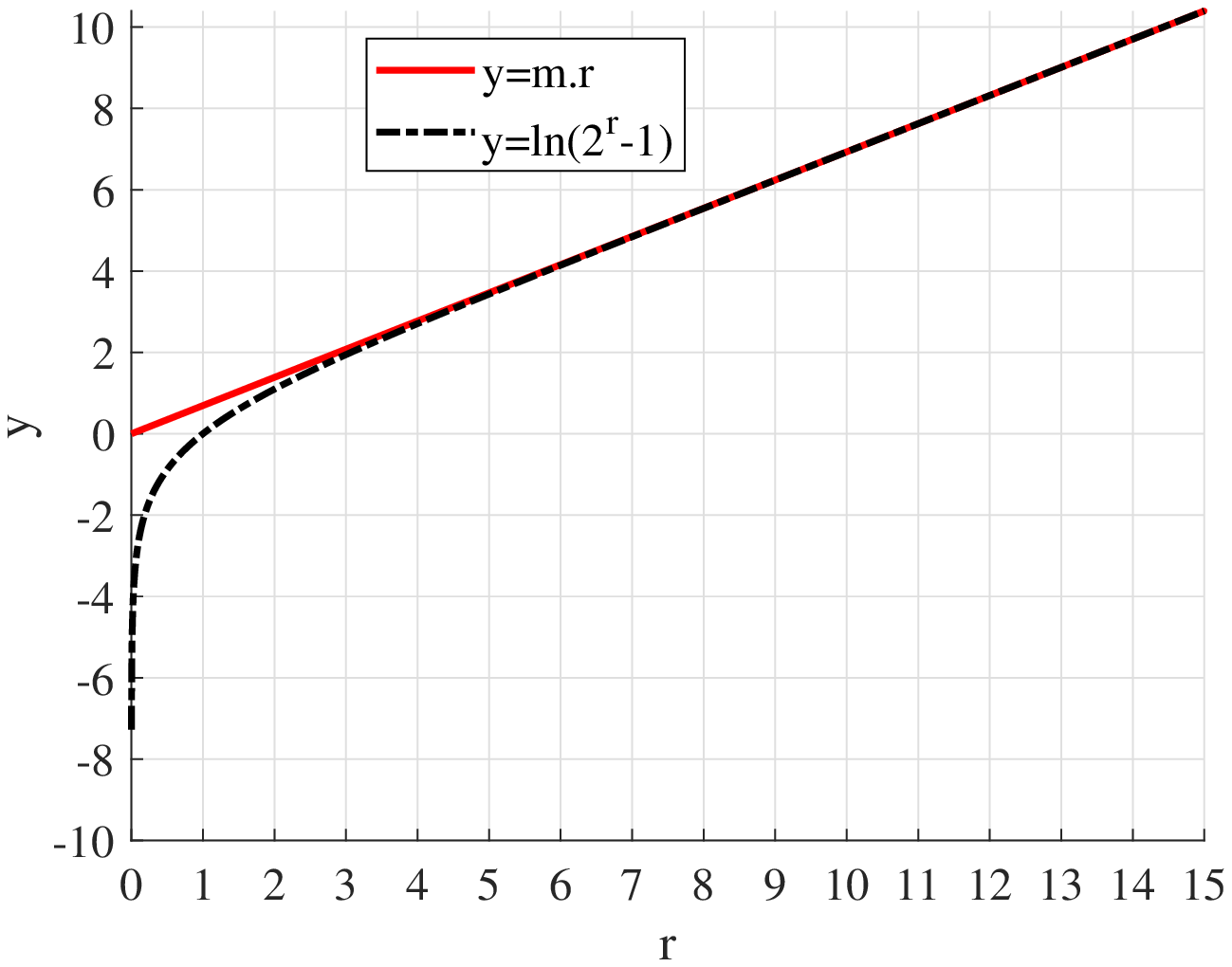}
		\label{Fig_approxrate}
	}
	\subfigure[The network topology and users placement for $|\mathcal{U}_m|=3$, and $|\mathcal{U}_f|=2$.]{
		\includegraphics[scale=0.52]{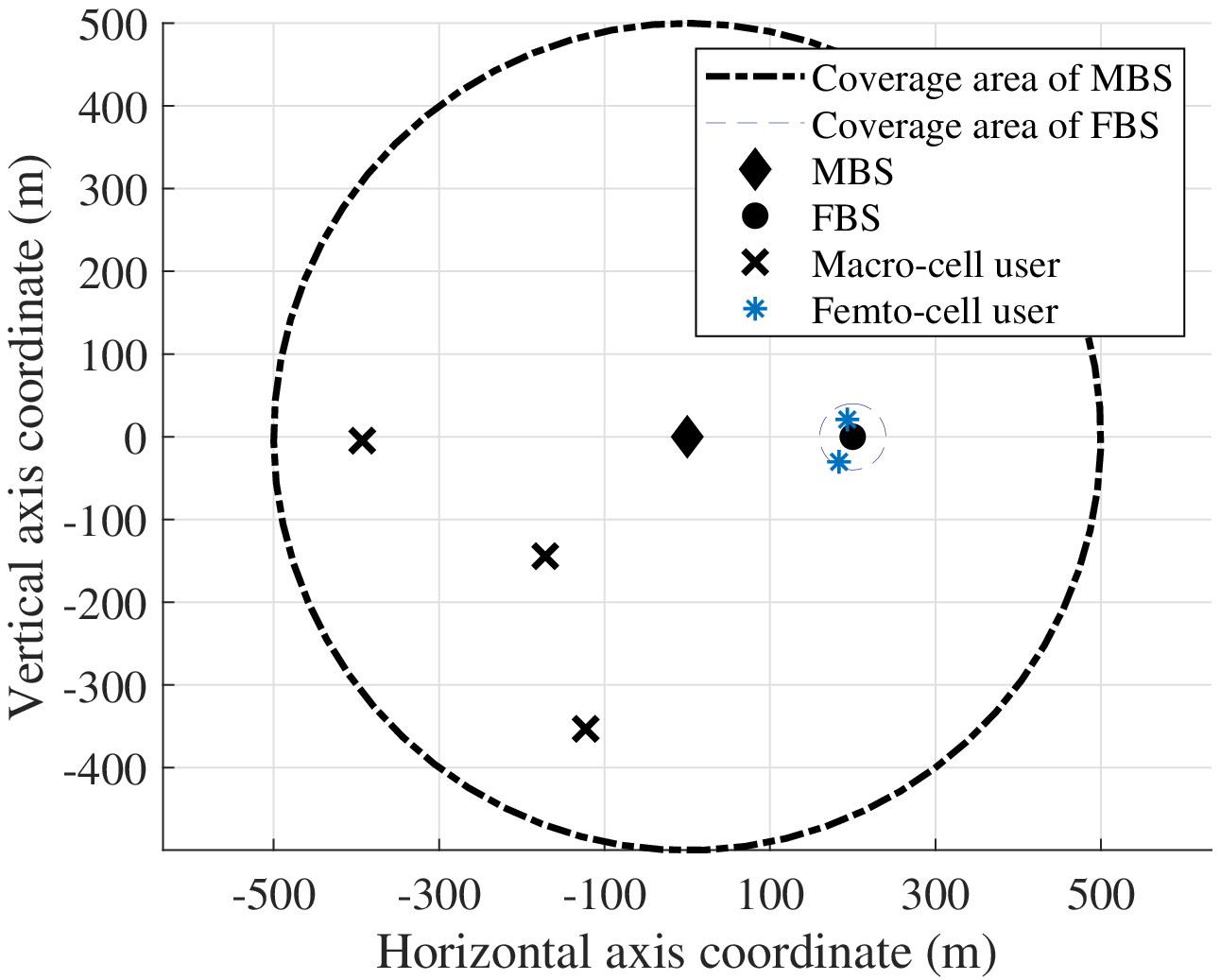}
		\label{Fig_topology_JRPAconverge}
	}
	\subfigure[Total spectral efficiency vs. iteration index for different initialization methods.]{
		\includegraphics[scale=0.52]{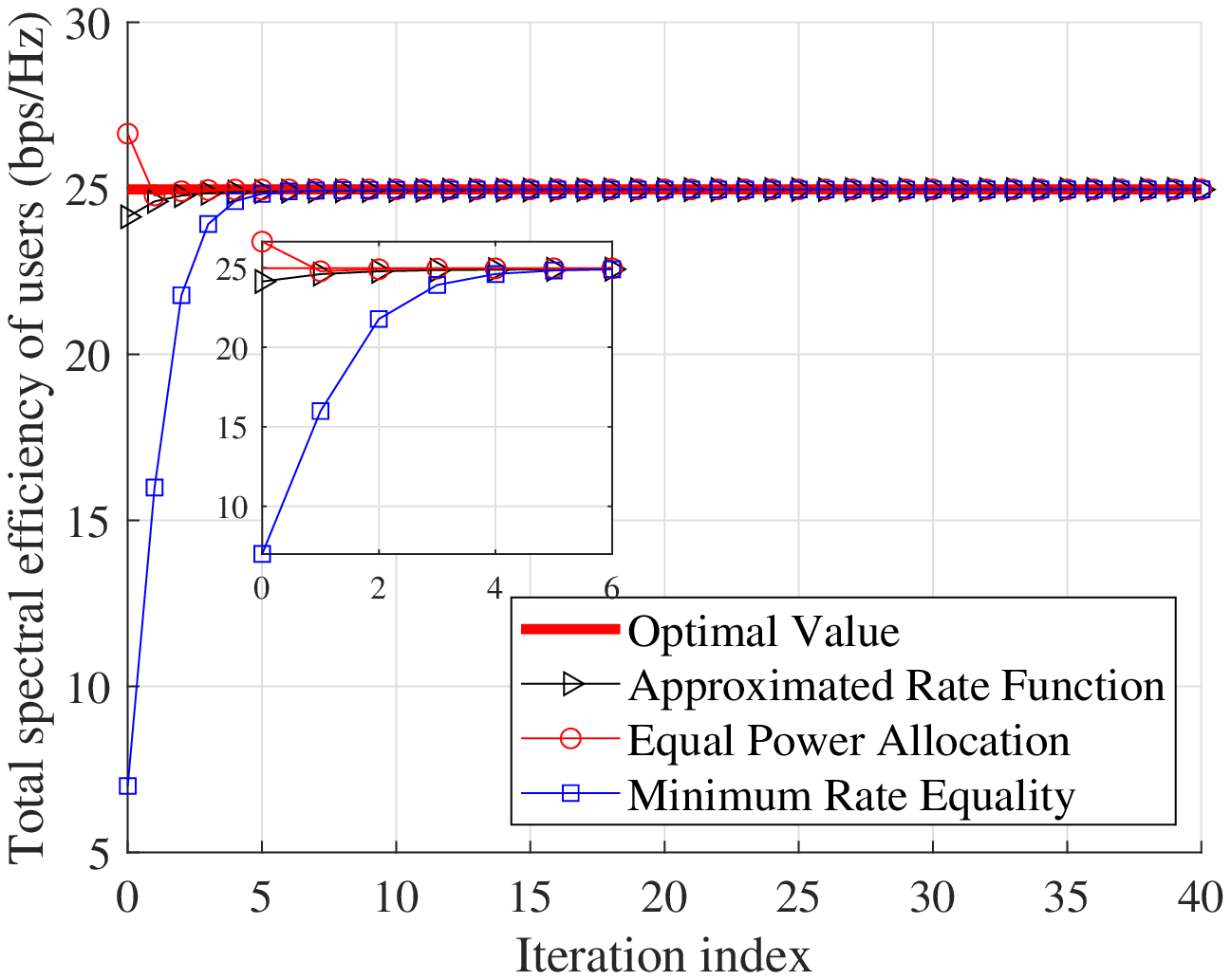}
		\label{Fig_convergenceJRPA_totrate}
	}
	\subfigure[User spectral efficiency vs. iteration index for the MRE initialization method.]{
		\includegraphics[scale=0.52]{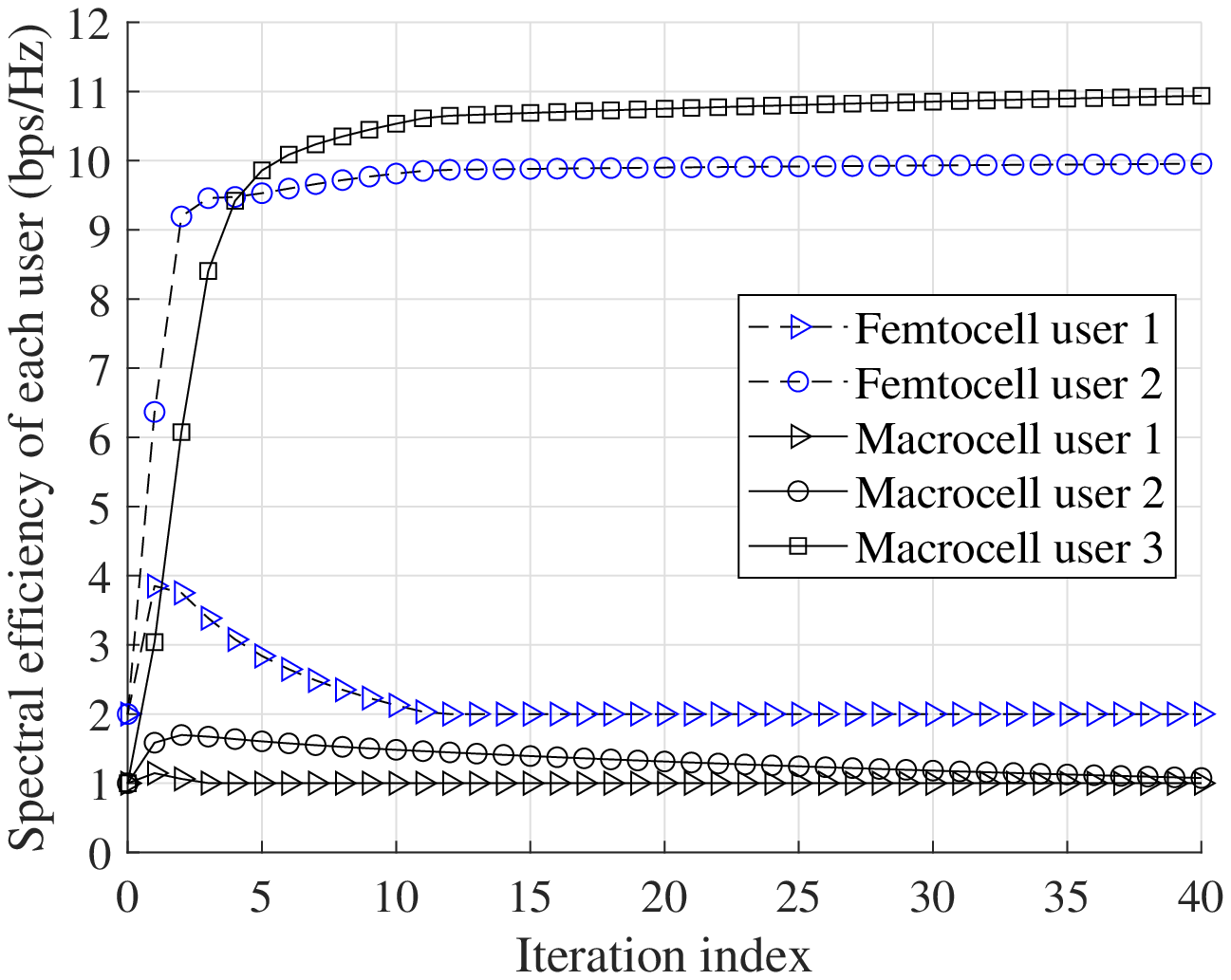}
		\label{Fig_convergenceJRPA_userrate_equalr}
	}
	\subfigure[User spectral efficiency vs. iteration index for the ARF initialization method.]{
		\includegraphics[scale=0.52]{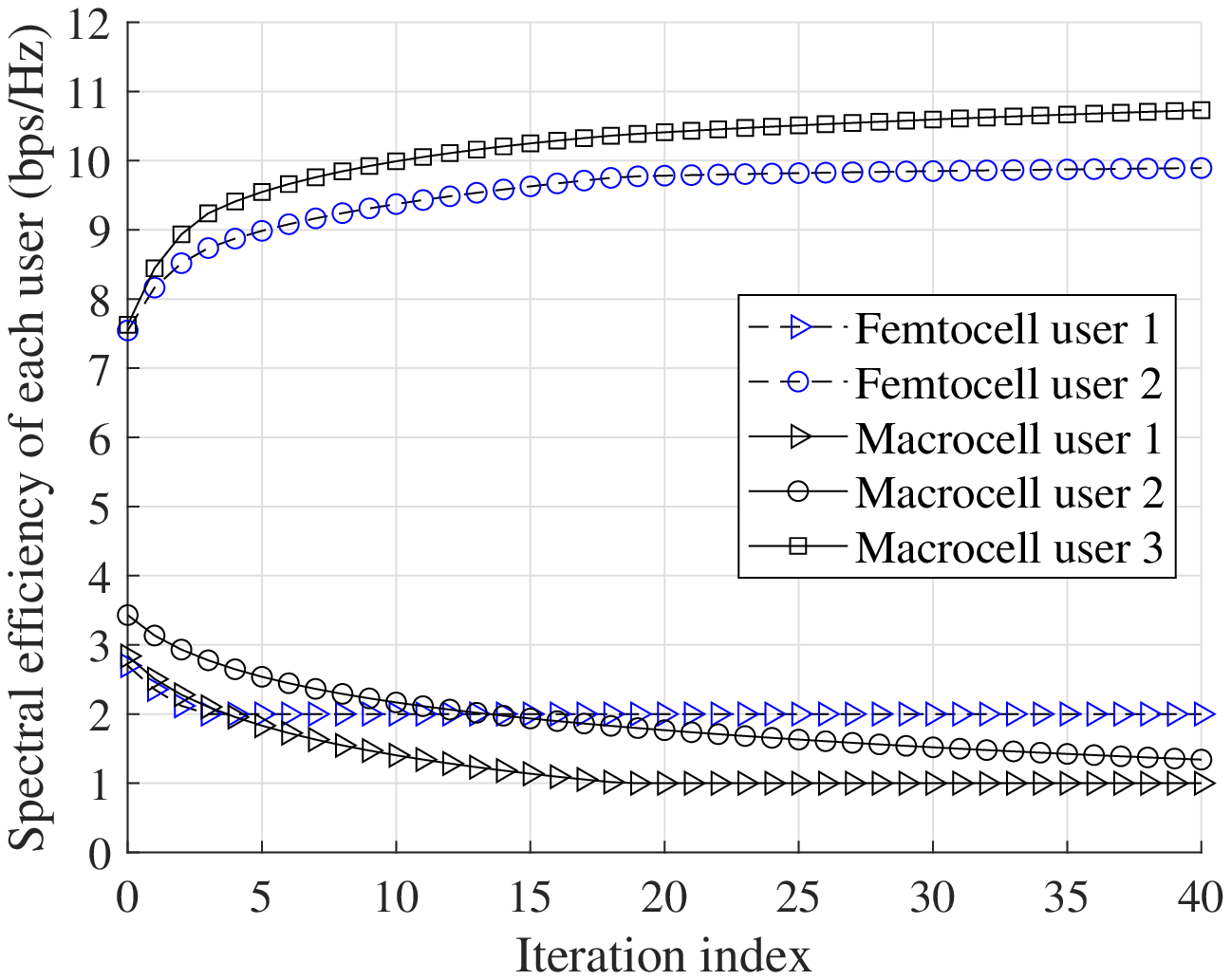}
		\label{Fig_convergenceJRPA_userrate_mr}
	}
	\subfigure[User spectral efficiency vs. iteration index for the EPA initialization method.]{
		\includegraphics[scale=0.52]{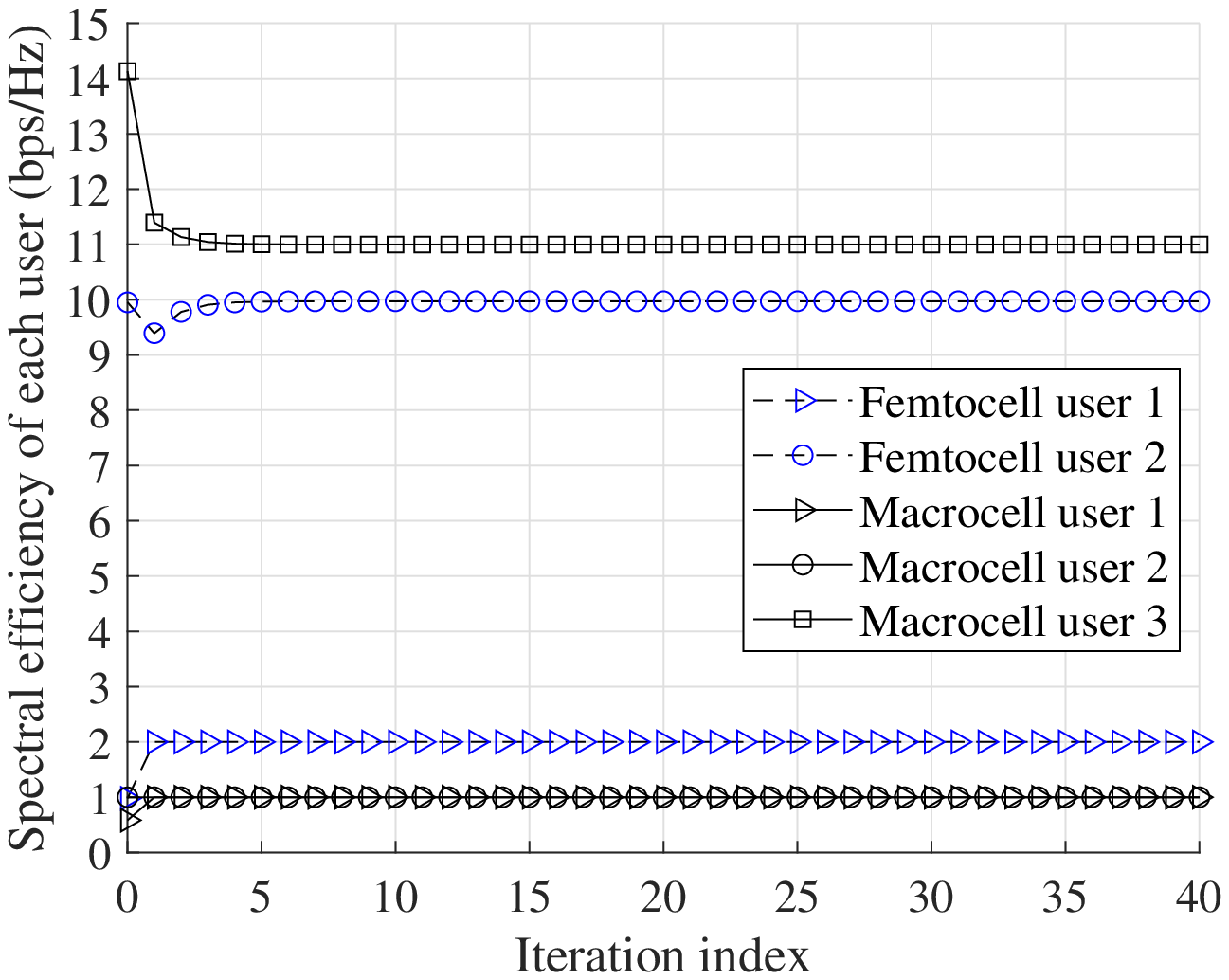}
		\label{Fig_convergenceJRPA_userrate_equalpow}
	}
	\caption
	{Convergence of Alg. \ref{Alg JRPA sequential} for different initialization methods for a scenario with $|\mathcal{U}_m|=3$, $|\mathcal{U}_f|=2$, $R^\text{min}_m=1$ bps/Hz, and $R^\text{min}_f=2$. The CNR-based decoding order is optimal: Macro-cell users: $3 \to 2 \to 1$; Femto-cell users: $2 \to 1$.}
	\label{Fig_convergenceJRPA}
\end{figure}
We assume that the CNR-based decoding order is applied.
Since the sequential programming converges to the locally optimal solution, the initial point may affect the performance of this method. 
In this study, we applied three initialization methods as
\begin{enumerate}
	\item Minimum rate equality (MRE): In this method, we obtain $\boldsymbol{r}^{(0)}$ by solving the total power minimization problem \eqref{subopt powermin problem}. It is proved that at the optimal (feasible) point, the spectral efficiency of each user achieves its minimum rate demand. Hence, we have $r^{(0)}_{b,i}=R^{\text{min}}_{b,i},~\forall b \in \mathcal{B},~i \in \mathcal{U}_b$.
	\item Approximated rate function (ARF): In this method, we substitute 
	the strictly concave term $g(r_{b,i})=\ln\left(2^{r_{b,i}}-1\right)$ with its approximated affine function $m r_{b,i}$ in \eqref{Constraint decoding 1}, where $m=\frac{\partial \ln\left(2^{R}-1\right)}{\partial R}$, where $R$ is significantly large. Then, we solve the convex approximated problem of \eqref{cent subopt problem 1} and obtain $\boldsymbol{r}^{(0)}$. For sufficiently large $R$, $g(r_{b,i})$ is upper-bounded by $m \times r_{b,i}$.
	\item Equal power allocation (EPA): In this method, we equally distribute $P^{\text{max}}_b$ to all the users in $\mathcal{U}_b$, and then obtain $\boldsymbol{r}^{(0)}$ according to \eqref{useri Mcell}. This method may lead to an infeasible $\boldsymbol{r}^{(0)}$.
\end{enumerate}
MRE provides a feasible $\boldsymbol{r}^{(0)}$. However, this method does not consider the heterogeneity of users spectral efficiency, leading to larger convergence speed and in some situations lower performance. MRE works well for the low-SINR scenarios with significantly high minimum rate demands. The ARF method provides a better feasible lower-bound for the total spectral efficiency of users at the initial point. Fig. \ref{Fig_approxrate} shows that for larger minimum rate demands, $\ln\left(2^{r}-1\right) \approx m r$. The performance gap of ARF and the globally optimal solution is allocating more powers to users operating in low spectral efficiency regions, which results in allocating less power to the stronger user deserving additional power. ARF also works well for the scenarios that the low additional minimum rate demands does not have significant impact on the users total spectral efficiency, i.e., high SINR regions. The EPA initialization method usually leads to infeasible $\boldsymbol{r}^{(0)}$ violating \eqref{Constraint QoS}, due to INI and ICI at users. More importantly, we observed that EPA also leads to high outage at the next iteration of the JRPA algorithm. In Fig. \ref{Fig_convergenceJRPA}, we selected the scenario that EPA (violating \eqref{Constraint QoS}) does not make the next iteration infeasible to show the convergence behavior of this initialization method.

The users placement are shown in Fig. \ref{Fig_topology_JRPAconverge}. 
As shown, the JRPA provides a sequence of improved solutions for any feasible initial point such that it converges to a stationary point. Interestingly, we observe that both the MRE and ARF methods converge to a unique point, which shows the low insensitivity of JRPA to these feasible initial points. In this scenario, EPA in iteration $0$ results in infeasible $\boldsymbol{r}^{(0)}$. And, JRPA finds a feasible solution $\boldsymbol{r}^{(1)}$ based on infeasible $\boldsymbol{r}^{(0)}$, which is indeed the updated initial feasible point. 
According to Figs. \ref{Fig_convergenceJRPA_userrate_equalr}-\ref{Fig_convergenceJRPA_userrate_equalpow}, we observe that at the converged point, only the NOMA cluster-head users get additional power (leading to higher spectral efficiency than their minimum rate demand). The fast convergence speed of individual rates in Figs. \ref{Fig_convergenceJRPA_userrate_equalr}-\ref{Fig_convergenceJRPA_userrate_equalpow} shows that our proposed JRPA has a fact convergence speed shown in Fig. \ref{Fig_convergenceJRPA_totrate}. In our simulations, we applied the ARF initialization method.

As is mentioned in Subsection \ref{subsection subopt order}, it is difficult to find the globally optimal JRPA for any fixed suboptimal decoding order. However, for the case that the fixed decoding order is the same as the optimal decoding order, the performance gap between the optimal JSPA and suboptimal JRPA algorithms is due to suboptimal JRPA based on sequential programming. In Fig. \ref{Fig_convergenceJRPA}, we choosed the case that the CNR-based decoding order satisfy Theorem \ref{Theorem SIC M-cell positiveterm}, so is optimal. The total spectral efficiency of users (optimal value) at the globally optimal point is shown in Fig. \ref{Fig_convergenceJRPA_totrate}. As can be seen, the sequential programming generates a sequence of improved solutions such that after few iterations, it converges to a near-optimal solution.

\subsection{Performance of the Approximated Optimal Powers in Remark \ref{remark approx power}.}\label{subsection approx opt power}
Here, we investigate the performance of approximated closed-form of optimal powers in Remark \ref{remark approx power}. The advantage of this approximation is its insensitivity to the exact CSI. Fig. \ref{Fig_apprxoptgap} compares the average gap between the exact and approximated forms of optimal powers in femto and macro-cells, separately.
\begin{figure}[tp]
	\centering
	\subfigure[Average optimal and approximated powers gap vs. AWGN power at macro-cell users.]{
		\includegraphics[scale=0.52]{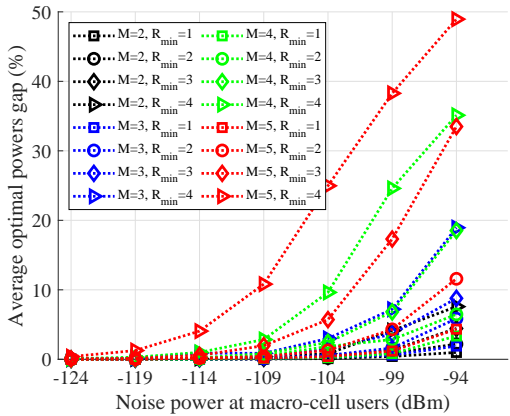}
		\label{Fig_Macro_PowApprox}
	}
	\subfigure[Average total spectral efficiency gap vs. AWGN power at macro-cell users.]{
		\includegraphics[scale=0.52]{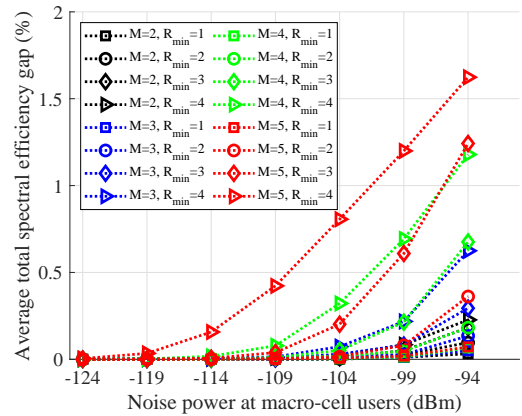}
		\label{Fig_Macro_RateApprox}
	}
	\subfigure[Average optimal and approximated powers gap vs. AWGN power at femto-cell users.]{
		\includegraphics[scale=0.52]{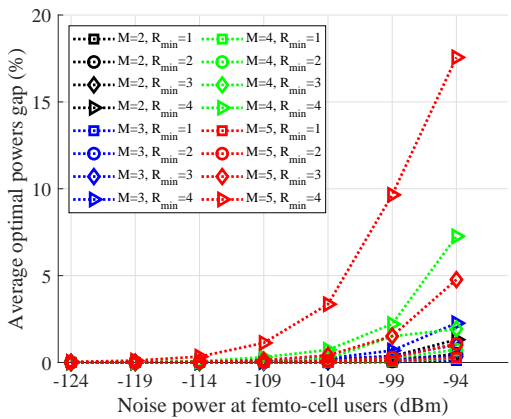}
		\label{Fig_Femto_PowApprox}
	}
	\subfigure[Average total spectral efficiency gap vs. AWGN power at femto-cell users.]{
		\includegraphics[scale=0.52]{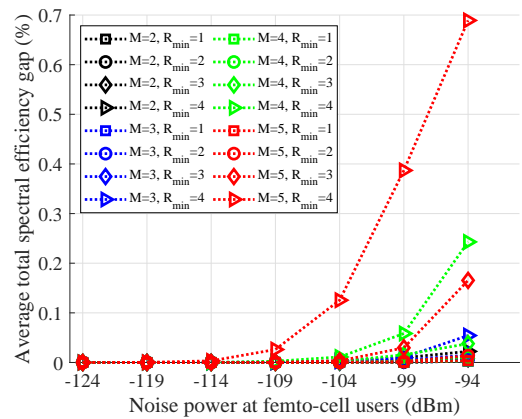}
		\label{Fig_Femto_RateApprox}
	}
	\caption
	{The performance gap between the approximated and exact closed-form expression of optimal powers in macro/femto-cell for different AWGN powers, number of macro/femto-cell users, and minimum rate demands.}
	\label{Fig_apprxoptgap}
\end{figure}
Since ICI is fully treated as AWGN, we evaluate this performance gap in different AWGN power levels of users which directly impacts the CINR of users.
In Fig. \ref{Fig_apprxoptgap}, $M$ is the number of users within the considered cell, and $R_\text{min}$ is their minimum rate demand. To reduce the randomness impact, we eliminated the Lognormal Shadowing from the path loss model (see Table \ref{Table parameters}).
As can be seen in Figs. \ref{Fig_Macro_PowApprox} and \ref{Fig_Femto_PowApprox}, the average gap between the optimal and approximated powers is increasing in the order of NOMA clusters, minimum rate demands, and specifically AWGN power. Interestingly, for lower AWGN powers, e.g., less than $-119$ dBm, this performance gap tends to zero. As a result, this approximation works well for middle and high SINR scenarios. On the other hand, we observe that the macro/femto-cell total spectral efficiency gaps between these two closed-form formulations are less than $1.5\%$ for $M\leq 4$, $R_\text{min}\leq 4$ bps/Hz, and AWGN power less than $-94$ dBm. Hence, the results show a high insensitivity level of optimal powers at macro/femto-cell users to the CSI. The impact of the approximated closed-form optimal powers on the ergodic rate regions and/or imperfect CSI can be considered as a future work.

\section{Concluding Remarks and Future Research Directions}\label{sec conclusion}
In this paper, we addressed the problem of optimal joint SIC ordering and power allocation in multi-cell NOMA systems to achieve the maximum users sum-rate. For the given total power consumption of BSs, we obtained the optimal powers in closed form. Then, we proposed a globally optimal JSPA algorithm with a significantly reduced computational complexity. For any fixed decoding order, we addressed the problem of joint rate and power allocation to maximize users sum-rate. We showed that for specific channel conditions, the CNR-based decoding order is optimal for a user pair independent from ICI. We also devised two decentralized resource allocation frameworks. Extensive numerical results are provided to evaluate the outage probability/sum-rate gains achieved by JSPA compared to FRPA and JRPA with CNR-based decoding order. The numerical results show that FRPA leads to a very high outage when the number of multiplexed users increases. JRPA has a near-optimal performance, however, there is still a certain performance gap due to suboptimal CNR-based decoding order. Moreover, the fully distributed framework has a poor performance, because of high ICI from MBS to FBS-users. The semi-centralized framework has a near-optimal performance with significant reduced complexity, such that it scales very well with larger number of FBSs and users. The fast semi-centralized algorithm is a good solution for practical implementations.

A list of possible extensions of this work is provided in the following.
\begin{enumerate}
	\item \textbf{Multi-Cluster NOMA}: The analysis of this work can be directly applied to NOMA with multiple clusters, where each user occupies one subchannel, and each subchannel has specific power budget, e.g., the system models in \cite{7557079,8114362,8352643}. For the case that the power budget of clusters is not predefined \cite{7982784}, an additional inter-cluster power allocation is required.
	\item \textbf{Hybrid NOMA}: In Hybrid NOMA, also known as multicarrier NOMA, each user can occupy more than one subcarrier. For the case that different symbols are sent over different subcarriers, the analysis of this paper can be applied, however more efforts are required to tackle the nonconvex per-user minimum rate constraints over all the assigned subcarriers. Another challenge is subcarrier allocation. The optimal joint power and subcarrier allocation is proved to be strongly NP-hard \cite{7587811,8422362,9044770}.
	\item \textbf{Decentralized Recourse Allocation Frameworks:} Our semi-centralized framework is specified for a two-tier HetNet including a single MBS and arbitrary number of FBSs and users. A more complicated scheme is when we have more than one MBS each having arbitrary number of FBSs and users. One future work would be how to generalize the analysis in this paper to a $N$-tier wireless network with efficient power control between the tiers, as well as power control among BSs within each tier.
	\item \textbf{Energy Efficiency Maximization Problem:} Another important direction of the closed-form of optimal powers obtained in this paper is energy efficiency maximization problem in both the single-cluster and multi-cluster NOMA.
	\item \textbf{NOMA with Imperfect CSI:} We showed that the approximated closed-form of optimal powers which are completely independent from channel gains are nearly optimal. However, the decoding order among users depends on the channel gains. One interesting topic would be how to find a robust decoding order for imperfect CSI while applying the approximated closed-form of optimal powers derived in this paper.
	\item \textbf{Applications of Theorem 1. SIC Sufficient Condition in Multi-Cell NOMA:} Since dynamically changing decoding orders based on CINR of users through power control among the cells is still challenging for sum-rate maximization problem, there are three schemes which can be analyzed and numerically compared in multi-cell multicarrier NOMA:
	\begin{enumerate}
		\item Considering CNR-based decoding order with FRPA scheme.
		\item Considering CNR-based decoding order and imposing Theorem 1 in user grouping (similar to applying Lemma 1 in user grouping in \cite{8353846}).
		\item Considering CNR-based decoding order and applying JRPA algorithm.
	\end{enumerate}
	In case (c), we consider the complete rate region of users instead of limiting user grouping by imposing Theorem 1 in case (b).
\end{enumerate}
It is worth noting that the analysis in this work as well as all the existing works on single-antenna NOMA cannot be generalized to the multi-antenna systems which are nondegraded in general \cite{1683918}.
The optimal successive decoding order criteria in nondegraded multi-antenna BCs depends on beamforming. Moreover, when SC-SIC is applied to the multi-antenna BCs, the users may operate in lower rate than their channel capacity for decoding their own signals \cite{1683918}. The analysis shows that NOMA looses significant multiplexing gain in multi-antenna systems \cite{9451194}.

\appendices
\section{Additional Notes on the Optimality of CINR-Based Decoding Order.}\label{appendix corol optorder}
Let $\tilde{h}_{b,k}=\frac{h_{b,k}}{I_{b,k} + \sigma^2_{b,k}}$. The rate function in \eqref{useri Mcell} can be reformulated as $$R_{b,i}=\min\limits_{k \in \{i\} \cup \Phi_{b,i}}\left\{\log_2\left( 1+ \frac{p_{b,i} \tilde{h}_{b,k}}{\sum\limits_{j \in \Phi_{b,i}} p_{b,j} \tilde{h}_{b,k} + 1} \right)\right\},$$ which is the same as the achievable rate of single-cell NOMA based on the equivalent noise. Since the SISO Gaussian BCs are degraded \cite{10.5555/1146355,NIFbook,1683918}, NOMA with CINR-based decoding order is capacity-achieving in each cell of multi-cell NOMA, so the decoding order $\tilde{h}_{b,k} > \tilde{h}_{b,i} \Rightarrow k \to i$ is optimal. 

For the sake of completeness, we provide a proof for the optimality of the CINR-based decoding order in multi-cell NOMA with single-cell processing. Similar to single-cell NOMA, in each cell $b$, we have $\frac{\partial \log_2\left( 1+\frac{p_{b,k} \tilde{h}_{b,k}}{\sum\limits_{j \in \Phi_{b,i}} p_{b,j} \tilde{h}_{b,k} + 1}\right)} {\partial \tilde{h}_{b,k}} > 0$. Assume that $\boldsymbol{\boldsymbol{\alpha}_{-b}}$ is fixed. In the following, we analytically show that at any given $\boldsymbol{p}_b$, the decoding order based on $\tilde{h}_{b,k}(\boldsymbol{\alpha}_{-b})>\tilde{h}_{b,i}(\boldsymbol{\alpha}_{-b}) \Rightarrow k \to i$ achieves the maximum total spectral efficiency of users. Assume that cell $b$ has $M$ users. Moreover, the users index are updated based on $k > i$ if $\tilde{h}_{b,k}(\boldsymbol{\alpha}_{-b})>\tilde{h}_{b,i}(\boldsymbol{\alpha}_{-b})$. We prove that the decoding order $M \to M-1 \to \dots \to 1$ outperforms any other possible decoding orders in terms of total spectral efficiency of users, so is optimal. To prove this, for two adjacent users $i$ and $i+1$ (with $\tilde{h}_{b,i+1}>\tilde{h}_{b,i}$) consider different decoding orders as A) $i+1 \to i$, and subsequently $M \to M-1 \to \dots \to 1$; B) $i \to i+1$, and subsequently $M \to M-1 \to \dots \to i+2 \to i \to i+1 \to i-1 \to \dots \to 1$. 
According to \eqref{useri Mcell}, the achievable spectral efficiency of users $i$ and $i+1$ in case A (after SIC) can be obtained by
$$
R^{A}_{b,i}(\boldsymbol{p}_b)=\log_2\left( 1+ \frac{p_{b,i} \tilde{h}_{b,i}}{\sum\limits_{j=i+1}^{M} p_{b,j} \tilde{h}_{b,i} + 1} \right),~~~~~~~R^{A}_{b,i+1}(\boldsymbol{p}_b)=\log_2\left( 1+ \frac{p_{b,i+1} \tilde{h}_{b,i+1}}{\sum\limits_{j=i+2}^{M} p_{b,j} \tilde{h}_{b,i+1} + 1} \right).
$$
The achievable spectral efficiency of users $i$ and $i+1$ in case B (after SIC) is given by
$$
R^{B}_{b,i}(\boldsymbol{p}_b)=\log_2\left( 1+ \frac{p_{b,i} \tilde{h}_{b,i}}{\sum\limits_{j=i+2}^{M} p_{b,j} \tilde{h}_{b,i} + 1} \right),~~~~~~~R^{B}_{b,i+1}(\boldsymbol{p}_b)=\log_2\left( 1+ \frac{p_{b,i+1} \tilde{h}_{b,i}}{(p_{b,i} + \sum\limits_{j=i+2}^{M} p_{b,j}) \tilde{h}_{b,i} + 1} \right).
$$
In both Cases A and B, the signal of users $i$ and $i+1$ is treated as INI at users $1,\dots,i-1$. Moreover, the signal of users $i$ and $i+1$ is scheduled to be decoded and canceled by all the users $i+2,\dots,M$. Hence, the set $\Phi_{b,k}$ of each user $k \in \{1,\dots,i-1,i+2,\dots,M\}$ is the same in both Cases A and B, resulting in the same spectral efficiency formulated in \eqref{useri Mcell}. Since the signal of all the users in $\mathcal{U}_b$ is fully treated as noise (called ICI), the set $\Phi_{b',i}$ of each user $i \in \mathcal{U}_{b'},~b' \in \mathcal{B}\setminus\{b\}$ is the same in Cases A and B resulting in the same spectral efficiency. Accordingly, changing the decoding order of two adjacent users only changes the achievable rate of these two decoding orders for the given $\boldsymbol{p}_b$. As a result, the total spectral efficiency gap between different decoding orders A and B for the given $\boldsymbol{p}_b$ can be formulated by
\begin{align*}
	&R^{A-B}_\text{gap} (\boldsymbol{p}_b)= \sum\limits_{b \in \mathcal{B}} \sum\limits_{i \in \mathcal{U}_b} R^{A}_{b,i} (\boldsymbol{p}_b) - \sum\limits_{b \in \mathcal{B}} \sum\limits_{i \in \mathcal{U}_b} R^{B}_{b,i} (\boldsymbol{p}_b) 
	\\
	=& \left(R^{A}_{b,i} (\boldsymbol{p}_b) + R^{A}_{b,i+1} (\boldsymbol{p}_b)\right) - \left(R^{B}_{b,i} (\boldsymbol{p}_b) + R^{B}_{b,i+1} (\boldsymbol{p}_b)\right)
	\\
	=& \log_2\left( \frac
	{ \left( 1+\sum\limits_{j=i}^{M} p_{b,j} \tilde{h}_{b,i} \right) \left( 1+\sum\limits_{j=i+1}^{M} p_{b,j} \tilde{h}_{b,i+1} \right) }
	{ \left( 1+\sum\limits_{j=i+1}^{M} p_{b,j} \tilde{h}_{b,i} \right) \left( 1+\sum\limits_{j=i+2}^{M} p_{b,j} \tilde{h}_{b,i+1} \right) } \right) 
	+ 
	\\
	&~~~~~~~~~~~~~~~
	\log_2\left( \frac
	{ \left( 1 + \sum\limits_{j=i+2}^{M} p_{b,j} \tilde{h}_{b,i} \right) \left(1+(p_{b,i} + \sum\limits_{j=i+2}^{M} p_{b,j}) \tilde{h}_{b,i}\right) }
	{ \left( 1 + (p_{b,i} + \sum\limits_{j=i+2}^{M} p_{b,j}) \tilde{h}_{b,i} \right) \left( 1+\sum\limits_{j=i}^{M} p_{b,j} \tilde{h}_{b,i} \right) } \right)
	\\
	=&\log_2\left( \frac
	{ \left(1+\sum\limits_{j=i+1}^{M} p_{b,j} \tilde{h}_{b,i+1}\right) \left(1+\sum\limits_{j=i+2}^{M} p_{b,j} \tilde{h}_{b,i}\right) }
	{ \left(1+\sum\limits_{j=i+1}^{M} p_{b,j} \tilde{h}_{b,i}\right) \left(1+\sum\limits_{j=i+2}^{M} p_{b,j} \tilde{h}_{b,i+1}\right) } \right)
	\\
	=&\log_2\left( \frac
	{ 1+\left(\sum\limits_{j=i+1}^{M} p_{b,j} \right) \tilde{h}_{b,i+1} + \left(\sum\limits_{j=i+2}^{M} p_{b,j} \right) \tilde{h}_{b,i} + \left(\sum\limits_{j=i+2}^{M} p_{b,j} \right) \left(\sum\limits_{j=i+1}^{M} p_{b,j} \right) \tilde{h}_{b,i} \tilde{h}_{b,i+1} }
	{ 1+\left(\sum\limits_{j=i+1}^{M} p_{b,j} \right) \tilde{h}_{b,i} + \left(\sum\limits_{j=i+2}^{M} p_{b,j} \right) \tilde{h}_{b,i+1} + \left(\sum\limits_{j=i+2}^{M} p_{b,j} \right) \left(\sum\limits_{j=i+1}^{M} p_{b,j} \right) \tilde{h}_{b,i} \tilde{h}_{b,i+1} } \right).
\end{align*}
The difference of the numerator and denominator of the latter fraction is $p_{b,i+1} \left( \tilde{h}_{b,i+1} - \tilde{h}_{b,i} \right)$, which is always positive since $\tilde{h}_{b,i+1}>\tilde{h}_{b,i}$, which results in
$R^{A-B}_\text{gap} (\boldsymbol{p}_b)>0$. Therefore, for any feasible $\boldsymbol{p}_b$, the decoding order $i+1 \to i$ for each two adjacent users $i$ and $i+1$ in cell $b$ is optimal if and only if $\tilde{h}_{b,i+1} >\tilde{h}_{b,i}$. Imposing this optimality condition to each two adjacent users in cell $b$ results in the decoding order based on $\tilde{h}_{b,k}(\boldsymbol{\alpha}_{-b})>\tilde{h}_{b,i}(\boldsymbol{\alpha}_{-b}) \Rightarrow k \to i$. As a result, $\lambda^*_{b,i,k}=1$ if and only if $\tilde{h}_{b,i}(\boldsymbol{\alpha}_{-b})\leq \tilde{h}_{b,k}(\boldsymbol{\alpha}_{-b})$, and the proof is completed.

\section{Proof of Proposition \ref{Propos optpower}.}\label{appendix optpower}
According to Corollary \ref{corol optorder 3}, the achievable spectral efficiency of each user $i \in \mathcal{U}_b$ for the fixed $\boldsymbol{\alpha}_{-b}$ and optimal decoding order $M \to M-1 \to \dots \to 1$
can be formulated by $$\tilde{R}_{b,i}(\boldsymbol{p}_b) = \log_2\left( 1+ \frac{p_{b,i} \tilde{h}_{b,i} (\boldsymbol{\alpha}_{-b})}{\sum\limits_{j=i+1}^{M} p_{b,j} \tilde{h}_{b,i}(\boldsymbol{\alpha}_{-b}) + 1} \right).$$ 
Note that $\tilde{h}_{b,k} (\boldsymbol{\alpha}_{-b}) > \tilde{h}_{b,i} (\boldsymbol{\alpha}_{-b})$ for each $i,k \in \mathcal{U}_b,~k>i$. Moreover, $\tilde{R}_{b,i}$ is independent from $\boldsymbol{p}_{-b}$ for the given $\boldsymbol{\alpha}$, meaning that \eqref{centg problem} can be equivalently divided into $B$ single-cell NOMA sub-problems. In cell $b$, we find $\boldsymbol{p}^*_b$ by solving the following sub-problem as
\begin{subequations}\label{cell problem 2}
	\begin{align}\label{obf cell problem 2}
		\max_{ \boldsymbol{p}_b \geq 0}\hspace{.0 cm}	
		~~ & \sum\limits_{i=1}^{M} \tilde{R}_{b,i} (\boldsymbol{p}_b)
		\\
		\label{Constraint max powercell}
		\text{s.t.}~~& \sum\limits_{i \in \mathcal{U}_b} p_{b,i} = \alpha_b P^{\text{max}}_b,
		\\
		\label{Constraint QoScell 2}
		& \tilde{R}_{b,i}(\boldsymbol{p}_b) \geq R^{\text{min}}_{b,i},~\forall i \in \mathcal{U}_b.
	\end{align}
\end{subequations}
Similar to single-cell NOMA, it can be easily shown that the Hessian of \eqref{obf cell problem 2} is negative definite in $\boldsymbol{p}_b$ for $\tilde{h}_{b,k} (\boldsymbol{\alpha}_{-b}) > \tilde{h}_{b,i} (\boldsymbol{\alpha}_{-b})$ for each $i,k \in \mathcal{U}_b,~k>i$ \cite{8352643}, so the objective function \eqref{obf cell problem 2} is strictly concave in $\boldsymbol{p}_b$. \eqref{Constraint QoScell 2} can be rewritten as the following linear constraint $2^{R^{\text{min}}_{b,i}} \left(1+\sum\limits_{j=i+1}^{M} p_{b,j} \tilde{h}_{b,i}\right) \leq 1+\sum\limits_{j=i}^{M} p_{b,j} \tilde{h}_{b,i}$. Hence, the feasible region of \eqref{cell problem 2} is affine, so is convex. Accordingly, the problem \eqref{cell problem 2} is strictly convex in $\boldsymbol{p}_b$. The Slater's condition holds in \eqref{cell problem 2} since it is convex and there exists $\boldsymbol{p}_b \geq 0$ satisfying \eqref{Constraint QoScell 2} with strict inequalities. Therefore, the strong duality holds in \eqref{cell problem 2}. As a result, the KKT conditions are satisfied and the optimal solution $\boldsymbol{p}^*$ can be obtained by using the Lagrange dual method \cite{Boydconvex}.
The Lagrange function (upper-bound) of \eqref{cell problem 2} is given by
\begin{multline*}
	L(\boldsymbol{p}_b,\boldsymbol{\mu},\boldsymbol{\delta},\nu) = \sum\limits_{i=1}^{M} \log_2\left( 1+\frac{p_{b,i} \tilde{h}_{b,i}}{1+\sum\limits_{j=i+1}^{M} p_{b,j} \tilde{h}_{b,i}} \right) +
	\\
	\sum\limits_{i=1}^{M} \mu_i \left(  \log_2\left( 1+\frac{p_{b,i} \tilde{h}_{b,i}}{1+\sum\limits_{j=i+1}^{M} p_{b,j} \tilde{h}_{b,i}} \right) - R^{\text{min}}_{b,i} \right)
	+ \sum\limits_{i=1}^{M} \delta_i p_{b,i} +\nu \left( \alpha_b P^{\text{max}}_b - \sum\limits_{i=1}^{M} p_{b,i} \right),
\end{multline*}
where $\boldsymbol{\mu}=[\mu_1,\dots,\mu_M]$, $\nu$, and $\boldsymbol{\delta}=[\delta_1,\dots,\delta_M]$ are the Lagrangian multipliers corresponding to the constraints \eqref{Constraint QoScell 2}, \eqref{Constraint max powercell}, and $p_{b,i}\geq 0,~i=1,\dots,M$, respectively. The Lagrange dual problem is given by
\begin{subequations}
	\begin{align*}
		\min_{\boldsymbol{\mu},\boldsymbol{\delta},\nu}\hspace{.0 cm}	
		~~ & \sup\limits_{\boldsymbol{p}} \left\{ L(\boldsymbol{p},\boldsymbol{\mu},\boldsymbol{\delta},\nu) \right\}
		\\
		\text{s.t.}~~& \mu_i \geq 0,~\forall i=1,\dots,M,
		\\
		& \delta_i \geq 0,~\forall i=1,\dots,M.
	\end{align*}
\end{subequations}
The KKT conditions are listed below:
\begin{enumerate}
	\item Feasibility of the primal problem \eqref{cell problem 2}:
	$$
	\hspace{-0.4cm}\textbf{C-1.1:}~\log_2\left( 1+\frac{p^*_{b,i} \tilde{h}_{b,i}}{1+\sum\limits_{j=i+1}^{M} p^*_{b,j} \tilde{h}_{b,i}} \right) \geq R^{\text{min}}_{b,i},~\forall i,~~\textbf{C-1.2:}~p^*_{b,i} \geq 0,~\forall i,~~\textbf{C-1.3:}~\sum\limits_{i=1}^{M} p^*_{b,i} = \alpha_b P^{\text{max}}_b.
	$$
	\item Feasibility of the dual problem:
	$$
	\textbf{C-2.1:}~\mu^*_i \geq 0,~\forall i=1,\dots,M,~~~~~~\textbf{C-2.2:}~\delta^*_i \geq 0,~\forall i=1,\dots,M,~~~~~~\textbf{C-2.3:}~\nu^* \in \mathbb{R}.
	$$
	\item The complementary slackness conditions:
	$$
	\textbf{C-3.1:}~\mu^*_i \left(  \log_2\left( 1+\frac{p^*_{b,i} \tilde{h}_{b,i}}{1+\sum\limits_{j=i+1}^{M} p^*_{b,j} \tilde{h}_{b,i}} \right) - R^{\text{min}}_{b,i} \right)=0,\forall i=1,\dots,M,$$
	$$
	\textbf{C-3.2:}~\delta^*_i p^*_{b,i} = 0,~\forall i=1,\dots,M.
	$$
	\item The condition $\nabla_{\boldsymbol{p}^*_b} L(\boldsymbol{p}^*_b,\boldsymbol{\mu}^*,\boldsymbol{\delta}^*,\nu^*) =0$, which implies that
	\begin{multline*}
		\textbf{C-4:}~\frac{\partial L}{\partial p^*_{b,i}} = \sum\limits_{k=1}^{i} \frac{1+\mu^*_i}{\ln 2}.\frac{\tilde{h}_{b,k}}{1+\sum\limits_{j=k}^{M} p^*_{b,j} \tilde{h}_{b,k}} - \sum\limits_{k=1}^{i-1} \frac{1+\mu^*_i}{\ln 2}.\frac{\tilde{h}_{b,k}}{1+\sum\limits_{j=k+1}^{M} p^*_{b,j} \tilde{h}_{b,k}} + \delta^*_i - \nu^* = 0, \\
		\forall i=1,\dots,M.
	\end{multline*}
	This equation can be reformulated by
	\begin{multline*}
		\frac{\partial L}{\partial p^*_{b,i}} =
		\frac{1+\mu^*_i}{\ln 2}.\frac{h_{b,1}}{1+\sum\limits_{j=1}^{M} p^*_{b,j} \tilde{h}_{b,1}} +
		\sum\limits_{k=1}^{i-1} \frac{1+\mu^*_i}{\ln 2}.\left( \frac{\tilde{h}_{b,k+1}} 
		{ 1+\sum\limits_{j=k+1}^{M} p^*_{b,j} \tilde{h}_{b,k+1}} - \frac{\tilde{h}_{b,k}}{1+\sum\limits_{j=k+1}^{M} p^*_{b,j} \tilde{h}_{b,k}} \right)+
		\\
		\delta^*_i - \nu^* = 0, ~\forall i=1,\dots,M.
	\end{multline*}
	To ease of convenience, we indicate $A (\boldsymbol{p}^*_b)=\frac{1}{\ln 2}.\frac{\tilde{h}_{b,1}}{1+\sum\limits_{j=1}^{M} p^*_{b,j} \tilde{h}_{b,1}}$ and $B_{k} (\boldsymbol{p}^*_b) = \frac{1}{\ln 2}.\frac{\tilde{h}_{b,k+1}} { 1+\sum\limits_{j=k+1}^{M} p^*_{b,j} \tilde{h}_{b,k+1}} - \frac{\tilde{h}_{b,k}}{1+\sum\limits_{j=k+1}^{M} p^*_{b,j} \tilde{h}_{b,k}},~\forall k=1,\dots,M-1$.
	Then, the last KKT condition can be reformulated as
	\begin{equation}\label{KKT power Muser}
		\textbf{C-4.1:}~\frac{\partial L}{\partial p^*_{b,i}} =
		\left(1+\mu^*_i\right) A(\boldsymbol{p}^*_b) + \left(1+\mu^*_i\right) \sum\limits_{k=1}^{i-1} B_{k}(\boldsymbol{p}^*_b)	+ \delta^*_i - \nu^* = 0, ~\forall i=1,\dots,M.
	\end{equation}
\end{enumerate}
The primal dual $\delta^*_i,~\forall i=1,\dots,M,$ acts as a slack variable in \textbf{C-4.1} (due to the KKT condition \textbf{C-2.2}), so it can be eliminated by reformulating the KKT conditions (\textbf{C-4.1},\textbf{C-2.2}) and \textbf{C-3.2}, respectively as
\begin{equation}\label{transf cond2 KKTM}
	\nu^* \geq \left(1+\mu^*_i\right) \left(A(\boldsymbol{p}^*_b) + \sum\limits_{k=1}^{i-1} B_{k}(\boldsymbol{p}^*_b)\right),~\forall i=1,\dots,M,
\end{equation}
and
\begin{equation}\label{transf cond3 KKTM}
	p^*_{b,i} \left( \nu^* - \left(1+\mu^*_i\right) \left(A(\boldsymbol{p}^*_b) + \sum\limits_{k=1}^{i-1} B_{k}(\boldsymbol{p}^*_b)\right) \right) = 0,~\forall i=1,\dots,M.
\end{equation}
Obviously, $A(\boldsymbol{p}^*),~i=1,\dots,M,$ is positive in $\boldsymbol{p}^*_b$. According to Corollary \ref{corol optorder 3}, $B_{k}(\boldsymbol{p}^*_b),~i=1,\dots,M,$ is also positive in $\boldsymbol{p}^*_b$, since $\tilde{h}_{b,k+1} > \tilde{h}_{b,k}$. To simplify the derivations, in the following, we assume that $R^{\text{min}}_{b,i}>0,~\forall i=1,\dots,M$. Then, we show that the derivations are valid for the case that $R^{\text{min}}_{b,i}=0$ for some $i \in \mathcal{U}_b$.
The assumption $R^{\text{min}}_{b,i}>0,~\forall i=1,\dots,M$ implies that $p^*_{b,i}>0,~\forall i=1,\dots,M$ in $\textbf{C-1.2}$. According to \eqref{transf cond3 KKTM}, we have
\begin{equation}\label{transf cond4 KKTM}
	\nu^* = \left(1+\mu^*_i\right) \left(A(\boldsymbol{p}^*_b) + \sum\limits_{k=1}^{i-1} B_{k}(\boldsymbol{p}^*_b)\right),~\forall i=1,\dots,M.
\end{equation}
Consider two adjacent users $i$ and $i+1$. According to \eqref{transf cond4 KKTM}, we have
$$
\left(1+\mu^*_i\right) \left(A(\boldsymbol{p}^*_b) + \sum\limits_{k=1}^{i-1} B_{k}(\boldsymbol{p}^*_b)\right) = \left(1+\mu^*_{i+1}\right) \left(A(\boldsymbol{p}^*_b) + \sum\limits_{k=1}^{i} B_{k}(\boldsymbol{p}^*_b)\right),
$$
Since $A(\boldsymbol{p}^*_b)>0$ and $B_{k}(\boldsymbol{p}^*_b)>0,\forall k$, we have $A(\boldsymbol{p}^*_b) + \sum\limits_{k=1}^{i} B_{k}(\boldsymbol{p}^*_b) > A(\boldsymbol{p}^*_b) + \sum\limits_{k=1}^{i-1} B_{k}(\boldsymbol{p}^*_b)$. Accordingly, the latter equality holds if $\mu^*_{i+1} < \mu^*_i$. Therefore, there exists $\nu^*$ satisfying \eqref{transf cond4 KKTM} if $\mu^*_{i+1} < \mu^*_i$ for each $i \in \mathcal{U}_b$. Accordingly, the following strict inequalities hold
$$
\mu^*_M < \mu^*_{M-1} < \dots < \mu^*_1.
$$ 
Based on Condition \textbf{C-2.1}, we have $\mu^*_M \geq 0$ which implies that $\mu^*_i>0,~\forall i=1,\dots,M-1$. According to Condition \textbf{C-3.1}, the optimal power for users $1,\dots,M-1$ can be obtained by
\begin{equation}\label{opt poweri Scell}
	\log_2\left( 1+\frac{p^*_{b,i} \tilde{h}_{b,i}}{1+\sum\limits_{j=i+1}^{M} p^*_{b,j} \tilde{h}_{b,i}} \right)= R^{\text{min}}_{b,i},~\forall i=1,\dots,M-1.
\end{equation}
According to the power condition \textbf{C-1.3}, the optimal power of user $M$ can be obtained by
\begin{equation}\label{opt powerM Scell}
	p^*_{b,M}=\alpha_b P^{\text{max}}_b-\sum\limits_{i=1}^{M-1} p^*_{b,i}.
\end{equation}
Note that $\mu^*_M > 0$ implies that $\log_2\left( 1 + p^*_{b,M} \tilde{h}_{b,M}  \right)= R^{\text{min}}_{b,i}$ due to Condition \textbf{C-3.1} which may violate Condition \textbf{C-1.3}. Therefore, at the optimal point $p^*_{b,M}$ obtained by \eqref{opt powerM Scell}, we have $\mu^*_M = 0$.
Additionally, $\nu^*$ can take any value since at the optimal point, the KKT condition \textbf{C-1.3} holds.
From \eqref{opt poweri Scell}, it can be concluded that at the optimal point $\boldsymbol{p}^*_b$, the allocated power to all the users with lower decoding order, i.e., users $i=1,\dots,M-1$, is to only maintain their minimum spectral efficiency demand $R^{\text{min}}_{b,i}$. Moreover, \eqref{opt powerM Scell} proves that only the NOMA cluster-head user $M$ deserves additional power.
According to \eqref{opt poweri Scell}, the optimal power for each user $i<M$ can be obtained by
\begin{equation}\label{opt pow t}
	p^*_{b,i}=\frac{T_{b,i} \left( 1 + \left(\alpha_b P^{\text{max}}_b-\sum\limits_{j=1}^{i-1} p^*_{b,j}\right) \tilde{h}_{b,i} \right)}{1+T_{b,i} \tilde{h}_{b,i}},~\forall i=1,\dots,M-1,
\end{equation}
where $T_{b,i}=\frac{2^{R^{\text{min}}_{b,i}} -1}{\tilde{h}_{b,i}},~\forall i=1,\dots,M-1$.
For the case that $R^{\text{min}}_{b,i} \to 0$ for each user $i=1,\dots,M-1$, then $T_{b,i} \to 0$. Therefore, $p^*_{b,i} \to 0$ meaning that when the spectral efficiency demand of the weaker user is zero, no power will be allocated to that user. According to \eqref{opt pow t}, $p^*_{b,i}$ depends on optimal powers $p^*_{b,j},~\forall j=1,\dots,i-1$. Hence, the optimal powers can be directly obtained by calculating $p^*_{b,1} \Rightarrow p^*_{b,2} \Rightarrow \dots \Rightarrow p^*_{b,M-1}$ by \eqref{opt pow t}, and finally $p^*_{b,M}$ according to \eqref{opt powerM Scell}.
To find a closed-form expression for $p^*_{b,i}$, we rewrite \eqref{opt pow t} as
\begin{equation*}
	p^*_{b,i}=\beta_{b,i} \left(P_b^{\text{max}} - \sum\limits_{j=1}^{i-1} p^*_{b,j} + \frac{1}{\tilde{h}_{b,i}} \right),~\forall i=1,\dots,M-1,
\end{equation*}
where $\beta_{b,i}=\frac{2^{R^{\text{min}}_{b,i}} -1}{2^{R^{\text{min}}_{b,i}}},~\forall i=1,\dots,M-1$. Then, we have
\begin{align*}
	p^*_{b,i}&=\beta_{b,i} \left(\alpha_b P_b^{\text{max}} - p^*_{b,i-1} - \sum\limits_{j=1}^{i-2} p^*_{b,j} + \frac{1}{\tilde{h}_{b,i}}  \right)
	\\
	&
	\\
	&=\beta_{b,i} \bigg(\alpha_b P_b^{\text{max}} - \beta_{b,i-1} \left(\alpha_b P_b^{\text{max}} - \sum\limits_{j=1}^{i-2} p^*_{b,j} + \frac{1}{\tilde{h}_{b,i-1}}  \right) - \sum\limits_{j=1}^{i-2} p^*_{b,j} + \frac{1}{\tilde{h}_{b,i}}  \bigg)
	\\
	&=\beta_{b,i} \bigg( \left(1-\beta_{b,i-1}\right) \alpha_b P_b^{\text{max}} - \left(1-\beta_{b,i-1}\right) 
	\sum\limits_{j=1}^{i-1} p^*_{b,j} + \frac{1}{\tilde{h}_{b,i}} - \frac{\beta_{b,i-1}}{\tilde{h}_{b,i-1}} \bigg)
	\\
	&\vdots
	\\
	&=\beta_{b,i} \bigg( \left(1-\beta_{b,i-1}\right) \left(1-\beta_{b,i-2}\right)\dots \left(1-\beta_{b,1}\right) \alpha_b P_b^{\text{max}} + \frac{1}{\tilde{h}_{b,i}} - \frac{\beta_{b,i-1}}{\tilde{h}_{b,i-1}} - \frac{\left(1-\beta_{b,i-1}\right) \beta_{b,i-2}}{\tilde{h}_{b,i-2}} \dots
	\\
	&~~~~~~~~-\frac{\left(1-\beta_{b,i-1}\right) \left(1-\beta_{b,i-2}\right) \dots \left(1-\beta_{b,2}\right) \beta_{b,1}}{\tilde{h}_{b,1}} \bigg).
\end{align*}
According to the above, we have
\begin{equation*}
	p^*_{b,i}=\beta_{b,i} \left( \prod\limits_{j=1}^{i-1} \left(1-\beta_{b,j}\right) \alpha_b P^{\text{max}}_b
	+\frac{1}{\tilde{h}_{b,i}}-
	\sum\limits_{j=1}^{i-1} \frac{\beta_{b,j} \prod\limits_{k=j+1}^{i-1} \left(1-\beta_{b,k}\right)} {\tilde{h}_{b,j}} \right)
	,\forall i=1,\dots,M-1.
\end{equation*}
According to \eqref{opt powerM Scell}, the optimal power of the NOMA cluster-head user $M$ can be obtained by
\begin{equation*}
	p^*_{b,M}=\alpha_b P^{\text{max}}_b - \sum\limits_{i=1}^{M-1} \beta_{b,i} \left( \prod\limits_{j=1}^{i-1} \left(1-\beta_{b,j}\right) \alpha_b P^{\text{max}}_b +\frac{1}{\tilde{h}_{b,i}}-
	\sum\limits_{j=1}^{i-1} \frac{\beta_{b,j} \prod\limits_{k=j+1}^{i-1} \left(1-\beta_{b,k}\right)} {\tilde{h}_{b,j}} \right).
\end{equation*}

\section{Closed-Form Expression of Optimal Powers for Total Power Minimization Problem}\label{appendix optpower powmin}
Here, we first obtain the closed-form expression of optimal powers for a $M$-user singe-cell NOMA system under the CNR-based decoding order. Then, we extend the results to the case that ICI is fixed in cell $b$ serving $M$ users and find the closed-form expressions of powers in $\boldsymbol{p}^*_b$ under the CINR-based decoding order. 

In the power minimization problem of a $M$-user single-cell NOMA system with $\tilde{h}_1 < \tilde{h}_2 < \dots < \tilde{h}_M$ and thus the optimal (CNR-based) decoding order $M \to M-1 \to \dots \to 1$, the achievable spectral efficiency of user $i$ can be obtained by $R_{i}(\boldsymbol{p})= \log_2\left( 1+ \frac{p_{i} \tilde{h}_{i}} {\sum\limits_{j=i+1}^{M} p_{j} \tilde{h}_{i} + 1} \right)$. Here, $\tilde{h}_{i}=\frac{h_{i}}{\sigma^2_{i}}$ is the normalized channel gain of user $i$ by its noise power $\sigma^2_{i}$.
The total power minimization problem under the CNR-based decoding order can be formulated by
\begin{subequations}\label{problem Scell Powmin}
	\begin{align}\label{obf Scell Powmin}
	\min_{ \boldsymbol{p} \geq 0 }\hspace{.0 cm}	
	~~ & \sum\limits_{i=1}^{M} p_i
	\\
	\label{Constraint max power scell}
	\text{s.t.}~~& \sum\limits_{i=1}^{M} p_i \leq P^{\text{max}},
	\\
	\label{Constraint Qos scell}
	& \log_2\left( 1+ \frac{p_{i} \tilde{h}_{i}} {\sum\limits_{j=i+1}^{M} p_{j} \tilde{h}_{i} + 1} \right) \geq R^{\text{min}}_i,~\forall i=1,\dots,M.
	\end{align}
\end{subequations}
The minimum rate constraint \eqref{Constraint Qos scell} can be rewritten as $p_i h_{i} \geq \left(2^{R^{\text{min}}_i}-1\right) \left(\sum\limits_{j=i+1}^{M} p_j \tilde{h}_i + 1\right),~\forall i=1,\dots,M$, which is affine in $\boldsymbol{p}$. Hence, problem \eqref{problem Scell Powmin} is convex in $\boldsymbol{p}$ with an affine feasible set. 
It can be shown that in the power minimization problem, the allocated power to each user is only to maintain its minimal rate demand. In the following, we prove this proposition by analyzing the KKT conditions.
The Slater's condition holds in \eqref{problem Scell Powmin} since it is convex and there exists $\boldsymbol{p} \geq 0$ satisfying \eqref{Constraint max power scell} and \eqref{Constraint Qos scell} with strict inequalities. Therefore, the strong duality in \eqref{problem Scell Powmin} holds. Hence, the KKT conditions are satisfied and the optimal solution $\boldsymbol{p}^*$ can be obtained by using the Lagrange dual method \cite{Boydconvex}.
The Lagrange function (lower-bound) of \eqref{problem Scell Powmin} is given by
\begin{multline*}
L(\boldsymbol{p},\boldsymbol{\mu},\boldsymbol{\delta},\nu) = \sum\limits_{i=1}^{M} p_i 
+ \sum\limits_{i=1}^{M} \mu_i \left( R^{\text{min}}_i - \log_2\left( 1+ \frac{p_i \tilde{h}_i}{1+\sum\limits_{j=i+1}^{M} p_j \tilde{h}_i} \right) \right)
+ \sum\limits_{i=1}^{M} \delta_i (-p_i) 
\\
+\nu \left( \sum\limits_{i=1}^{M} p_i - P^{\text{max}} \right),
\end{multline*}
where $\boldsymbol{\mu}=[\mu_1,\dots,\mu_M]$,  $\boldsymbol{\delta}=[\delta_1,\dots,\delta_M]$, and $\nu$ are the Lagrangian multipliers corresponding to the constraints \eqref{Constraint Qos scell}, \eqref{Constraint max power scell}, and $p_i\geq 0,~i=1,\dots,M$, respectively. The Lagrange dual problem is given by
\begin{subequations}
	\begin{align*}
	\max_{\boldsymbol{\mu},\boldsymbol{\delta},\nu}\hspace{.0 cm}	
	~~ & \inf\limits_{\boldsymbol{p}} \left\{ L(\boldsymbol{p},\boldsymbol{\mu},\boldsymbol{\delta},\nu) \right\}
	\\
	\text{s.t.}~~& \mu_i \geq 0,~\forall i=1,\dots,M,
	\\
	& \delta_i \geq 0,~\forall i=1,\dots,M.
	\end{align*}
\end{subequations}
The KKT conditions are listed below.
\begin{enumerate}
	\item Feasibility of the primal problem \eqref{problem Scell Powmin}:
	$$
	\textbf{C-1.1:}~\log_2\left( 1+ \frac{p^*_i \tilde{h}_i}{1+\sum\limits_{j=i+1}^{M} p^*_j \tilde{h}_i} \right) \geq R^{\text{min}}_i,~\forall i,~~~~~\textbf{C-1.2:}~p^*_i \geq 0,~\forall i,~~~~~\textbf{C-1.3:}~\sum\limits_{i=1}^{M} p^*_i \leq P^{\text{max}}.
	$$
	\item Feasibility of the dual problem:
	$$
	\textbf{C-2.1:}~\mu^*_i \geq 0,~\forall i=1,\dots,M,~~~~~~\textbf{C-2.2:}~\delta^*_i \geq 0,~\forall i=1,\dots,M,~~~~~~\textbf{C-2.3:}~\nu^* \geq 0.
	$$
	\item The complementary slackness conditions:
	$$
	\textbf{C-3.1:}~\mu^*_i \left( R^{\text{min}}_i - \log_2\left( 1+ \frac{p^*_i \tilde{h}_i}{1+\sum\limits_{j=i+1}^{M} p^*_j \tilde{h}_i} \right)  \right)=0,\forall i=1,\dots,M,
	$$
	$$
	\textbf{C-3.2:}~\delta^*_i p^*_i = 0,~\forall i=1,\dots,M,~~~~~~~~~
	\textbf{C-3.3:}~\nu^*\left( \sum\limits_{i=1}^{M} p^*_i - P^{\text{max}} \right) = 0.
	$$
	\item The condition $\nabla_{\boldsymbol{p}^*} L(\boldsymbol{p}^*,\boldsymbol{\mu}^*,\boldsymbol{\delta}^*,\nu^*) =0$, which implies that
	$$
	\textbf{C-4:}~\frac{\partial L}{\partial p^*_i} = 1 - 
	\sum\limits_{j=1}^{i-1} \mu^*_j \left(2^{R^{\text{min}}_j}-1\right) \tilde{h}_j - \mu^*_i \tilde{h}_i -	\delta^*_i + \nu^* = 0, ~\forall i=1,\dots,M.
	$$
	Let $B_j=\left(2^{R^{\text{min}}_j}-1\right) \tilde{h}_j,~j=1,\dots,M$. The latter equation is rewritten as
	\begin{equation*}
	\textbf{C-4.1:}~\frac{\partial L}{\partial p^*_i} = 1 - 
	\sum\limits_{j=1}^{i-1} \mu^*_j B_j - \mu^*_i \tilde{h}_i -	\delta^*_i + \nu^* = 0, ~\forall i=1,\dots,M.
	\end{equation*}
\end{enumerate}
The primal dual $\delta^*_i,~\forall i=1,\dots,M,$ acts as a slack variable in \textbf{C-4.1} (due to the KKT condition \textbf{C-2.2}), so it can be eliminated by reformulating the KKT conditions (\textbf{C-4.1},\textbf{C-2.2}) and \textbf{C-3.2}, respectively as
\begin{equation}\label{transf minpow1 KKT}
\nu^* \geq  \sum\limits_{j=1}^{i-1} \mu^*_j B_j + \mu^*_i \tilde{h}_i - 1 ,~\forall i=1,\dots,M,
\end{equation}
and
\begin{equation}\label{transf minpow2 KKT}
p^*_i \left( \nu^* - \left(\sum\limits_{j=1}^{i-1} \mu^*_j B_j + \mu^*_i \tilde{h}_i - 1\right) \right) = 0,~\forall i=1,\dots,M.
\end{equation}
We first assume that $R^{\text{min}}_i>0,~\forall i=1,\dots,M$. Then, we show that the derivations are valid for $R^{\text{min}}_i=0$ for some $i$. The assumption $R^{\text{min}}_i>0,~\forall i=1,\dots,M$ implies that $p^*_i>0,~\forall i=1,\dots,M$ in $\textbf{C-1.2}$. According to \eqref{transf minpow2 KKT}, we have
\begin{equation}\label{result minpow2 KKT}
\nu^* = \sum\limits_{j=1}^{i-1} \mu^*_j B_j + \mu^*_i \tilde{h}_i - 1,~\forall i=1,\dots,M.
\end{equation}
Consider two adjacent users $i$ and $i+1$. According to \eqref{result minpow2 KKT}, we have
$$
\sum\limits_{j=1}^{i-1} \mu^*_j B_j + \mu^*_i \tilde{h}_i = \left(\sum\limits_{j=1}^{i-1} \mu^*_j B_j + \mu^*_i B_i\right) + \mu^*_{i+1} h_{i+1},
$$
which can be simplified to
$$
\mu^*_i \tilde{h}_i = \mu^*_i B_i + \mu^*_{i+1} h_{i+1}~~~~\Rightarrow~~~~ \mu^*_i \left(\tilde{h}_i - B_i\right) = \mu^*_{i+1} h_{i+1}.
$$
In the following, we prove that $\mu^*_{i+1} < \mu^*_i$. Let $\mu^*_{i+1} \geq \mu^*_i$. It implies that $h_{i+1} \leq \tilde{h}_i - B_i$, which is equivalent to $h_{i+1} + B_i \leq \tilde{h}_i$. Since $B_i>0$, it results in $h_{i+1} \leq \tilde{h}_i$ which violates our assumption $h_{i+1} > \tilde{h}_i$. Accordingly, \eqref{result minpow2 KKT} holds if $\mu^*_{i+1} < \mu^*_i$ for each two adjacent users $i$ and $i+1$. 
Hence, there exists $\nu^*$ satisfying \eqref{result minpow2 KKT} if 
$$\mu^*_M < \mu^*_{M-1} < \dots < \mu^*_1.$$ 
According to Condition \textbf{C-2.1}, we have $\mu^*_M \geq 0$ which implies that $\mu^*_i>0,~\forall i=1,\dots,M-1$. The optimal Lagrangian multiplier $\mu^*_M$ of the NOMA cluster-head user is also positive. This is due to the fact that $r_M(p^*_M)=\log_2\left( 1+ p^*_M h_M \right)$ in \eqref{Constraint Qos scell} is monotonically increasing in $p^*_M$, and also independent from the other optimal powers. Hence, at the optimal point which corresponds to the minimal $p^*_M$, the spectral efficiency $r_M(p^*_M)$ reaches to its lower-bound $R^{\text{min}}_M$. Hence, we have $\log_2\left( 1+ p^*_M h_M \right)=R^{\text{min}}_M$. According to the KKT condition \textbf{C-3.1}, $\mu^*_M>0$. As a result, we have
$$0< \mu^*_M < \mu^*_{M-1} < \dots < \mu^*_1.$$
According to Condition \textbf{C-3.1}, the optimal power for each user $i=1,\dots,M$ can be obtained by
\begin{equation}\label{opt poweri powmin Scell}
\log_2\left( 1+ \frac{p^*_i \tilde{h}_i}{1+\sum\limits_{j=i+1}^{M} p^*_j \tilde{h}_i}  \right)= R^{\text{min}}_i,~\forall i=1,\dots,M.
\end{equation}
It is noteworthy that the duality gap between the primal and dual problems is zero when the Slater's condition holds. This condition implies that there exists $\boldsymbol{p}$ such that the KKT condition \textbf{C-1.3} with strict inequality holds, meaning that the feasible region of \eqref{problem Scell Powmin} with strict inequality power constraint $\sum\limits_{i=1}^{M} p_i < P^{\text{max}}$ is nonempty.
Since $\sum\limits_{i=1}^{M} p_i = P^{\text{max}}$ corresponds to the maximum value of the objective function \eqref{obf Scell Powmin},
satisfying the Slater's condition ensures us $\sum\limits_{i=1}^{M} p^*_i < P^{\text{max}}$. According to Condition \textbf{C-3.3}, we have\footnote{The discussions about optimal $\nu^*=0$ is only additional notes on the impact of the power constraint. We proved that  for any non-empty feasible set satisfying the Slater's condition, the power constraint will not be active.} $\nu^*=0$.
According to the above, it can be concluded that at the optimal point $\boldsymbol{p}^*$, the allocated power to each user $i$ is only to maintain its minimum spectral efficiency demand $R^{\text{min}}_i$.
According to \eqref{opt poweri Scell}, the optimal power (in Watts) for each user $i<M$ can be obtained by
\begin{equation}\label{opt totminpow scell}
p^*_i=T_i \left( 1 + \sum\limits_{j=i+1}^{M} p^*_j \tilde{h}_i \right),~\forall i=1,\dots,M,
\end{equation}
where $T_i=\frac{2^{R^{\text{min}}_i} -1}{\tilde{h}_i},~\forall i=1,\dots,M$.
For the case that $R^{\text{min}}_i \to 0$, then $T_i \to 0$. Therefore, $p^*_i \to 0$ meaning that no power will be allocated to user $i$. Similar to \eqref{opt pow t}, it can be easily shown that the optimal powers can be obtained directly by \eqref{opt totminpow scell}.
To obtain a closed-form expression for $p^*_i$, we rewrite \eqref{opt totminpow scell} as
\begin{equation*}
p^*_i=\beta_i \left(\frac{1}{\tilde{h}_i} + \sum\limits_{j=i+1}^{M} p^*_j \right),~\forall i=1,\dots,M,
\end{equation*}
where $\beta_i=2^{R^{\text{min}}_i} -1,~\forall i=1,\dots,M$. The optimal power $p^*_{i}$ can be reformulated as
\begin{align*}
p^*_{i}&=\beta_i \left(\frac{1}{\tilde{h}_i} + \sum\limits_{j=i+1}^{M} p^*_j \right)
\\
&=\beta_i \left(\frac{1}{\tilde{h}_i} + p^*_{i+1} + \sum\limits_{j=i+2}^{M} p^*_j \right)
\\
&=\beta_i \left(\frac{1}{\tilde{h}_i} + \beta_{i+1} \left(\frac{1}{h_{i+1}} + \sum\limits_{j=i+2}^{M} p^*_j \right) + \sum\limits_{j=i+2}^{M} p^*_j \right)
\\
&=\beta_i \left((1+\beta_{i+1}) \sum\limits_{j=i+2}^{M} p^*_j + \frac{1}{\tilde{h}_i} + \frac{\beta_{i+1}}{h_{i+1}} \right)
\\
&=\beta_i \left((1+\beta_{i+1}) \left(p^*_{i+2} + \sum\limits_{j=i+3}^{M} p^*_j\right) + \frac{1}{\tilde{h}_i} + \frac{\beta_{i+1}}{h_{i+1}} \right)
\\
&=\beta_i \left((1+\beta_{i+1}) \left( \beta_{i+2} \left(\frac{1}{h_{i+2}} + \sum\limits_{j=i+3}^{M} p^*_j \right) + \sum\limits_{j=i+3}^{M} p^*_j\right) + \frac{1}{\tilde{h}_i} + \frac{\beta_{i+1}}{h_{i+1}} \right)
\\
&=\beta_i \left((1+\beta_{i+1}) (1+\beta_{i+2}) \sum\limits_{j=i+3}^{M} p^*_j + \frac{1}{\tilde{h}_i} + \frac{\beta_{i+1}}{h_{i+1}} + \frac{\beta_{i+2}(1+\beta_{i+1})}{h_{i+2}} \right)
\\
&\vdots
\\
&=\beta_i \bigg( (1+\beta_{i+1}) (1+\beta_{i+2})\dots (1+\beta_{M}) + \frac{1}{\tilde{h}_i} + \frac{\beta_{i+1}}{h_{i+1}} + \frac{\beta_{i+2}(1+\beta_{i+1})}{h_{i+2}} + \dots 
\\
&~~~+ \frac{\beta_{M}(1+\beta_{M-1})\dots (1+\beta_{i+1})}{h_{M}} \bigg).
\end{align*}
According to the above, we have
\begin{equation}\label{optimal minpow Scell}
p^*_i=\beta_{i} \left(\prod\limits_{j=i+1}^{M} \left(1+\beta_{j}\right) +\frac{1}{\tilde{h}_i}+\sum\limits_{j=i+1}^{M} \frac{ \beta_{j} \prod\limits_{k=i+1}^{j-1} \left(1+\beta_{k}\right)}{\tilde{h}_j}\right),~\forall i=1,\dots,M.
\end{equation}

In multi-cell NOMA, let cell $b$ has $M$ users with $\tilde{h}_{b,1} < \tilde{h}_{b,2} < \dots < \tilde{h}_{b,M}$, where $\tilde{h}_{b,i}=\frac{h_{b,i}}{I_{b,i} + \sigma^2_{b,i}}$. According to Corollary \ref{corol optorder 3}, the achievable spectral efficiency of each user $i \in \mathcal{U}_b$ under the optimal decoding order $M \to M-1 \to \dots \to 1$ is $\tilde{R}_{b,i}(\boldsymbol{p}) = \log_2\left( 1+ \frac{p_{b,i} \tilde{h}_{b,i}}{\sum\limits_{j=i+1}^{M} p_{b,j} \tilde{h}_{b,i} + 1} \right)$. In multi-cell NOMA, for the case that ICI is fixed, the power minimization problem of cell $b$ under the optimal (CINR-based) decoding order corresponds to the power minimization problem of single-cell NOMA. According to \eqref{optimal minpow Scell}, for the case that $\tilde{h}_{b,1} < \tilde{h}_{b,2} < \dots < \tilde{h}_{b,M}$, the optimal power of each user $i \in \mathcal{U}_b$ can be obtained in closed form as
\begin{equation*}
p^*_{b,i}=\beta_{b,i} \left(\prod\limits_{j=i+1}^{M} \left(1+\beta_{b,j}\right) +\frac{1}{\tilde{h}_{b,i}}+
\sum\limits_{j=i+1}^{M} \frac{ \beta_{b,j} \prod\limits_{k=i+1}^{j-1} \left(1+\beta_{b,k}\right)}{\tilde{h}_{b,j}}\right),~\forall i=1,\dots,M.
\end{equation*}

\section{Joint Power Allocation and Rate Adoption Algorithm}\label{appendix jointpowerrate}
By taking $\ln$ from the both sides of \eqref{Constraint decoding}, we have
$\ln\left(2^{r_{b,i}}-1\right) + \ln\bigg( \sum\limits_{j \in \Phi_{b,i}} p_{b,j} h_{b,k} + I_{b,k} + \sigma^2_{b,k} \bigg) \leq \ln\left( p_{b,i} h_{b,k} \right),~\forall b \in \mathcal{B},~i,k \in \mathcal{U}_b,~k \in \{i\} \cup \Phi_{b,i}$.
Now, let $p_{b,i}=\text{e}^{\tilde{p}_{b,i}}$, and subsequently $I_{b,i} (\boldsymbol{\tilde{p}}_{-b})=\sum\limits_{\hfill j \in \mathcal{B} \hfill\atop  j \neq b} \left(\sum\limits_{l\in \mathcal{U}_j} \text{e}^{\tilde{p}_{j,l}}\right) h_{j,b,i}$. Accordingly, problem \eqref{centg problem} can be rewritten as
\begin{subequations}\label{cent subopt problem 1}
	\begin{align}\label{obf cent subopt problem 1}
		\max_{ \boldsymbol{\tilde{p}},~\boldsymbol{r} \geq 0 }\hspace{.0 cm}	
		~~ & \sum\limits_{b \in \mathcal{B}} \sum\limits_{i \in \mathcal{U}_b} r_{b,i}
		\\
		\text{s.t.}~~~& \eqref{Constraint QoS RA}, \nonumber
		\\
		\label{max power equiv}
		& \sum\limits_{i \in \mathcal{U}_b} \text{e}^{\tilde{p}_{b,i}} \leq P^{\text{max}}_b,~\forall b \in \mathcal{B},
		\\
		\label{Constraint decoding 1}
		& \ln\left(2^{r_{b,i}}-1\right) + \ln\left( \sum\limits_{j \in \Phi_{b,i}} \text{e}^{\tilde{p}_{b,j}} h_{b,k} + \sum\limits_{\hfill j \in \mathcal{B} \hfill\atop  j \neq b} \left(\sum\limits_{l\in \mathcal{U}_j} \text{e}^{\tilde{p}_{j,l}}\right) h_{j,b,i} + \sigma^2_{b,k} \right) \leq \tilde{p}_{b,i} \ln\left(h_{b,k}\right), \nonumber
		\\&~~~~~~~~~~~~~~~~~~~~~~~~~~~~~~~~~~~~~~~~~~~~~~~~~~~
		\forall b \in \mathcal{B},~i,k \in \mathcal{U}_b,~k \in \{i\} \cup \Phi_{b,i}.
	\end{align}
\end{subequations}
The objective function \eqref{cent subopt problem 1} is affine, so is concave on $\boldsymbol{r}$. 
Constraint \eqref{Constraint QoS RA} is affine, so is convex. Constraint \eqref{max power equiv} is also convex since log-sum-exp is convex \cite{Boydconvex}. However, \eqref{Constraint decoding 1} is nonconvex. In the left hand side of \eqref{Constraint decoding 1}, it can be easily shown that the first term $\ln\left(2^{r_{b,i}}-1\right)$ is strictly concave on $r_{b,i}$ which makes \eqref{cent subopt problem 1} nonconvex and strongly NP-hard. 
Now, we apply the iterative sequential programming method. At each iteration $t$, we approximate the term $g\left(r^{(t)}_{b,i}\right)=\ln\left(2^{r^{(t)}_{b,i}}-1\right)$ to its first-order Taylor series around $r^{(t-1)}_{b,i}$ obtained from prior iteration $(t-1)$ as follows:
\begin{equation}\label{approx rate ln}
	\hat{g}\left(r^{(t)}_{b,i}\right) = g\left(r^{(t-1)}_{b,i}\right) + g'\left(r^{(t-1)}_{b,i}\right) \left(r^{(t)}_{b,i} - r^{(t-1)}_{b,i}\right),
\end{equation}
where $g'\left(r\right) = \frac{2^{r}}{2^{r} - 1}$.
By substituting $g\left(r^{(t)}_{b,i}\right)$ with its affine approximated form $\hat{g}\left(r^{(t)}_{b,i}\right)$ in \eqref{approx rate ln}, problem \eqref{cent subopt problem 1} at iteration $t$ will be approximated to the following convex form as
\begin{subequations}\label{cent subopt problem 2}
	\begin{align}\label{obf cent subopt problem 2}
		\max_{ \boldsymbol{\tilde{p}}^{(t)},~\boldsymbol{r}^{(t)} \geq 0 }\hspace{.0 cm}	
		~~ & \sum\limits_{b \in \mathcal{B}} \sum\limits_{i \in \mathcal{U}_b} r^{(t)}_{b,i}
		\\
		\text{s.t.}~& \eqref{Constraint QoS RA},~\eqref{max power equiv} \nonumber
		\\
		\label{Constraint decoding 2}
		& \hat{g}\left(r^{(t)}_{b,i}\right) + \ln\left( \sum\limits_{j \in \Phi_{b,i}} \text{e}^{\tilde{p}^{(t)}_{b,j}} h_{b,k} + \sum\limits_{\hfill j \in \mathcal{B} \hfill\atop  j \neq b} \left(\sum\limits_{l\in \mathcal{U}_j} \text{e}^{\tilde{p}^{(t)}_{j,l}}\right) h_{j,b,i} + \sigma^2_{b,k} \right) \leq \tilde{p}^{(t)}_{b,i} \ln\left(h_{b,k}\right),~\forall b \in \mathcal{B}, \nonumber
		\\&~~~~~~~~~~~~~~~~~~~~~~~~~~~~~~~~~~~~~~~~~~~~~~~~~~~~~~~~~~
		i,k \in \mathcal{U}_b,~k \in \{i\} \cup \Phi_{b,i}.
	\end{align}
\end{subequations}
It can be shown that \eqref{cent subopt problem 2} satisfies the KKT conditions \cite{7862919}, so it can be solved by using the Lagrange dual method, or IPMs \cite{Boydconvex}. In the sequential programming, we first initialize $\boldsymbol{r}^{(0)}$. At each iteration $t$, we solve \eqref{cent subopt problem 2} and find $\left({\boldsymbol{r^*}^{(t)}}, {\boldsymbol{\tilde{p}^*}^{(t)}}\right)$ according to the updated $\hat{g}\left(r^{(t)}_{b,i}\right)$ based on ${\boldsymbol{r^*}^{(t-1)}}$. We continue the iterations until the convergence is achieved.

The solution of \eqref{cent subopt problem 2} remains in the feasible region of the main problem \eqref{cent subopt problem 1}. This is due to the fact that at each iteration $t$, we have $\hat{g}\left(r^{(t)}_{b,i}\right) \geq g\left(r^{(t)}_{b,i}\right)$. Let $\left({\boldsymbol{\hat{r}}^{(t)}},{\boldsymbol{\hat{\tilde{p}}}^{(t)}}\right)$ be the feasible solution of \eqref{cent subopt problem 2}. It implies that \eqref{Constraint decoding 2} is satisfied. Thus, we have
\begin{multline*}
	\hat{\tilde{p}}^{(t)}_{b,i} \ln\left(h_{b,k}\right) - 
	\ln\left( \sum\limits_{j \in \Phi_{b,i}} \text{e}^{\hat{\tilde{p}}^{(t)}_{b,j}} h_{b,k} + \sum\limits_{\hfill j \in \mathcal{B} \hfill\atop  j \neq b} \left(\sum\limits_{l\in \mathcal{U}_j} \text{e}^{\hat{\tilde{p}}^{(t)}_{j,l}}\right) h_{j,b,i} + \sigma^2_{b,k} \right)
	\geq \hat{g}\left(\hat{r}^{(t)}_{b,i}\right),~\forall b \in \mathcal{B},
	\\
	i,k \in \mathcal{U}_b,~k \in \{i\} \cup \Phi_{b,i}.
\end{multline*}
Since $\hat{g}\left(r^{(t)}_{b,i}\right) \geq g\left(r^{(t)}_{b,i}\right)$, we can guarantee that \eqref{Constraint decoding 1} is satisfied, 
meaning that \eqref{cent subopt problem 2} remains in the feasible region of \eqref{cent subopt problem 1}. It can be easily shown that the sequential programming generates a sequence of improved feasible solutions, such that it converges to a stationary point which is a local maxima of \eqref{cent subopt problem 1} \cite{7862919}.
The performance and convergence of this algorithm is numerically evaluated in Subsection \ref{subsection JRPA converg}.

\section{Proof of Theorem \ref{Theorem SIC M-cell positiveterm}.}\label{appendix theorem positiveterm}
For each user pair $i,k \in \mathcal{U}_b$, if \eqref{SIC nec cond} holds at any power level $\boldsymbol{p}_{-b}$, the decoding order $k \to i$ is optimal. Note that \eqref{SIC nec cond} for the user pairs in cell $b$ is completely independent from $\boldsymbol{p}_{b}$ (see Corollary \ref{corollary optorder}).
Let us rewrite \eqref{SIC nec cond} for the user pair $i,k \in \mathcal{U}_b$ with $\frac{h_{b,k}}{\sigma_{b,k}} > \frac{h_{b,i}}{\sigma_{b,i}}$ as
\begin{equation}\label{SIC Mcell ref}
	\underbrace{\frac{h_{b,k}}{\sigma_{b,k}} - \frac{h_{b,i}}{\sigma_{b,i}}}_\text{Non-negative} \geq \sum\limits_{\hfill j \in \mathcal{B} \hfill\atop  j \neq b} \frac{ \left(\sum\limits_{i \in \mathcal{U}_j} p_{j,i}\right) }{\sigma_{b,i} \sigma_{b,k}}  \underbrace{\left( h_{j,b,k} h_{b,i} - h_{j,b,i} h_{b,k} \right)}_{H_{j,b,i,k}}.
\end{equation}
Assume that the left-hand side (LHS) is non-negative (which corresponds to applying the CNR-based decoding order). If each term $H_{j,b,i,k}$ in the right-hand side (RHS) is non-positive, \eqref{SIC Mcell ref} holds for any $\boldsymbol{p}$ \cite{8353846}. The condition $H_{j,b,i,k} \leq 0$ is equivalent to $\frac{h_{b,k}}{h_{b,i}} \geq \frac{h_{j,b,k}}{h_{j,b,i}}$.
The latter sufficient condition implies that for $\frac{h_{b,k}}{\sigma_{b,k}} \geq \frac{h_{b,i}}{\sigma_{b,i}}$, if $H_{j,b,i,k} \leq 0,~\forall j \in \mathcal{B}\setminus\{b\}$, the decoding order $k \to i$ is optimal. This condition is first proposed in Lemma 1 in \cite{8353846} to guarantee that \eqref{SIC Mcell ref} holds independent from $\boldsymbol{p}$. In this paper, we state that the latter independency guarantees that the decoding order $k \to i$ is optimal independent from $\boldsymbol{p}$.

Now, we consider a more general case, where $H_{j,b,i,k}>0$ for at least one neighboring BS $j$. In this case, Lemma 1 in \cite{8353846} cannot guarantee any optimality for the decoding order $k \to i$. However, we show that there exists a sufficient condition to guarantee the optimality of the decoding order $k \to i$. In this line, we find an upper-bound constant for the term in the RHS of \eqref{SIC Mcell ref}. If the LHS of \eqref{SIC Mcell ref} is greater than that of the upper-bound term for the RHS of \eqref{SIC Mcell ref}, it is guaranteed that \eqref{SIC Mcell ref} holds independent from $\boldsymbol{p}$, so the decoding order $k \to i$ is optimal. We first define $\mathcal{Q}_{b,i,k}=\left\{j \in \mathcal{B}\setminus\{b\} \middle\vert H_{j,b,i,k}>0\right\}$ for the user pair $i,k \in \mathcal{U}_b$ with $\frac{h_{b,k}}{\sigma_{b,k}} > \frac{h_{b,i}}{\sigma_{b,i}}$. Obviously, $H_{j,b,i,k} \leq 0$ for each neighboring BS $j' \in \mathcal{B}\setminus\{\mathcal{Q}_{b,i,k}\cup \{b\}\}$. In this way, the following inequality is always guaranteed
\begin{equation}\label{SIC positiveterm ineq}
	\sum\limits_{j \in \mathcal{Q}_{b,i,k}} \frac{P^\text{max}_j} {\sigma_{b,i} \sigma_{b,k}} H_{j,b,i,k} \geq
	\sum\limits_{\hfill j \in \mathcal{B} \hfill\atop  j \neq b} \frac{\left(\sum\limits_{i \in \mathcal{U}_j} p_{j,i}\right)} {\sigma_{b,i} \sigma_{b,k}} H_{j,b,i,k}.
\end{equation}
In other words, the upper-bound of the RHS of \eqref{SIC Mcell ref} can be achieved if we assume that each BS $j \in \mathcal{Q}_{b,i,k}$ (or equivalently each BS $j$ with $H_{j,b,i,k} > 0$) operates at its maximum power budget $P^\text{max}_j$, and each BS $j \notin \mathcal{Q}_{b,i,k}$ (or equivalently each BS $j$ with $H_{j,b,i,k} \leq 0$) is deactivated. According to \eqref{SIC Mcell ref} and \eqref{SIC positiveterm ineq}, the following inequality always holds
\begin{equation}\label{SIC positiveterm ineq 2}
	\frac{h_{b,k}}{\sigma_{b,k}} - \frac{h_{b,i}}{\sigma_{b,i}} \geq \sum\limits_{j \in \mathcal{Q}_{b,i,k}} \frac{P^\text{max}_j} {\sigma_{b,i} \sigma_{b,k}} H_{j,b,i,k} \geq
	\sum\limits_{\hfill j \in \mathcal{B} \hfill\atop  j \neq b} \frac{\left(\sum\limits_{i \in \mathcal{U}_j} p_{j,i}\right)} {\sigma_{b,i} \sigma_{b,k}} H_{j,b,i,k}.
\end{equation}
Thus, if $\frac{h_{b,k}}{\sigma_{b,k}} - \frac{h_{b,i}}{\sigma_{b,i}} \geq \sum\limits_{j \in \mathcal{Q}_{b,i,k}} \frac{P^\text{max}_j} {\sigma_{b,i} \sigma_{b,k}} H_{j,b,i,k}$, we guarantee that \eqref{SIC Mcell ref} holds independent from $\boldsymbol{p}$, so the decoding order $k \to i$ is optimal, and the proof is completed.

\section{Proof of Fact \ref{Propos negative SIC}.}\label{appendix negative SIC}
For convenience, let all the channel gains be normalized by noise, and the CNR-based decoding order is applied. According to \eqref{SIC Mcell ref}, for each user pair $i,k \in \mathcal{U}_b,~k \in \Phi_{b,i}$ or equivalently $h_{b,k} \geq h_{b,i}$, the constraint \eqref{SIC nec ind pb} can be rewritten as:
\begin{equation}\label{SIC normalized}
	h_{b,k} - h_{b,i} \geq \sum\limits_{\hfill j \in \mathcal{B} \hfill\atop  j \neq b} \left(\sum\limits_{i \in \mathcal{U}_j} p_{j,i}\right) H_{j,b,i,k},
\end{equation}
where $H_{j,b,i,k}=\left( h_{j,b,k} h_{b,i} - h_{j,b,i} h_{b,k} \right)$.
Constraint \eqref{SIC normalized} can be equivalently transformed to $(B-1)$ maximum power constraints including all the BSs in $\mathcal{B} \setminus \{b\}$. 
The power consumption of each BS $j \in \mathcal{B} \setminus \{b\}$ is upper-bounded by
\begin{equation*}\label{SIC max powerj}
	\sum\limits_{i \in \mathcal{U}_b} p_{j,i} \leq 
	\underbrace{\frac{(h_{b,k} - h_{b,i}) - \sum\limits_{\hfill l \in \mathcal{B}\setminus\{j,b\}} \left(\sum\limits_{i \in \mathcal{U}_l} p_{l,i}\right) H_{l,b,i,k}}{H_{j,b,i,k}}}_{\Psi_{j,b,i,k} (\boldsymbol{p}_{-(j,b)})}.
\end{equation*}
In general, the maximum power constraint \eqref{Constraint max power} and SIC constraint \eqref{SIC nec ind pb}  can be equivalently combined as
\begin{equation}\label{SIC combine maxpow}
	\sum\limits_{i \in \mathcal{U}_b} p_{b,i} \leq \min \left\{P^\text{max}_b, \min\limits_{\hfill j \in \mathcal{B}\setminus\{b\},\hfill \atop i,k \in \mathcal{U}_j, k \in \Phi_{j,i}} \Psi_{b,j,i,k} (\boldsymbol{p}_{-(b,j)}) \right\},~\forall b \in \mathcal{B}.
\end{equation}
The negative side impact of the SIC necessary constraint \eqref{SIC nec ind pb} can observed in \eqref{SIC combine maxpow}. For the case that $\min\limits_{\hfill j \in \mathcal{B}\setminus\{b\},\hfill \atop i,k \in \mathcal{U}_j, k \in \Phi_{j,i}} \Psi_{b,j,i,k} (\boldsymbol{p}_{-(b,j)}) < P^\text{max}_b$, \eqref{SIC nec ind pb} imposes additional limitations on the total power consumption of BS $b$ which restricts the feasible region of \eqref{Constraint max power}.

\section{Proof of Fact \ref{Propos singlecell maxpow}.}\label{appendix Propos singlecell maxpow}
In the sum-rate maximization problem of a $M$-user single-cell NOMA system with normalized channel gains $h_1 < h_2 < \dots < h_M$, and thus optimal (CNR-based) decoding order $M \to M-1 \to \dots \to 1$, the achievable spectral efficiency of user $i$ can be obtained by $R_{i}(\boldsymbol{p})= \log_2\left( 1+ \frac{p_{i} h_{i}} {\sum\limits_{j=i+1}^{M} p_{j} h_{i} + 1} \right)$. The sum-rate function $\sum\limits_{i=1}^{M} \log_2\left( 1+ \frac{p_{i} h_{i}} {\sum\limits_{j=i+1}^{M} p_{j} h_{i} + 1} \right)$ is monotonically increasing in $p_1$, i.e., the allocated power to user $1$ with the lowest decoding order. In fact, the signal of user $1$ with lowest decoding order is decoded and canceled by all the other users $i=2,\dots,M$. Due to the power constraint $\sum\limits_{i=1}^{M} p_i \leq P^{\text{max}}$, and the fact that $R_{1}(\boldsymbol{p})$ is monotonically increasing in $p_1$, at the optimal point $\boldsymbol{p}^*$, we have $p^*_1=P^{\text{max}} - \sum\limits_{i=2}^{M} p^*_i$. In other words, we ensure that at the optimal point, we have $\sum\limits_{i=1}^{M} p^*_i = P^{\text{max}}$, and the proof is completed.

\bibliographystyle{IEEEtran}
\bibliography{IEEEabrv,Bibliography}

\end{document}